\pgfplotsset{compat=1.14}
\DeclareDocumentCommand\abs{s m} {\IfBooleanTF{#1}{\left|#2\right|}{\left|#2\right|}}
\DeclareDocumentCommand\lp{m m o} {L^{#1}\left(#2 \IfNoValueF{#3}{,#3}\right)}
\DeclareDocumentCommand\norm{s m o} {\IfBooleanTF{#1}{\|#2\|}{\left\|#2\right\|}\IfNoValueF{#3}{_{#3}}}
\DeclareDocumentCommand\seminorm{m o o} {\left|#1\right|\IfNoValueF{#2}{_{#2 \IfNoValueF{#3}{,#3}}}}
\DeclareDocumentCommand\ip{s m m o} {\IfBooleanTF{#1}{\langle #2,#3 \rangle}{\left\langle #2,#3 \right\rangle}\IfNoValueF{#4}{_{#4}}}
\DeclareDocumentCommand\bigo{s o m} {\mathcal O\IfNoValueF{#2}{_{#2}}\IfBooleanTF{#1}{(#3)}{\left(#3\right)}}
\DeclareMathOperator{\sign}{sign}
\DeclareMathOperator{\id}{id}
\newcommand{\e}{\mathrm{e}}
\newcommand{\commut}[2]{[#1, #2]}
\newcommand{\laplacian}{\Delta}
\newcommand{\dummy}{\,\cdot\,}
\newcommand{\expect}[0]{\mathbf{E}}
\newcommand{\var}[0]{\mathbf{V}}
\newcommand{\nat}{\mathbf N}
\newcommand{\poly}{\mathbf P}
\newcommand{\real}{\mathbf R}
\newcommand{\integer}{\mathbf Z}
\newcommand{\torus}{\mathbf T}
\newcommand{\grad}{\nabla}
\newcommand{\imag}{\mathrm{i}}
\newcommand{\vect}[1]{\boldsymbol{\mathbf #1}}
\newcommand{\mat}[1]{\vect #1}
\renewcommand{\d}{\mathrm d}
\renewcommand{\t}{\mathsf T}
\DeclareDocumentCommand \derivative{s m o m}{%
    \def\@der{\IfBooleanTF{#1}{\mathrm{d}}{\partial}}
    \def\@default{%
        \mathchoice{%
                \frac{%
                    \@der\ifnum\pdfstrcmp{#2}{1}=0\else^{#2}\fi {\IfNoValueTF{#3}{}{#3}}
                }{%
                    \@for\@token:={#4}\do{\@der \@token}
                }
            } {%
                \@for\@token:={#4}\do{\@der_\@token} \IfNoValueTF{#3}{}{#3}
            } {} {}
    }
    \IfBooleanTF{#1}{\IfNoValueTF{#3}{\@default}{%
                #3%
                \ifnum\pdfstrcmp{#2}{1}=0'\else%
                \ifnum\pdfstrcmp{#2}{2}=0''\else%
                \ifnum\pdfstrcmp{#2}{3}=0^{(3)}\else%
                \ifnum\pdfstrcmp{#2}{4}=0^{(4)}\else%
                \ifnum\pdfstrcmp{#2}{5}=0^{(5)}\else%
                ^{(#2)}\fi\fi\fi\fi\fi
            }
        }{\@default}
}
\definecolor{darkred}{rgb}{.5,0,0}
\definecolor{darkgreen}{rgb}{0,.5,0}
\definecolor{darkblue}{rgb}{0,0,.5}
\theoremstyle{plain}
\newtheorem{theorem}{Theorem}[section]
\newtheorem{lemma}[theorem]{Lemma}
\newtheorem{corollary}[theorem]{Corollary}
\newtheorem{proposition}[theorem]{Proposition}
\newtheorem{remark}{Remark}[section]
\newtheorem{example}{Example}[section]
\numberwithin{equation}{section}
\newcounter{urbainCounter}
\crefname{equation}{}{}
\crefname{paragraph}{\S\!}{\S}
\crefname{figure}{Figure}{Figures}
\newcommand{\email}[1]{\href{#1}{#1}}
\newcommand{\orcid}[1]{\href{https://orcid.org/#1}{\includegraphics[width=.4cm]{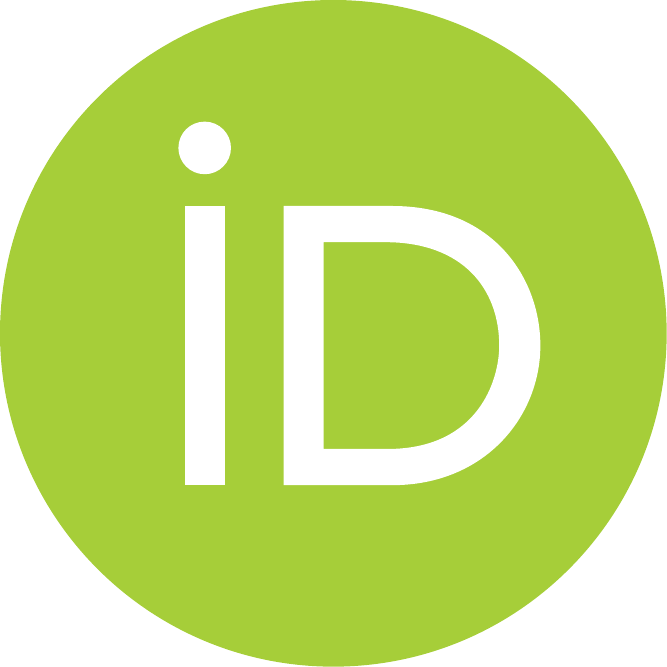}}}
\renewcommand{\leq}{\leqslant}
\renewcommand{\geq}{\geqslant}
\date{\today}
\title{Mobility estimation for Langevin dynamics using control variates}
\author{%
  G.A. Pavliotis\thanks{Department of Mathematics, Imperial College London (\email{g.pavliotis@imperial.ac.uk})}%
  \hspace{2mm}\orcid{0000-0002-3468-9227}%
  \and G. Stoltz\thanks{CERMICS, \'Ecole des Ponts, France \& MATHERIALS, Inria Paris (\email{gabriel.stoltz@enpc.fr})}
  \hspace{2mm}\orcid{0000-0002-2797-5938}%
  \and U. Vaes\thanks{MATHERIALS, Inria Paris (\email{urbain.vaes@inria.fr})}%
  \hspace{2mm}\orcid{0000-0002-7629-7184}
}
\begin{document}
\maketitle

\begin{abstract}
    The scaling of the mobility of two-dimensional Langevin dynamics in a periodic potential as the friction vanishes is not well understood for non-separable potentials.
    Theoretical results are lacking,
    and numerical calculation of the mobility in the underdamped regime is challenging because
    the computational cost of standard Monte Carlo methods is inversely proportional to the friction coefficient,
    while deterministic methods are ill-conditioned.
    In this work, we propose a new variance-reduction method based on control variates for efficiently estimating the mobility of Langevin-type dynamics.
    We provide bounds on the bias and variance of the proposed estimator,
    and illustrate its efficacy through numerical experiments,
    first in simple one-dimensional settings
    and then for two-dimensional Langevin dynamics.
    Our results corroborate previous numerical evidence that
    the mobility scales as~$\gamma^{-\sigma}$, with~$0 < \sigma \leq 1$,
    in the low friction regime for a simple non-separable potential.
\end{abstract}


\section{Introduction}%
\label{sec:introduction}
Langevin dynamics model the evolution of a system of particles interacting with an environment at fixed temperature.
They are widely used for the calculation of macroscopic properties of matter in molecular simulation~\cite{MR2723222,allen2017computer}.
Assuming a diagonal mass matrix,
the standard Langevin dynamics, sometimes called underdamped Langevin dynamics,
reads after appropriate non-dimensionalization~\cite[Section 2.2.4]{MR2681239}
\begin{subequations}
\label{eq:langevin}
\begin{align}
    \label{eq:langevin_q}
    \d \vect q_t &= \vect p_t \, \d t, \\
    \label{eq:langevin_p}
    \d \vect p_t &= - \grad V(\vect q_t) \, \d t - \gamma \, \vect p_t \, \d t + \sqrt{2 \gamma \beta^{-1}} \, \d \vect w_t.
\end{align}
\end{subequations}
Here, $\vect q_t \in \torus^d$ and $\vect p_t \in \real^d$ are the position and velocity variables,
with~$\torus^d = \real^d / 2\pi \integer^d$ the $d$-dimensional torus with period $2 \pi$.
Throughout this work, we emphasize vectorial quantities in bold.
The parameter $\gamma > 0$ is a dimensionless parameter called friction,
$\beta > 0$ is inversely proportional to the temperature,
$V$ is a smooth periodic potential
and~$\vect w_t$ is a standard $d$-dimensional Brownian motion.
The dynamics~\eqref{eq:langevin} is ergodic with respect to the Boltzmann--Gibbs probability measure
\begin{equation}
    \label{eq:invariant_measure}
    \mu(\d \vect q \, \d \vect p) = \frac{1}{Z} \exp \bigl( - \beta H(\vect q, \vect p)  \bigr) \, \d \vect q \, \d \vect p,
    \qquad H(\vect q, \vect p) = V(\vect q) + \frac{\abs{\vect p}^2}{2},
\end{equation}
with $Z< \infty$ the normalization constant.
It will be convenient to also introduce the marginal distributions
\begin{equation}
    \label{eq:definition_prob_measures}
    \nu(\d \vect q) = \frac{\e^{- \beta V(\vect q)} \, \d \vect q}{\int_{\torus^d}\e^{-\beta V}},
    \qquad \kappa(\d \vect p) = \left( \frac{\beta}{2 \pi} \right)^{d/2}\exp \biggl( - \beta \frac{\abs*{\vect p}^2}{2} \biggr) \d \vect p.
\end{equation}

\paragraph{Definition of the mobility.}
The mobility in the direction $\vect e \in \real^d$ (with $|\vect e| = 1$)
for the dynamics~\eqref{eq:langevin} provides information on the behavior of the system
in response to an external forcing $\eta \vect e$ with magnitude~$\eta$ on the velocity process.
By analogy with macroscopic laws,
it is defined as the proportionality constant,
in the limit of a small forcing,
between the induced average velocity and the strength  of the forcing.
More precisely,
the mobility in the direction $\vect e$ is defined mathematically as
\begin{equation}
    \label{eq:relation_mobility_diffusion}
    M^{\gamma}_{\vect e} =  \lim_{\eta \to 0} \frac{1}{\eta}\expect_{\mu_{\eta}} [\vect e^\t \vect p] ,
\end{equation}
where $\mu_{\eta}$ is the invariant probability distribution of~\eqref{eq:langevin} when
an additional drift term $\eta \vect e$ is present on the right-hand side of~\eqref{eq:langevin_p}.
Let us emphasize that this additional drift term is not the gradient of a smooth periodic potential.
Nonetheless, it is possible to show that the probability measure $\mu_{\eta}$ exists and is uniquely defined,
and that the limit in~\eqref{eq:relation_mobility_diffusion} is well-defined;
see~\cite[Section 5]{MR3509213}.
Except when $\eta = 0$, in which case we recover~\eqref{eq:invariant_measure},
the measure~$\mu_{\eta}$ is not known explicitly,
and so $M_{\vect e}^{\gamma}$ cannot be obtained simply by numerical integration of the observable $\vect e^\t \vect p$ with respect to this measure.
It is well known,
based on the seminal works of Sutherland~\cite{sutherland1905lxxv}, Einstein~\cite{einstein1905molekularkinetischen} and Smoluchowski~\cite{von1906kinetischen} in the early 1900s,
that the mobility coincides
(up to the factor $\beta$)
with the so-called effective diffusion coefficient associated with the dynamics,
which opens the door to the simple Monte Carlo approach based on~\eqref{eq:naive_estimator} below for its estimation.
This link between mobility and diffusion,
known Eisntein's relation,
is made precise in the next paragraph,
where we also define the effective diffusion coefficient precisely.
For a rigorous justification of Einstein's relation in the specific setting of the Langevin dynamics~\eqref{eq:langevin},
we refer to~\cite[Section~5.2]{MR3509213}; see also \cite[Section~3]{LMS16} and~\cite[Chapter~9]{pavliotis2011applied}.


\paragraph{Effective diffusion.}
The concept of effective diffusion,
for the Langevin dynamics~\eqref{eq:langevin},
refers to the following functional central limit theorem:
the diffusively rescaled position process $(\varepsilon \vect q_{t/\varepsilon^2})_{t\geq0}$ converges as $\varepsilon \to 0$,
weakly in the space of continuous functions over compact time intervals,
to a Brownian motion in $\real^d$ with a matrix prefactor~$\sqrt{2 \mat D^{\gamma}}$.
The matrix $\mat D^{\gamma}$ is known as the effective diffusion matrix associated with the dynamics.
This result may be obtained by using the homogenization technique pioneered by Bhattacharya in~\cite{MR663900},
which hinges on the functional central limit theorem for martingales~\cite{MR668684};
see also~\cite[Chapter~3]{MR503330} for early results concerning the asymptotic analysis of SDEs,
the book~\cite[Chapter 18]{pavliotis2008multiscale} for a pedagogical presentation of homogenization for stochastic differential equations,
and~\cite[Theorem 2.5]{MR2793823} for a detailed proof of the homogenization theorem for the Markovian approximation of the generalized Langevin equation.
The precise statement of Einstein's relation is then that
\[
    D^{\gamma}_{\vect e} := \vect e^\t \mat D^{\gamma} \vect e = \beta M^{\gamma}_{\vect e}.
\]

\paragraph{Link with the Poisson equation.}
The effective diffusion coefficient
can be expressed in terms of the solution to a partial differential equation (PDE) involving the generator of the Markov semigroup associated with~\eqref{eq:langevin},
which is given by
\begin{equation}
    \label{eq:decomposition_generator}
    \mathcal L
    = \vect p \cdot \grad_{\vect q} - \grad V \cdot \grad_{\vect p} + \gamma \left( - \vect p \cdot \grad_{\vect p} + \beta^{-1} \laplacian_{\vect p} \right)
    =: \mathcal L_{\rm Ham} + \gamma \mathcal L_{\rm FD}.
\end{equation}
Specifically, it is possible to show~\cite{MR663900} that
\begin{equation}
    \label{eq:effective_diffusion_poisson}
    D^{\gamma}_{\vect e} = \ip{\phi_{\vect e}}{\vect e^\t p},
\end{equation}
where $\phi_{\vect e}$ denotes the unique solution to the Poisson equation
\begin{equation}
    \label{eq:poisson_equation}
    - \mathcal L \phi_{\vect e} = \vect e^\t \vect p,
    \qquad \phi_{\vect e} \in L^2_0(\mu) := \bigl\{ u \in L^2(\mu): \ip{u}{1} = 0 \bigr\}.
\end{equation}
Throughout this work,
$\ip{\dummy}{\dummy}$ and $\norm{\dummy}$ denote respectively the inner product and norm of $\lp{2}{\mu}$
unless otherwise specified.
Several techniques can be employed in order to show that~\eqref{eq:poisson_equation} admits a unique solution in $L^2_0(\mu)$
for any right-hand side in $L^2_0(\mu)$:
one may use the approach employed in~\cite[Proposition 5.1]{MR2793823},
which is itself inspired from~\cite[Lemma 2.1]{MR812349},
or obtain well-posedness as a corollary of the exponential decay in $L^2(\mu)$ of the Markov semigroup associated with the dynamics,
as in~\cite[Corollary 1]{roussel2018spectral}.
See also~\cite{Herau06,MR3106879,MR3522857,BFLS20} for other references on the exponential decay for semigroups with a hypocoercive generator.

\paragraph{Numerical estimation of the mobility.}
In spatial dimension 1,
it is possible to obtain an accurate estimation of the effective diffusion coefficient by solving the Poisson equation~\eqref{eq:poisson_equation} using a deterministic method~\cite{roussel2018spectral},
but this approach is generally too computationally expensive in higher dimensions.
In spatial dimension~2, for example,
a spectral discretization of~\eqref{eq:poisson_equation} based on a tensorized basis of functions,
with say~$N$ degrees of freedom per dimension of the state space $\torus^2 \times \real^2$, leads to a linear system with~$N^4$ unknowns,
which is computationally intractable for large values of $N$.
In this setting, probabilistic methods offer an attractive alternative.
It follows from the definition of~$D^{\gamma}_{\vect e}$ that,
for any $t > 0$,%
\begin{equation}
    \label{eq:einsteins_formula}
    D^{\gamma}_{\vect e}
    =\lim_{\varepsilon \to 0} \frac{\expect\Bigl[\bigl\lvert \vect e^\t \left(\varepsilon \vect q_{t/\varepsilon^2} - \varepsilon \vect q_0\right) \bigr\rvert^2\Bigr]}{2 t}
    =\lim_{T \to \infty} \frac{\expect \Bigl[\bigl\lvert \vect e^\t \left(\vect q_T - \vect q_0\right) \bigr\rvert^2\Bigr]}{2T},
\end{equation}
suggesting that this coefficient may be calculated by
estimating the mean square displacement at a sufficiently large time of the equilibrium dynamics~\eqref{eq:langevin}
using Monte Carlo simulation,
which is one of the approaches taken in~\cite{MR2427108}.
Specifically, given a number $J$ of realizations of the dynamics~\eqref{eq:langevin} over a sufficiently long time interval $[0, T]$,
the effective diffusion coefficient in direction $\vect e$ may be estimated as
\begin{equation}
    \label{eq:naive_estimator}
    \widehat D^{\gamma}_{\vect e}
    = \frac{1}{J} \sum_{j=1}^{J} \frac{\left\lvert \vect e^\t \left(\vect q^{(j)}_T - \vect q^{(j)}_0\right) \right\rvert^2}{2T},
\end{equation}
where $(\vect q_t^{(j)}, \vect p_t^{(j)})_{t \geq 0}$, for $1 \leq j \leq J$,
are independent realizations of the solution to the Langevin equation~\eqref{eq:langevin} starting from i.i.d.\ initial conditions $\bigl(\vect q^{(j)}_0, \vect p^{(j)}_0\bigr)$.
The variance reduction approach we propose in the next section aims at reducing the mean square error of estimators of this type.

Another possible approach for estimating the mobility is to rely on a numerical approximation of the Green--Kubo formula;
see \cite[Section 5.1.3]{MR3509213} for general background information on this subject.
The bias associated with this approach is studied carefully in~\cite{LMS16},
and bounds on the variance are obtained in~\cite{PSW21},
showing that the variance increases linearly with the integration time over which correlations are computed.
In practice, choosing the integration time is a delicate task:
it needs to be sufficiently large to ensure that the systematic bias is small,
but not too large, or else the variance of the resulting estimator is large.
The technical challenges impeding adoption of the Green--Kubo formalism,
as well as some solutions to overcome these in the context of heat transport,
are discussed in~\cite{ercole2017accurate,Baroni2020}.

\paragraph{Overdamped and underdamped limits.}
The behavior of the Langevin dynamics~\eqref{eq:langevin} depends on the value of the friction parameter~$\gamma$.
The overdamped limit $\gamma \to \infty$ is well understood;
in this limit, the rescaled position process $(\vect q_{\gamma t})_{t \geq 0}$
converges, weakly in the space of continuous functions~\cite{MR4054345}
and almost surely uniformly over compact subintervals of $[0, \infty)$~\cite[Theorem 10.1]{MR0214150},
to the solution of the overdamped Langevin equation
\begin{equation}
    \label{eq:overdamped_langevin}
    \d \vect q_t = - \grad V(\vect q_t) \, \d t + \sqrt{2 \beta^{-1}} \, \d \vect b_t,
\end{equation}
where $\vect b_t$ is another standard $d$-dimensional Brownian motion.
It is also possible to prove that~$\gamma \mat D^{\gamma} = \mat D^{\rm ovd} + \bigo{\gamma^{-2}}$ as $\gamma \to \infty$,
where~$\mat D^{\rm ovd}$ is the effective diffusion coefficient of overdamped Langevin dynamics,
and to derive explicit expressions for the correction terms by asymptotic analysis~\cite{MR2394704}.
The diffusion coefficient in the overdamped limit is given by $\vect e^\t \mat D^{\rm ovd} \vect e = \norm{\vect e + \grad \chi_{\vect e}}[L^2(\nu)]^2$,
where $\chi_{\vect e}$ is the unique solution in~$L^2_0(\nu)$ to the Poisson equation
\[
    - \mathcal L_{\rm ovd} \chi_{\vect e} = - \vect e^\t \grad V(q), \qquad \mathcal L_{\rm ovd} = - \grad V \cdot \grad + \beta^{-1} \laplacian,
\]
with $\mathcal L_{\rm ovd}$ is the generator of the Markov semigroup associated with~\eqref{eq:overdamped_langevin}.
The reasoning in~\cite[Proposition 4.1]{MR2394704},
when appropriately generalized to the multi-dimensional setting,
shows that~$D^{\rm ovd}_{\vect e}$ is in fact an upper bound for $\gamma D^{\gamma}_{\vect e}$ for all $\gamma > 0$.

The underdamped limit is much more difficult to analyze,
especially in the multi-dimensional setting.
In spatial dimension one, it was shown in~\cite{MR2394704} that $\gamma D^{\gamma} \to D^{\rm und}$ as $\gamma \to 0$ for some limit~$D^{\rm und}$
that is also a lower bound for $\gamma D^{\gamma}$ for all $\gamma > 0$.
It is also possible~\cite[Lemma~3.4]{MR2394704}, in this case,
to show that the solution to the Poisson equation~\eqref{eq:poisson_equation},
when multiplied by $\gamma$, converges in~$\lp{2}{\mu}$ as $\gamma \to 0$ to a limit
which can be calculated explicitly in simple settings~\cite{MR2427108}.
Despite the existence of an asymptotic result,
calculating the mobility for small $\gamma$ is challenging.
Indeed, it can be shown that the spectral gap in $L^2(\mu)$ of the generator $\mathcal L$ behaves as $\mathcal O(\gamma)$ in the limit as $\gamma \to 0$~\cite{MR2394704,MR3106879,MR3522857,roussel2018spectral},
and so deterministic methods for solving the Poisson equation~\eqref{eq:poisson_equation} are ill-conditioned in this limit,
while Monte Carlo based methods are very slow to converge,
as discussed in \cref{sec:method}.

The aforementioned asymptotic result for the underdamped limit extends to the multi-dimensional setting only when the potential is separable,
that is when $V$ can be decomposed as $V(\vect q) = \sum_{i=1}^d V_i(q_i)$, corresponding to a completely integrable Hamiltonian system for $\gamma =0$, but no theoretical results exist in the non-separable case,
which was explored mostly by means of numerical experiments.
Early numerical results in~\cite{chen1996surface}, obtained from Einstein's formula~\eqref{eq:einsteins_formula},
suggest that the effective diffusion coefficient scales as~$\gamma^{-1/2}$ in the underdamped regime for a particular case of a non-separable periodic potential.
Later, in~\cite{Braun02},
different authors note that this behavior as~$\gamma^{-1/2}$ is valid only when $\gamma \in [0.01, 0.1]$,
but not for smaller values of the damping coefficient.
They conclude from simulation results that the effective diffusion coefficient scales as~$\gamma^{-\sigma}$ with $0 \leq \sigma \leq 1/3$ in the underdamped regime,
and suggest that $\sigma$ could be zero for all non-separable potentials.
More recently, in his doctoral thesis~\cite{roussel_thesis},
Roussel calculates the mobility of Langevin dynamics using a control variate approach for linear response,
relying on~\eqref{eq:relation_mobility_diffusion}.
The control variate he employs is constructed from an approximate solution to the Poisson equation~\eqref{eq:poisson_equation}.
His results suggest that, for a wide range of friction coefficients in the interval $[10^{-3}, 1]$
and in the particular case of the potential
\begin{equation}
    \label{eq:potential_julien}
    V(\vect q) = - \bigl( \cos(q_1) + \cos(q_2) \bigr) + \delta \exp \bigl(\sin(q_1 + q_2)\bigr),
\end{equation}
the mobility scales as $\gamma^{- \sigma}$,
with an exponent $\sigma \in [0, 1]$ that depends on the degree $\delta$ of non-separability of the potential. Despite claims in the physics literature, it is not expected that a universal scaling of mobility, or, equivalently, the effective diffusion coefficient, exists for general classes of non-separable potentials in dimensions higher than one.

\paragraph{Our contributions.}
In this work,
we propose a new variance reduction methodology for calculating the mobility of Langevin-type dynamics.
Like the approach in~\cite{roussel_thesis},
our methodology is based on a control variate constructed from an approximate solution to the Poisson equation~\eqref{eq:poisson_equation},
but it relies on Einstein's formula~\eqref{eq:einsteins_formula} instead of the linear response result~\eqref{eq:relation_mobility_diffusion}.
The advantages of relying on Einstein's formula are twofold:
on the one hand the associated estimators, which are based on~\eqref{eq:naive_estimator},
are asymptotically unbiased,
and on the other hand,
their calculation requires only the first derivatives of the approximate solution to the Poisson equation,
which enables to circumvent regularity issues encountered in~\cite{roussel_thesis} in the underdamped limit.


Our contributions in this work are the following.
\begin{itemize}
    \item
        We derive bounds on the bias and variance of the proposed estimator for the simple case of one-dimensional Langevin dynamics,
        in terms of the error on the solution to the Poisson equation~\eqref{eq:poisson_equation}.
        Our estimates show, in particular, that the Langevin dynamics should be integrated up to a time scaling as $\max(\gamma^{-1}, \gamma)$ in order to control the bias of the estimator.
    \item
        We examine the performance of the approach for two different approximate solutions to the Poisson equation:
        one is obtained through the Fourier/Hermite Galerkin method developed in~\cite{roussel2018spectral},
        and the other is calculated from the limiting solution of the Poisson equation in the underdamped limit;
        see~\cite{MR2427108}.
    \item
        We apply the proposed variance reduction approach to the estimation of mobility for two-dimensional Langevin dynamics in a non-separable periodic potential.
        To this end, we construct an approximation to the Poisson equation by tensorization of approximations obtained in one spatial dimension.
        We numerically study the performance of this approach,
        and present numerical results corroborating the asymptotic behavior as $\gamma^{-\sigma}$ for $\sigma \in (0, 1]$ of the effective diffusion coefficient
        observed in~\cite{roussel_thesis}.
    \item
        Using the proposed variance reduction approach
        for calculating the diffusion coefficient of generalized Langevin dynamics in the underdamped regime,
        we provide numerical evidence supporting the asymptotic behavior of the effective diffusion coefficient conjectured in our previous work~\cite{GPGSUV21} using formal asymptotics.
\end{itemize}
The rest of the paper is organized as follows.
In~\cref{sec:method},
we present a control variate approach for improving the naive Monte Carlo estimator~\eqref{eq:naive_estimator},
and obtain bounds on the bias and variance of the improved estimator in the particular case of Langevin dynamics~\eqref{eq:langevin}.
In~\cref{sec:application_to_one_dimensional_langevin_type_dynamics},
we employ the proposed approach for calculating the mobility of one-dimensional Langevin and generalized Langevin dynamics,
as a proof of concept,
and we assess the performance of various control variates in terms of variance reduction.
In~\cref{sec:applications_2d},
we present numerical results for two-dimensional Langevin dynamics,
exhibiting a scaling as $\gamma^{-\sigma}$ of the mobility in the underdamped regime.
\Cref{sec:conclusions_and_perspectives_for_future_work} is reserved for conclusions and perspectives for future work,
while the appendices contain technical results employed in~\cref{sec:application_to_one_dimensional_langevin_type_dynamics}.

\section{Improved Monte Carlo estimator for the diffusion coefficient}%
\label{sec:method}%

Throughout this section,
we focus on the Langevin dynamics~\eqref{eq:langevin} for simplicity.
Although some of our arguments are tailored specifically to this dynamics,
our approach may in principle be applied to other Langevin-type dynamics,
such as the generalized Langevin dynamics considered in \cref{sub:generalization_to_generalized_langevin_dynamics}.
We assume throughout the section that $(\vect q_t, \vect p_t)_{t\geq 0}$ is a solution of~\eqref{eq:langevin} with statistically stationary initial condition~$(\vect q_0, \vect p_0) \sim \mu$ independent of the Brownian motion $(\vect w_t)_{t \geq 0}$.
This is not a restrictive assumption in our setting as the probability measure $\mu$,
being defined explicitly on the low-dimensional space $\torus^d \times \real^d$,
can be sampled efficiently using standard methods,
for instance by rejection sampling.

Let us fix a direction $\vect e \in \real^d$, with $|\vect e| = 1$,
and denote again by $\phi_{\vect e}$ the corresponding solution to the Poisson equation~\eqref{eq:poisson_equation}.
Since the number of independent realizations in Monte Carlo estimators
appears only as a denominator in the variance,
we study estimators based on one realization only.
That is, instead of~\eqref{eq:naive_estimator}, we take as point of comparison the naive estimator
\begin{equation}
    \label{eq:simple_estimator}
    u(T) = \frac{\abs{\vect e^\t (\vect q_T - \vect q_0)}^2}{2T}.
\end{equation}

This section is divided into three parts.
In \cref{sub:construction_of_an_improved_estimator},
we construct a Monte Carlo estimator for the effective diffusion coefficient that improves on~\eqref{eq:simple_estimator}.
We then demonstrate in \cref{sub:bias} and \cref{sub:variance} that,
at least in certain parameter regimes,
this estimator has better properties than~\eqref{eq:simple_estimator} in term of bias and variance, respectively.

\subsection{Construction of an improved estimator}%
\label{sub:construction_of_an_improved_estimator}
In order to motivate the construction of an improved estimator for $D^{\gamma}_{\vect e}$,
we apply It\^o's formula to the solution $\phi_{\vect e}$ to the Poisson equation~\eqref{eq:poisson_equation},
which gives
\[
    \phi_{\vect e}(\vect q_T, \vect p_T) - \phi_{\vect e}(\vect q_0, \vect p_0)
    = - \int_{0}^{T} \vect e^\t \vect p_t \, \d t + \sqrt{2 \gamma \beta^{-1}} \int_{0}^{T} \grad_{\vect p} \phi_{\vect e}(\vect q_t, \vect p_t) \cdot \d \vect w_t.
\]
Rearranging the terms,
we obtain
\begin{equation}
    \label{eq:ito_for_phi}
    \vect e^\t(\vect q_T - \vect q_0) =
    \phi_{\vect e}(\vect q_0, \vect p_0) - \phi_{\vect e}(\vect q_T, \vect p_T)
    + \sqrt{2 \gamma \beta^{-1}} \int_{0}^{T} \grad_{\vect p} \phi_{\vect e}(\vect q_t, \vect p_t) \cdot \d \vect w_t.
\end{equation}
The estimator we propose requires the knowledge of an approximation $\psi_{\vect e}$ of the solution $\phi_{\vect e}$ to the Poisson equation~\eqref{eq:poisson_equation}.
Two concrete methods for obtaining such an approximation in the small $\gamma$ regime are presented in \cref{sec:application_to_one_dimensional_langevin_type_dynamics}.
In this section, we assume that such an approximation is given.
Let us introduce
\begin{align}
    \label{eq:definition_control_variate}
    \xi_T = \psi_{\vect e}(\vect q_0, \vect p_0) - \psi_{\vect e}(\vect q_T, \vect p_T)
    + \sqrt{2 \gamma \beta^{-1}} \int_{0}^{T} \grad_{\vect p} \psi_{\vect e}(\vect q_t, \vect p_t) \cdot \d \vect w_t.
\end{align}
\begin{remark}
    By It\^o's formula,
    it would have been equivalent, in the case where $\psi_{\vect e}$ is smooth,
    to define
    \(
        \xi_T = \int_{0}^{T} \mathcal L \psi_{\vect e}(\vect q_t, \vect p_t) \, \d t.
    \)
    However, the definition~\eqref{eq:definition_control_variate} makes sense even if $\psi_{\vect e}$ is differentiable only once,
    and so it is more widely applicable.
    In~\cref{sub:underdamped_approach}, for example, we construct a singular approximation~$\psi_{\vect e}$ that is not twice weakly differentiable.
\end{remark}
Since $\xi_T$ is expected to be a good approximation of $\vect e^\t(\vect q_T - \vect q_0)$,
in some appropriate sense,
when~$\psi_{\vect e}$ is a good approximation of~$\phi_{\vect e}$,
one may achieve a reduction in variance by using the former as a control variate for the latter.
More precisely, we consider the following estimator instead of~$u(T)$:
\begin{equation}
    \label{eq:simple_estimator_improvement_1}
     \frac{\bigl\lvert \vect e^\t (\vect q_T - \vect q_0) \bigr\rvert^2}{2T}  -	\alpha  \left( \frac{\bigl\lvert \xi_T \bigr\rvert^2}{2T} - \expect \left[ \frac{\lvert \xi_T \rvert^2}{2T} \right] \right)
    =: u(T) - \alpha \Bigl(\widehat u(T) - \expect \left[\widehat u(T)\right]\Bigr).
\end{equation}
Clearly, this estimator and $u(T)$ have the same expectation, and thus the same bias.
By standard properties of control variates~\cite{kroese2013handbook},
the value of $\alpha$ minimizing the variance can be expressed in terms of the variance of $u(T)$ and
the covariance between $u(T)$ and $\widehat u(T)$.
For simplicity of the analysis,
we consider only the case $\alpha = 1$,
which is the variance-minimizing choice when~$\widehat u(T) = u(T)$.
We mention in passing that the idea of constructing control variates by means of approximate solutions of an appropriate Poisson equation forms the basis of the so-called zero variance Markov Chain Monte Carlo methodology~\cite{papamarkou_al_2014}.
The estimator can be further modified by replacing the expectation in~\eqref{eq:simple_estimator_improvement_1},
which is intractable analytically,
by its value in the limit as $T \to \infty$;
that is, we define
\begin{equation}
    \label{eq:improved_estimator}
    v(T) =  \frac{\bigl\lvert \vect e^\t(\vect q_T - \vect q_0) \bigr\rvert^2}{2T} - \frac{\bigl\lvert \xi_T \bigr\rvert^2}{2T} + \lim_{T \to \infty} \expect \left[\frac{\bigl\lvert \xi_T \bigr\rvert^2}{2T}\right],
\end{equation}
Note that $v(T) = u(T)$ if $\psi_{\vect e} = 0$.
The expectation of $v(T)$ is different from that of $u(T)$,
but the two expectations coincide asymptotically as $T \to \infty$.
Furthermore, unlike the expectation in~\eqref{eq:simple_estimator_improvement_1},
the limit in the last term on the right-hand side of~\eqref{eq:improved_estimator} can be calculated explicitly,
and so the estimator $v(T)$ can be employed in practice.
\begin{lemma}
    \label{lemma:explicit_limit}
    Assume that $\psi_{\vect e} \in L^2(\mu)$ and $\grad_{\vect p} \psi_{\vect e} \in L^2(\mu)$.
    Then
    \begin{equation}
        \label{eq:explicit_limit}
        \lim_{T \to \infty} \expect \left[\frac{\bigl\lvert \xi_T \bigr\rvert^2}{2T}\right] = \gamma \beta^{-1} \int_{\torus^d \times \real^d} \abs{\grad_{\vect p} \psi_{\vect e}}^2 \, \d \mu =: d[\psi_{\vect e}].
    \end{equation}
\end{lemma}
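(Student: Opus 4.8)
The plan is to expand the square $\abs{\xi_T}^2$ using the definition~\eqref{eq:definition_control_variate}, take expectations, divide by $2T$, and identify which contributions survive the limit $T \to \infty$. Write $\xi_T = A_T + M_T$, where $A_T = \psi_{\vect e}(\vect q_0, \vect p_0) - \psi_{\vect e}(\vect q_T, \vect p_T)$ is the boundary term and $M_T = \sqrt{2\gamma\beta^{-1}} \int_0^T \grad_{\vect p}\psi_{\vect e}(\vect q_t, \vect p_t) \cdot \d\vect w_t$ is the It\^o integral. Then
\[
    \expect\bigl[\abs{\xi_T}^2\bigr] = \expect\bigl[A_T^2\bigr] + 2\,\expect\bigl[A_T M_T\bigr] + \expect\bigl[M_T^2\bigr].
\]
First I would handle $\expect[M_T^2]$: by the It\^o isometry and stationarity of $(\vect q_t, \vect p_t)$ under $\mu$,
\[
    \expect\bigl[M_T^2\bigr] = 2\gamma\beta^{-1} \int_0^T \expect\bigl[\abs{\grad_{\vect p}\psi_{\vect e}(\vect q_t, \vect p_t)}^2\bigr] \, \d t = 2\gamma\beta^{-1} T \int_{\torus^d \times \real^d} \abs{\grad_{\vect p}\psi_{\vect e}}^2 \, \d\mu,
\]
using the hypothesis $\grad_{\vect p}\psi_{\vect e} \in L^2(\mu)$ to ensure the integrand is finite and the stochastic integral is a genuine square-integrable martingale. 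Dividing by $2T$ gives exactly $d[\psi_{\vect e}]$, so this term already produces the claimed limit.

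It then remains to show the other two terms are $o(T)$, in fact $O(1)$. For $\expect[A_T^2]$: since $(\vect q_0,\vect p_0) \sim \mu$ and $(\vect q_T, \vect p_T) \sim \mu$ as well (stationarity), we have $\expect[\psi_{\vect e}(\vect q_0,\vect p_0)^2] = \expect[\psi_{\vect e}(\vect q_T,\vect p_T)^2] = \norm{\psi_{\vect e}}^2$, so by the triangle inequality $\expect[A_T^2] \leq 4\norm{\psi_{\vect e}}^2$, which is a constant independent of $T$; hence $\expect[A_T^2]/(2T) \to 0$. For the cross term, Cauchy--Schwarz gives $\abs{\expect[A_T M_T]} \leq \sqrt{\expect[A_T^2]}\sqrt{\expect[M_T^2]} \leq 2\norm{\psi_{\vect e}} \cdot \sqrt{2\gamma\beta^{-1}T\norm{\grad_{\vect p}\psi_{\vect e}}^2}$, which is $O(\sqrt T)$, so $\expect[A_T M_T]/(2T) \to 0$ as well. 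Combining the three estimates yields~\eqref{eq:explicit_limit}.

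The only genuine technical point — not really an obstacle but the one place care is needed — is justifying that $M_T$ is a true martingale with the stated second moment rather than merely a local martingale, so that the It\^o isometry applies verbatim; this is exactly where the assumption $\grad_{\vect p}\psi_{\vect e} \in L^2(\mu)$ enters, since it guarantees $\expect\int_0^T \abs{\grad_{\vect p}\psi_{\vect e}(\vect q_t,\vect p_t)}^2\,\d t = T\norm{\grad_{\vect p}\psi_{\vect e}}^2 < \infty$ by Fubini and stationarity. Everything else is the elementary bookkeeping above; note that one never needs $\psi_{\vect e}$ to be twice differentiable, consistent with the remark preceding the lemma.
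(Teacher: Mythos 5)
Your proof is correct and follows essentially the same route as the paper: you use the same decomposition of $\xi_T$ into a boundary term and an It\^o integral, apply It\^o's isometry with stationarity to compute the contribution of the martingale part, and dispose of the boundary and cross terms by stationarity and Cauchy--Schwarz. The only difference is that you give slightly more explicit justification of the uniform bounds and the true-martingale property, which the paper leaves implicit.
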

\begin{proof}
    Let us introduce the notation
    \[
        \theta_T = \psi_{\vect e}(\vect q_0, \vect p_0) - \psi_{\vect e}(\vect q_T, \vect p_T),
        \qquad
        M_T = \sqrt{2 \gamma \beta^{-1}} \int_{0}^{T} \grad_{\vect p} \psi_{\vect e}(\vect q_t, \vect p_t) \cdot \d \vect w_t.
    \]
    From the definition~\eqref{eq:definition_control_variate},
    we have
    \[
        \frac{\lvert \xi_T \rvert^2}{2T} = \frac{\theta_T^2}{2T} + \frac{M_T^2}{2T} + \left(\frac{\theta_T}{\sqrt{T}}\right) \left(\frac{M_T}{\sqrt{T}}\right).
    \]
    Given that $\psi_{\vect e} \in L^2(\mu)$ and that we assume stationary initial conditions,
    so that $(\vect q_T, \vect p_T) \sim \mu$ as well,
    the expectation of the first term tends to 0 in the limit as $T \to 0$.
    The expectation of the second term can be calculated from It\^o's isometry:
    \[
        \expect \left[ \frac{M_T^2}{2T} \right]
        = \frac{\gamma \beta^{-1}}{T} \int_{0}^{T} \expect \Bigl[\left\lvert \grad_{\vect p} \psi_{\vect e}(\vect q_t, \vect p_t) \right\rvert^2\Bigr] \, \d t
        = \gamma \beta^{-1} \int_{\torus^d \times \real^d} \left\lvert \grad_{\vect p} \psi_{\vect e}(\vect q, \vect p) \right\rvert^2 \, \d \mu
        = d[\psi_{\vect e}].
    \]
    The expectation of the third term converges to zero by the Cauchy--Schwarz inequality,
    which concludes the proof of~\eqref{eq:explicit_limit}.
\end{proof}
Repeating verbatim the reasoning in the proof of~\cref{lemma:explicit_limit} with~$\phi_{\vect e}$ instead of $\psi_{\vect e}$ and $\vect e^\t (\vect q_T - \vect q_0)$ instead of $\xi_T$
(see~\eqref{eq:ito_for_phi}),
we obtain that
\[
    \lim_{T \to \infty} \expect [u(T)] = d[\phi_{\vect e}],
\]
implying that $d[\phi_{\vect e}] = D^{\gamma}_{\vect e}$,
since the limit on the left-hand side of this equation is by definition~$D^{\gamma}_{\vect e}$ in view of~\eqref{eq:einsteins_formula}.
This equality can also be shown from~\eqref{eq:effective_diffusion_poisson} by integrating by parts in the formula for $d[\phi_{\vect e}]$:
\begin{align}
    \notag
    d[\phi_{\vect e}]
    &= \gamma \beta^{-1} \int_{\torus^d \times \real^d} \abs{\grad_{\vect p} \phi_{\vect e}}^2 \, \d \mu
    = \gamma \beta^{-1} \int_{\torus^d \times \real^d} \bigl((\beta \grad V - \grad_{\vect p}) \cdot \grad_{\vect p}\phi_{\vect e} \bigr) \phi_{\vect e} \, \d \mu  \\
    \label{eq:equivalent_definition_effective_diffusion}
    &= -\int_{\torus^d \times \real^d} (\gamma \mathcal L_{\rm FD} \phi_{\vect e}) \phi_{\vect e} \, \d \mu
    = -\int_{\torus^d \times \real^d} (\mathcal L \phi_{\vect e}) \phi_{\vect e} \, \d \mu
    = D^{\gamma}_{\vect e},
\end{align}
where the skew-symmetry of $\mathcal L_{\rm ham}$ in $L^2(\mu)$ is employed in the second line.

By construction, it is clear that the improved estimator~\eqref{eq:improved_estimator} is asymptotically unbiased.
If~$\psi_{\vect e} = \phi_{\vect e}$, then this estimator is unbiased also for finite~$T$.
By a slight abuse of terminology,
we refer to the process $(\xi_t)_{t \geq 0}$ as the \emph{control variate} in the rest of this work.

\begin{remark}
    \label{remark:cost_control_variate}
    Notice that calculating the control variate $\xi_T$ in~\eqref{eq:definition_control_variate} requires
    to evaluate $\psi(q_t, p_t)$ at times 0 and $T$ and the gradient $\grad_{\vect p} \psi_{\vect e}(\vect q_t, \vect p_t)$ along the full trajectory $(\vect q_t, \vect p_t)_{0\leq t\leq T}$.
    Therefore, it is important for efficiency that $\grad_{\vect p} \psi_{\vect e}$ is not computationally expensive to evaluate.
\end{remark}

In the next subsections,
we obtain non-asymptotic results on the bias of the estimator $v(T)$ in~\cref{sub:bias},
and bounds on its variance in~\cref{sub:variance}.
Before this,
in order to build intuition and motivate our results,
we scrutinize two settings where
explicit expressions of the bias and variance of the estimator $u(T)$ can be obtained:
constant potential and quadratic potential (for systems in $\real^d$ rather than $\torus^d$).
In the rest of this section,
we employ the notation $\e^{\mathcal L t}$ to denote the Markov semigroup corresponding to the stochastic dynamics~\eqref{eq:langevin}:
\[
    \left(\e^{\mathcal L t} \varphi\right) (\vect q, \vect p) = \expect \bigl(\varphi(\vect q_t, \vect p_t) \big| (\vect q_0, \vect p_0) = (\vect q, \vect p)\bigr).
\]
\begin{example}
    [Constant potential]
    \label{example:constant}
    Consider the case where $V(q) = 0$ in dimension $d = 1$
    (henceforth we drop the $\vect e$ subscript and the bold notation for $\vect q$ and $\vect p$).
    In this case, the solution to the Poisson equation $- \mathcal L \phi = p$ is given by $\phi(q, p) = \gamma^{-1} p$,
    and applying Itô's formula to this function we obtain
    (this also follows directly from a time integration of~\eqref{eq:langevin_p})
    \[
        \gamma^{-1}(p_t - p_0) = - \int_{0}^{t} p_s \, \d s + \sqrt{2 \gamma^{-1} \beta^{-1}} w_t
        = q_0 - q_t + \sqrt{2 \gamma^{-1} \beta^{-1}} w_t.
    \]
    Using the explicit solution to the Ornstein--Uhlenbeck equation satisfied by $p$,
    we deduce that
    \begin{align*}
        q_t - q_0
        &= - \gamma^{-1} \left( p_0 \left(\e^{-\gamma t} - 1\right) + \sqrt{2 \gamma \beta^{-1}}\int_{0}^{t} \e^{-\gamma (t - s)} \, \d w_s \right)
        + \sqrt{2 \gamma^{-1} \beta^{-1}} w_t \\
        &=  - \gamma^{-1} p_0 \left(\e^{-\gamma t} - 1\right) + \sqrt{2 \gamma^{-1} \beta^{-1}}\int_{0}^{t} \left(1 - \e^{-\gamma (t - s)}\right) \, \d w_s.
    \end{align*}
    The assumptions on the initial condition imply that $p_0 \sim \mathcal N(0, \beta^{-1})$ and that $p_0$ is independent of $(w_t)_{t \geq 0}$,
    so the right-hand side of this equation is a mean-zero Gaussian random variable.
    Using It\^o's isometry, we calculate that $\expect \bigl[ u(T) \bigr]$ is given by
    \begin{align*}
        \frac{\expect \bigl[\abs{q_T - q_0}^2\bigr]}{2T}
        &= \frac{\lvert \e^{-\gamma T} - 1 \rvert^2 + 2 \gamma T - 4 (1 - \e^{-\gamma T}) +  1 - \e^{-2 \gamma T}}{2 \gamma^2 \beta T} \\
        &= \frac{1}{\gamma \beta} \left( 1 + \frac{1}{T \gamma} \left(\e^{-\gamma T} - 1\right) \right) =: \sigma_T^2.
    \end{align*}
    This equation implies that the effective diffusion coefficient in this example is $D^{\gamma} = \gamma^{-1} \beta^{-1}$,
    and that the relative bias is bounded from above by $(T \gamma)^{-1}$.
    Furthermore,
    since $\frac{\lvert q_T- q_0 \rvert^2}{2T\sigma_T^2}$ is distributed according to $\chi^2(1)$,
    the variance of $u(T)$ is equal to
    \[
        \var \bigl[u(T)\bigr] = 2  \bigl(\expect [u(T)]\bigr)^2 = 2 \sigma_T^4 \xrightarrow[T \to \infty]{} 2 \lvert D^{\gamma} \rvert^2 .
    \]
    Note that this variance does not converge to 0 as $T \to \infty$,
    a result further made precise for generic potentials in \cref{proposition:variance,proposition:asymptotic_variance}.
\end{example}

The case of a confining quadratic potential is degenerate,
in the sense that the associated effective diffusion coefficient is zero.
In this example,
we also obtain an explicit expression for the velocity autocorrelation function,
in order to motivate \cref{proposition:semigroup_meanzero_observable} below.
\begin{example}
    [Quadratic potential]
    \label{example:quadratic}
    We now consider the case of the one-dimensional quadratic confining potential $V(q) = \frac{k q^2}{2}$,
    and assume for simplicity $\gamma^2 - 4 k \neq 0$.
    In this case, the eigenfunctions of $\mathcal L$ are polynomials
    and, for every $n \geq 0$, the vector space $\poly(n)$ of polynomials of degree less than or equal to $n$
    contains an orthonormal basis of eigenfunctions of $\mathcal L$~\cite[Section~6.3]{pavliotis2011applied}.
    In particular, the constant function is an eigenfunction with eigenvalue 0,
    and the two other eigenfunctions in~$\poly(1)$, together with their associated eigenvalues,
    are given by
    \begin{equation*}
        g_{\pm}(q, p) =
        - \lambda_{\mp} q + p, \\
        \qquad
        \lambda_{\pm} = \frac{- \gamma \pm \sqrt{\gamma^2 - 4k}}{2}.
    \end{equation*}
    Here the radical symbol $\sqrt{\cdot}$ denotes the principal square root;
    for a complex number $z$, this is defined as $\sqrt{z} = \sqrt{r} \, \e^{\imag \theta/2}$ where $(r, \theta) \in [0, \infty) \times (-\pi, \pi]$ are the polar coordinates of~$z$.
    The coordinate functions $(q, p) \mapsto q$ and $(q, p) \mapsto p$ are the following linear combinations of $g_+$ and~$g_-$:
    \[
        q = \frac{g_+(q,p) - g_-(q,p)}{\lambda_+ - \lambda_-},
        \qquad
        p = \frac{\lambda_+ g_+(q,p) - \lambda_- g_-(q,p)}{\lambda_+ - \lambda_-}.
    \]
    Therefore,
    using the assumption that $(q_0, p_0) \sim \mu$,
    we have
    \begin{align*}
        \expect \bigl[ \lvert q_T - q_0 \rvert^2 \bigr]
        &= 2 \norm{q}^2 - 2 \ip{\e^{T \mathcal L}q}{q}
        = 2 \norm{q}^2 - 2 \frac{\ip{\e^{\lambda_+ T} g_+ - \e^{\lambda_- T} g_-}{q}}{\lambda_+ - \lambda_-} \\
        &= \frac{2}{k \beta} \left( 1 +  \frac{\lambda_- \e^{\lambda_+ T} - \lambda_+ \e^{\lambda_- T}}{\lambda_+ - \lambda_-} \right).
    \end{align*}
    This implies that
    \(
        T \expect \bigl[ u(T) \bigr] \to (k \beta)^{-1}
    \)
    in the limit as~$T \to \infty$,
    and so $D^{\gamma} = 0$ as expected.
    Similarly, it is not difficult to show $T^2\var \bigl[ u(T) \bigr] \to 2 (k \beta)^{-2}$ in the same limit;
    in this case, the variance is 0 asymptotically.
    Using that $\ip{g_+}{p} = \ip{g_-}{p} = \beta^{-1}$,
    we can also calculate the velocity autocorrelation function:
    \begin{equation}
        \label{eq:velocity_autocorrelation_quadratic}
        \ip{\e^{t \mathcal L}p}{p} =
        \frac{\lambda_+ \e^{\lambda_+ t} - \lambda_- \e^{\lambda_- t}}{\beta(\lambda_+ - \lambda_-)}.
    \end{equation}
    In the limit as $\gamma \to \infty$,
    it holds that $\lambda_+ \sim - k/\gamma$ and $\lambda_- \sim - \gamma$.
    In this limit,
    the factor multiplying the slowly decaying exponential $\e^{\lambda_+ t}$ in~\eqref{eq:velocity_autocorrelation_quadratic} scales as~$\mathcal O(\gamma^{-2})$,
    whereas the factor multiplying the rapidly decaying exponential $\e^{\lambda_- t}$ scales as $\mathcal O(1)$.
    We demonstrate in \cref{proposition:semigroup_meanzero_observable} that a similar property holds more generally.
\end{example}

\subsection{Bias of the estimators for the effective diffusion coefficient}%
\label{sub:bias}

In this subsection,
we begin by studying the bias of the standard estimator $u(T)$ in~\cref{ssub:bias_of_the_standard_estimator},
and then the bias of the improved estimator $v(T)$ in~\cref{ssub:bias_of_the_improved_estimator}.
Although we use, in \cref{sec:application_to_one_dimensional_langevin_type_dynamics,sec:applications_2d},
approximate solutions $\psi_{\vect e}$ of the Poisson equation~\eqref{eq:poisson_equation} that are not twice differentiable,
we focus in this section on the case where $\psi_{\vect e}$ is at least twice differentiable for simplicity of the analysis.

\subsubsection{Bias of the standard estimator}%
\label{ssub:bias_of_the_standard_estimator}

We first obtain in~\cref{lemma:easy_lemma} a simple bound on the bias based on standard results in the literature.
We then motivate, with the help of~\cref{example:quadratic},
that this result is not optimal in the overdamped regime and,
after obtaining a decay estimate for correlation functions of the form $t \mapsto \ip{\e^{t \mathcal L} f}{h}$ with $f$ and $h$ functions depending only on $p$,
we prove a finer bound on the bias in~\cref{corollary:better_bias}.

\begin{lemma}
    [Preliminary bound on the bias of the standard estimator]
    \label{lemma:easy_lemma}
    There exists a positive constant~$C$such that
    \begin{equation}
        \label{eq:bias_simple}
        \forall \gamma \in (0, \infty),
        \quad
        \forall T > 0,
        \qquad
        \abs{\expect \bigl[u(T)\bigr] - D_{\vect e}^{\gamma}}
        \leq
        \frac{C \max \{\gamma^{-2}, \gamma^2\}}{\beta T}.
    \end{equation}
\end{lemma}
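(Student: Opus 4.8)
The plan is to express the bias of $u(T)$ in terms of the velocity autocorrelation function and then use the known exponential decay of the Markov semigroup in $L^2(\mu)$, whose rate scales like $\min\{\gamma, \gamma^{-1}\}$ in the degenerate regimes. First I would write, using stationarity of $(\vect q_t, \vect p_t)$,
\[
    \expect\bigl[\abs{\vect e^\t(\vect q_T - \vect q_0)}^2\bigr]
    = \expect\!\left[\left(\int_0^T \vect e^\t \vect p_t \,\d t\right)^{\!2}\right]
    = 2 \int_0^T (T - s) \, C(s) \,\d s,
    \qquad C(s) := \ip{\e^{s\mathcal L}(\vect e^\t \vect p)}{\vect e^\t \vect p},
\]
so that $\expect[u(T)] = \frac{1}{T}\int_0^T (T-s)\, C(s)\,\d s$. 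Since $D^{\gamma}_{\vect e} = \int_0^\infty C(s)\,\d s$ by the Green--Kubo formula (equivalently, by combining~\eqref{eq:effective_diffusion_poisson}, \eqref{eq:poisson_equation} and~\eqref{eq:einsteins_formula}), a direct manipulation gives
\[
    \expect[u(T)] - D^{\gamma}_{\vect e}
    = -\frac{1}{T}\int_0^T s\, C(s)\,\d s - \int_T^\infty C(s)\,\d s.
\]
It then suffices to bound $\int_0^\infty (1 + s)\abs{C(s)}\,\d s$ by $C\max\{\gamma^{-2}, \gamma^2\}\beta^{-1}$, and the claimed estimate with the factor $1/T$ follows.

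The key input is the exponential decay of the semigroup: by the hypocoercivity results cited in the excerpt (e.g.~\cite{roussel2018spectral,MR3106879,MR3522857}), there are constants $C_0, \lambda_0 > 0$ such that for all mean-zero $f \in L^2(\mu)$,
\[
    \norm{\e^{s\mathcal L} f} \leq C_0\, \e^{-\lambda_0 \Lambda(\gamma) s}\,\norm{f},
    \qquad \Lambda(\gamma) := \min\{\gamma, \gamma^{-1}\},
\]
where the $\gamma$-dependence of the rate as $\Lambda(\gamma)$ is exactly the $\mathcal O(\gamma)$ spectral-gap behavior as $\gamma \to 0$ together with the $\mathcal O(\gamma^{-1})$ behavior as $\gamma \to \infty$ coming from the overdamped rescaling. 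Applying Cauchy--Schwarz, $\abs{C(s)} \leq \norm{\e^{s\mathcal L}(\vect e^\t\vect p)}\,\norm{\vect e^\t\vect p} \leq C_0 \beta^{-1}\e^{-\lambda_0 \Lambda(\gamma) s}$, since $\norm{\vect e^\t\vect p}^2 = \beta^{-1}$. Then $\int_0^\infty(1+s)\abs{C(s)}\,\d s \leq C_0\beta^{-1}\bigl(\Lambda(\gamma)^{-1} + \Lambda(\gamma)^{-2}\bigr) \leq C\beta^{-1}\max\{\gamma^{-2},\gamma^2\}$, which is the desired bound (absorbing the $\Lambda(\gamma)^{-1}$ term into $\Lambda(\gamma)^{-2}$ up to constants after noting $\Lambda(\gamma) \leq 1$). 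Dividing by $T$ closes the argument.

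The main obstacle is justifying the precise $\gamma$-dependence $\Lambda(\gamma) = \min\{\gamma, \gamma^{-1}\}$ of the decay rate \emph{uniformly over all $\gamma \in (0,\infty)$}, rather than just in one asymptotic regime. The small-$\gamma$ side ($\lambda \gtrsim \gamma$) is documented in the hypocoercivity literature referenced in the paper; the large-$\gamma$ side follows from the overdamped limit, where the rescaled generator $\gamma^{-1}\mathcal L$ converges to $\mathcal L_{\rm ovd}$ with its own spectral gap, so $\mathcal L$ has gap of order $\gamma^{-1}$. Since the lemma only claims existence of \emph{some} constant $C$ and is explicitly labeled a preliminary (non-optimal) bound, one is allowed to be slightly wasteful: it is enough to invoke these two regimes plus compactness on any fixed interval $[\gamma_-, \gamma_+] \subset (0,\infty)$ to get a uniform lower bound on the rate of the form $c\,\Lambda(\gamma)$, and a uniform bound $\norm{\vect e^\t\vect p}^2 = \beta^{-1}$ on the observable. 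A secondary technical point is that $\vect e^\t\vect p \in L^2_0(\mu)$ (its $\mu$-mean is zero by symmetry of $\kappa$ in $\vect p$), so the decay estimate indeed applies; this is immediate. No delicate estimate beyond invoking the cited decay results is needed, consistent with the lemma being a stepping stone toward the sharper \cref{corollary:better_bias}.
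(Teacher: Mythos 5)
Your proof is correct and follows essentially the same route as the paper's: reduce $\expect[u(T)]$ to a time-weighted integral of the velocity autocorrelation $C(s) = \ip{\e^{s\mathcal L}(\vect e^\t\vect p)}{\vect e^\t\vect p}$, identify the bias with the tail/weight defect against the Green--Kubo integral $\int_0^\infty C$, and control it via the hypocoercive decay estimate $\norm{\e^{s\mathcal L}}_{\mathcal B(L^2_0(\mu))} \lesssim \e^{-\ell\min\{\gamma,\gamma^{-1}\}s}$ plus Cauchy--Schwarz. The only cosmetic difference is that you bound the two bias pieces through $\int_0^\infty(1+s)\abs{C(s)}\,\d s$, while the paper packages them as $\int_0^\infty \abs{C(\theta)}\min\{1,\theta/T\}\,\d\theta$ and uses the elementary identity $\int_0^\infty \e^{-\lambda\theta}\min\{1,\theta/T\}\,\d\theta \leq (\lambda^2 T)^{-1}$; both yield the same $\max\{\gamma^{-2},\gamma^2\}/(\beta T)$ bound.
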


\begin{proof}
Since the initial conditions are assumed statistically stationary,
it holds that
\begin{align}
    \notag
    \expect \bigl[u(T)\bigr]
    &= \frac{1}{2T} \expect \left[ \int_{0}^{T} \vect e^\t \vect p_{t} \, \d t \int_{0}^{T} \vect e^\t \vect p_s \, \d s \right]
    = \frac{1}{2T}  \int_{0}^{T} \!\! \int_{0}^{T} \expect \left[ \left(\vect e^\t \vect p_t\right) \left(\vect e^\t \vect p_s\right) \right] \, \d s \, \d t  \\
    \label{eq:first_equation_bias}
    &= \frac{1}{T}  \int_{0}^{T} \!\! \int_{0}^{t} \expect \left[ \left(\vect e^\t \vect p_t\right) \left(\vect e^\t \vect p_s\right) \right] \, \d s \, \d t,
\end{align}
since the contribution of the domain $0 \leq t \leq s \leq T$ is the same as that of $0 \leq s \leq t \leq T$.
The stationarity of the velocity process implies that, for ~$t \geq s$,
\begin{align*}
    \expect \left[ \left(\vect e^\t \vect p_t\right) \left(\vect e^\t \vect p_s\right) \right]
    &= \expect \left[ \left(\vect e^\t \vect p_{t-s}\right) \left(\vect e^\t \vect p_0\right) \right]
    = \ip{\e^{(t-s) \mathcal L} (\vect e^\t \vect p)}{\vect e^\t \vect p}.
\end{align*}
Substituting this expression in~\eqref{eq:first_equation_bias} and letting $\theta = t-s$ leads to
\begin{equation}
\label{eq:bias_without_control}
\begin{aligned}[b]
    \expect \bigl[u(T)\bigr]
    &= \int_{0}^{T} \ip{\e^{\theta \mathcal L} (\vect e^\t \vect p)}{\vect e^\t \vect p} \left(1 - \frac{\theta}{T}\right) \d \theta  \\
    &= \int_{0}^{\infty} \ip{\e^{\theta \mathcal L}(\vect e^\t \vect p)}{\vect e^\t \vect p}  \d \theta
    - \int_{0}^{\infty} \ip{\e^{\theta \mathcal L} (\vect e^\t \vect p)}{\vect e^\t \vect p} \min\left\{1, \frac{\theta}{T}\right\} \, \d \theta.
\end{aligned}
\end{equation}
As we shall demonstrate, the second term tends to 0 in the limit as $T \to \infty$.
Therefore, since the estimator $u(T)$ is asymptotically unbiased,
the first term must coincide with the effective diffusion coefficient~$D^{\gamma}_{\vect e}$
-- this is in fact the well known Green--Kubo formula for the effective diffusion coefficient, see e.g.~\cite{pavliotis2011applied,MR3509213}.
The Green--Kubo formula can also be derived from~\eqref{eq:effective_diffusion_poisson} by using the representation formula $\phi_{\vect e} = \int_{0}^{\infty} \e^{\theta \mathcal L} \left(\vect e^\t \vect p\right) \, \d \theta$,
which is well defined in view of the exponential decay of $\e^{\theta \mathcal L}$ on $L^2_0(\mu)$,
see~\eqref{eq:decay_semigroup_general} below.
The second term in~\eqref{eq:bias_without_control} is the bias.
In order to bound this term,
we use a general bound for the Markov semigroup associated with Langevin dynamics
stating that
\begin{equation}
    \label{eq:decay_semigroup_general}
    \forall \gamma > 0, \qquad \forall \theta \geq 0, \qquad
    \norm*{ \e^{\theta \mathcal L} }[\mathcal B \left(L^2_0\left(\mu\right) \right)] \leq L \exp \bigl(- \ell \theta \min\{\gamma, \gamma^{-1}\} \bigr)
\end{equation}
for appropriate constants $L > 0$ and $\ell > 0$.
Here $\mathcal B\left(L^2_0(\mu)\right)$ is the Banach space of bounded linear operators on $L^2_0(\mu)$,
and $\norm{\cdot}_{\mathcal B\left(L^2_0(\mu)\right)}$ is the usual associated norm.
This result is proved in~\cite{MR2394704} for $\gamma \in (0, 1)$ using the $H^1$ hypocoercivity approach~\cite{MR2562709},
and later in~\cite{MR3106879} for general $\gamma \in (0, \infty)$, in the Fokker--Planck setting,
using the direct $L^2(\mu)$ hypocoercivity approach pioneered in~\cite{Herau06,MR2576899,MR3324910}.
The latter approach is revisited in the backward Kolmogorov setting in~\cite{MR3522857,roussel2018spectral}.
An application of the bound~\eqref{eq:decay_semigroup_general} gives
\begin{equation}
    \label{eq:initial_bound_velocity_autocorrelation}
    \left\lvert \ip{\e^{\theta \mathcal L}(\vect e^\t p)}{\vect e^\t p} \right\rvert
    \leq \norm{\e^{\theta \mathcal L}(\vect e^\t p)} \norm{\vect e^\t p}
    \leq L \beta^{-1} \exp\bigl(- \ell \theta \min\{\gamma, \gamma^{-1}\}\bigr).
\end{equation}
Noting that
\begin{equation}
    \label{eq:petit_calcul}
    \forall \lambda > 0, \qquad
    \int_{0}^{\infty} \e^{-\lambda \theta} \min\left\{ 1, \frac{\theta}{T} \right\} \, \d \theta
    = \frac{1 - \e^{- \lambda T}}{\lambda^2 T} \leq \frac{1}{\lambda^2 T},
\end{equation}
we obtain~\eqref{eq:bias_simple}.
\end{proof}
Since the effective diffusion coefficient scales as $\gamma^{-1}$ in both the underdamped ($\gamma \to 0$) and overdamped limits ($\gamma \to \infty$)~\cite{MR2394704,MR2427108},
this estimate~\eqref{eq:bias_simple} suggests that the relative bias of the estimator scales as $\max\{\gamma^{-1}, \gamma^3\} T^{-1}$ and that,
consequently, the integration time $T$ should scale proportionally to $\max\{\gamma^{-1}, \gamma^3\}$ in order to achieve a given relative accuracy.
It turns out that the estimate~\eqref{eq:initial_bound_velocity_autocorrelation} is not optimal in the overdamped regime,
which is clear in the case of quadratic potential; see~\eqref{eq:velocity_autocorrelation_quadratic} in \cref{example:quadratic}.
We derive a sharper estimate from the following~\cref{proposition:semigroup_meanzero_observable}.
In order to state this result,
we introduce the operators~$\Pi_{\vect p}\colon \lp{2}{\mu} \rightarrow \lp{2}{\mu}$ and  $\Pi_{\vect p}^\perp = \id - \Pi_{\vect p}$,
with
\[
    \Pi_{\vect p} u (\vect q) = \int u(\vect q, \vect p) \, \kappa(\d \vect p).
\]
The operators $\Pi_{\vect p}$ and $\Pi_{\vect p}^\perp$ are respectively the $L^2(\mu)$ projection operators onto
the subspace of functions depending only on $\vect q$,
and the subspace of functions with average $0$ in $\vect p$ (with respect to the marginal distribution $\kappa$, defined in~\eqref{eq:definition_prob_measures}, and for almost every $\vect q \in \torus^d$).
We also introduce the space $H^{1,\vect q}(\mu)$ of functions in $L^2(\mu)$ with their $\vect q$-gradient also in $L^2(\mu)$,
and the associated norm $\norm{\dummy}[1,\vect q] = \norm{\dummy} + \norm{\grad_{\vect q} \dummy}$.
\begin{proposition}
    \label{proposition:semigroup_meanzero_observable}
    Assume that $f \in H^{1,\vect q}(\mu)$ and $h \in H^{1,\vect q}(\mu)$ are smooth functions in~$\Pi_{\vect p}^\perp \lp{2}{\mu}$.
    Then there exist positive constants $A$ and $a$, independent of $f$ and $h$, such that
    \begin{equation}
        \label{eq:optimal_decay_correlation}
        \forall \gamma \geq 1, \quad
        \forall t \geq 0, \qquad
        \abs{\ip{\e^{t \mathcal L}f}{h}}
        \leq A \norm{f}[1,\vect q]  \norm{h}[1,\vect q] \left( \gamma^{-2} \e^{- a \gamma^{-1} t} + \e^{- a \gamma t} \right).
    \end{equation}
\end{proposition}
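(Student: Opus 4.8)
The plan is to decompose the semigroup action using the exponential decay estimate~\eqref{eq:decay_semigroup_general} together with the structure of the generator $\mathcal{L} = \mathcal{L}_{\rm Ham} + \gamma \mathcal{L}_{\rm FD}$, exploiting the fact that both $f$ and $h$ lie in $\Pi_{\vect p}^\perp L^2(\mu)$, i.e.\ they average to zero in $\vect p$. The heuristic from~\cref{example:quadratic} is that on the ``fast'' momentum directions the semigroup decays like $\e^{-a\gamma t}$ (the Ornstein--Uhlenbeck relaxation in $\vect p$ governed by $\gamma \mathcal{L}_{\rm FD}$), whereas the slow part, which scales as $\gamma^{-2}$ in amplitude, decays at rate $a\gamma^{-1}$. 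To make this rigorous, I would split the time interval at, say, $t = \gamma$ (the crossover time scale): for $t \le \gamma$ one can be crude, while for $t \ge \gamma$ one must extract the $\gamma^{-2}$ prefactor.

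First, for the regime $t \ge \gamma$ (equivalently $\gamma^{-1} t \ge 1$), the idea is to write $\e^{t\mathcal{L}} = \e^{(t/2)\mathcal{L}} \e^{(t/2)\mathcal{L}}$ and to interpose projections: $\ip{\e^{t\mathcal{L}} f}{h} = \ip{\e^{(t/2)\mathcal{L}} f}{\e^{(t/2)\mathcal{L}^*} h}$, where $\mathcal{L}^*$ is the $L^2(\mu)$-adjoint (the generator of the time-reversed dynamics, which has the same structure). The key algebraic input is that $\mathcal{L}_{\rm FD}$ maps into $\Pi_{\vect p}^\perp L^2(\mu)$ and, restricted there, has spectral gap of order one, so $\gamma \mathcal{L}_{\rm FD}$ contributes decay at rate $\propto \gamma$ on the $\vect p$-dependent part. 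Concretely, I expect to use a commutator/bootstrap argument: since $\Pi_{\vect p} f = 0$, one shows that $\e^{s\mathcal{L}} f$ can be written, up to terms decaying like $\e^{-a\gamma s}$, as something of size $O(\gamma^{-2})$ in the slowly-decaying subspace — this is precisely the hypocoercivity mechanism, where the $\vect q$-gradient of $f$ feeds the slow mode through $\mathcal{L}_{\rm Ham} = \vect p \cdot \grad_{\vect q} - \grad V \cdot \grad_{\vect p}$ with one power of $\gamma^{-1}$ from each of the two ``transfer'' steps (into $\vect q$ and back). The norms $\norm{f}[1,\vect q]$ and $\norm{h}[1,\vect q]$ appear exactly because this transfer requires controlling $\grad_{\vect q} f$ and $\grad_{\vect q} h$.

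The cleanest route is probably to work with a modified (hypocoercive) norm on $L^2_0(\mu)$ of the form $\norm{u}^2 + b\gamma^{-1}\ip{\grad_{\vect p} u}{\grad_{\vect q} u} + c\gamma^{-2}\norm{\grad_{\vect q} u}^2$ or similar, adapted so that (i) it is equivalent to the $H^{1,\vect q}$ norm with $\gamma$-dependent constants that one tracks carefully, and (ii) it is contracted by $\e^{t\mathcal{L}}$ at rate $a\min\{\gamma,\gamma^{-1}\}$. Applying such an estimate at time $t/2$ starting from $f$, and noting that the projection of $\e^{(t/2)\mathcal{L}} f$ onto the $\Pi_{\vect p}$-subspace (functions of $\vect q$ only) must carry the extra $\gamma^{-1}$ factors because the initial datum has zero $\vect p$-average, yields a bound of the form $\norm{\Pi_{\vect p}\e^{(t/2)\mathcal{L}} f} \lesssim \gamma^{-1}\norm{f}[1,\vect q]\,\e^{-a\gamma^{-1} t}$ and $\norm{\Pi_{\vect p}^\perp \e^{(t/2)\mathcal{L}} f} \lesssim \norm{f}[1,\vect q]\,\e^{-a\gamma t/2}$ for $\gamma \ge 1$. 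Pairing the two pieces of $\e^{(t/2)\mathcal{L}} f$ against the corresponding two pieces of $\e^{(t/2)\mathcal{L}^*} h$ and using that the cross terms involve at least one factor $\gamma^{-1}$ gives, after summing, the claimed $\gamma^{-2}\e^{-a\gamma^{-1}t} + \e^{-a\gamma t}$; for $t \le \gamma$ one checks directly that $\e^{-a\gamma t} \ge \e^{-a\gamma\cdot\gamma} $ is not small but the bound is then dominated by the crude Cauchy--Schwarz estimate $\abs{\ip{\e^{t\mathcal{L}}f}{h}} \le \norm{f}\norm{h}$, which is absorbed into the right-hand side since for $\gamma \ge 1$ and $t \le \gamma$ one has $\gamma^{-2}\e^{-a\gamma^{-1}t}$ bounded below by a constant multiple of $\gamma^{-2}$ — so actually one needs $\gamma^{-2}$ times something $O(1)$, which forces the crude bound to be used only where $1 \lesssim \gamma^{-2}\e^{-a\gamma^{-1}t}+\e^{-a\gamma t}$, true precisely for $t \lesssim \gamma\log\gamma$, requiring a small additional argument near $t \sim \gamma$.

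The main obstacle I anticipate is making the extraction of the $\gamma^{-2}$ prefactor on the slow mode fully rigorous and uniform in $\gamma \ge 1$: the decay estimate~\eqref{eq:decay_semigroup_general} is quantitative in the rate but not in how the ``slow'' component's \emph{amplitude} depends on $\gamma$ when the initial datum is $\vect p$-centered. Bridging that gap — essentially re-running the hypocoercivity argument while bookkeeping the $\gamma$-powers that appear in the off-diagonal entries of the modified norm and in the commutators $[\Pi_{\vect p}, \mathcal{L}_{\rm Ham}]$, $[\grad_{\vect q}, \mathcal{L}]$, $[\grad_{\vect p}, \mathcal{L}]$ — is where the real work lies; the $H^1$-type computations are otherwise routine. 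A secondary technical point is ensuring the $H^{1,\vect q}$ norms (rather than full $H^1$) suffice, which relies on the specific algebraic fact that $\grad_{\vect q}$ commutes with $\mathcal{L}_{\rm FD}$ and that the only $\vect p$-derivatives entering the bad commutator are controlled through the already-present $\grad_{\vect q}$-terms after one integration by parts against the Gaussian $\kappa$.
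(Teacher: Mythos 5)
There is a genuine gap, and the route you sketch is also different from the paper's.

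The paper does not use a modified hypocoercive norm tuned to extract $\gamma$-powers. It works perturbatively around the pure fluctuation--dissipation flow: it writes, by Duhamel,
\begin{equation*}
    \e^{t\mathcal L} f = \e^{t\gamma\mathcal L_{\rm FD}} f + \int_0^t \e^{(t-s)\mathcal L}\Bigl(\mathcal L_{\rm Ham}\,\e^{s\gamma\mathcal L_{\rm FD}} f\Bigr)\, \d s,
\end{equation*}
notes that the first term decays at rate $\gamma$ because $\Pi_{\vect p} f = 0$, and that $\mathcal L_{\rm Ham}\e^{s\gamma\mathcal L_{\rm FD}}f$ is bounded by $\norm{f}[1,\vect q]$ times a rapidly decaying, $s$-integrable function (using the parabolic smoothing estimate~\eqref{lemma:elliptic_reg} and the decay~\eqref{lemma:overdamped_langevin_decay_derivatives}); combined with the crude~\eqref{eq:decay_semigroup_general} on $\e^{(t-s)\mathcal L}$ this yields the one-derivative estimate~\eqref{eq:better_decay_estimate} of \cref{lemma:initial_lemma}, with a single factor $\gamma^{-1}$ on the slow exponential. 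The proposition then follows by pairing the \emph{same} Duhamel identity against $h$ and invoking~\cref{lemma:initial_lemma} for $\e^{(t-s)\mathcal L^*}h$, which produces the second $\gamma^{-1}$. No modified norm, no slow/fast projection of $\e^{(t/2)\mathcal L}f$ is ever needed.

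Concerning the gaps in your plan: the crucial step, ``re-running the hypocoercivity argument while bookkeeping the $\gamma$-powers,'' is exactly the theorem to be proved and is left as a black box, which you yourself flag as ``the real work.'' Beyond that, the intermediate splitting you posit is not correct: the bound $\norm{\Pi_{\vect p}^\perp\e^{(t/2)\mathcal L} f}\lesssim\norm{f}[1,\vect q]\e^{-a\gamma t/2}$ cannot hold uniformly in $t$, because $\Pi_{\vect p}^\perp u(s)$ is sourced by $\Pi_{\vect p}^\perp\mathcal L_{\rm Ham}\Pi_{\vect p} u(s)$ and so, once the $\mathcal O(\gamma^{-1})$ transient has died, it is slaved to the slow $\Pi_{\vect p}$ component and decays only at rate $\gamma^{-1}$, with an $\mathcal O(\gamma^{-2})$ amplitude --- precisely the content of~\cref{lemma:initial_lemma}. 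Finally, your crude Cauchy--Schwarz fallback covers only the window $t \lesssim \gamma^{-1}$ (since $\norm{f}\norm{h}\lesssim\norm{f}[1,\vect q]\norm{h}[1,\vect q]\bigl(\gamma^{-2}\e^{-a\gamma^{-1}t}+\e^{-a\gamma t}\bigr)$ requires $\e^{-a\gamma t}\gtrsim 1$), not $t\lesssim\gamma\log\gamma$ as you state, so an intermediate regime is left uncovered by the two pieces of your argument.
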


This result, proved in \cref{sec:auxiliary_technical_results},
enables to show the following bound on the bias of~$u(T)$,
which is better than~\cref{lemma:easy_lemma} in the large $\gamma$ regime.
Roughly speaking, \cref{proposition:semigroup_meanzero_observable} states that,
when $\gamma \gg 1$ and $f$ and $h$ are mean-zero in $\vect p$,
correlations of the form $\expect \bigl[ f(\vect p_t) h(\vect p_0) \bigr]$ are~$\mathcal O(\gamma^{-2})$ small after a small time of order $\mathcal O(\gamma^{-1} \log \gamma)$,
despite the fact that their asymptotic decay as~$\e^{- a \gamma^{-1} t}$ is slow.
\begin{corollary}
    [Bias of the standard estimator]
    \label{corollary:better_bias}
    There exists a positive constant~$\widehat C$ such that
    \begin{equation}
        \label{eq:bias}
        \forall \gamma \in (0, \infty),
        \quad
        \forall T > 0,
        \qquad
        \abs{\expect \bigl[u(T)\bigr] - D_{\vect e}^{\gamma}}
        \leq
        \frac{\widehat C \max \{\gamma^{-2}, 1\}}{\beta T}.
    \end{equation}
\end{corollary}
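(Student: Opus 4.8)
The plan is to revisit the bias term isolated in the proof of \cref{lemma:easy_lemma}, namely
\[
    \mathcal R(T) := \int_{0}^{\infty} \ip{\e^{\theta \mathcal L} (\vect e^\t \vect p)}{\vect e^\t \vect p} \min\left\{1, \frac{\theta}{T}\right\} \, \d \theta,
\]
which satisfies $\abs{\expect[u(T)] - D_{\vect e}^{\gamma}} = \abs{\mathcal R(T)}$, and to estimate it separately for $\gamma \in (0,1)$ and for $\gamma \geq 1$. In the first regime there is nothing new to do: since $\max\{\gamma^{-2}, \gamma^2\} = \gamma^{-2}$ there, the bound~\eqref{eq:bias_simple} already gives $\abs{\mathcal R(T)} \leq C \gamma^{-2} (\beta T)^{-1} = C \max\{\gamma^{-2},1\} (\beta T)^{-1}$.

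For $\gamma \geq 1$, the gain comes from discarding the crude estimate~\eqref{eq:initial_bound_velocity_autocorrelation} of the velocity autocorrelation in favour of \cref{proposition:semigroup_meanzero_observable}, applied with $f = h = \vect e^\t \vect p$. That function is admissible: it is smooth, it belongs to $\Pi_{\vect p}^\perp L^2(\mu)$ because it is linear — hence odd — in $\vect p$, and $\grad_{\vect q}(\vect e^\t \vect p) = 0$, so that $\norm{\vect e^\t \vect p}[1,\vect q] = \norm{\vect e^\t \vect p} = \beta^{-1/2}$ by the normalization of $\kappa$. \Cref{proposition:semigroup_meanzero_observable} then yields, for all $\theta \geq 0$,
\[
    \abs{\ip{\e^{\theta \mathcal L} (\vect e^\t \vect p)}{\vect e^\t \vect p}} \leq A \beta^{-1} \left( \gamma^{-2} \e^{- a \gamma^{-1} \theta} + \e^{- a \gamma \theta} \right).
\]

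Inserting this into $\mathcal R(T)$ and invoking the elementary identity~\eqref{eq:petit_calcul} termwise, with $\lambda = a\gamma^{-1}$ and then $\lambda = a\gamma$, the contribution of the slow mode is at most $A \beta^{-1} \gamma^{-2} (a\gamma^{-1})^{-2} T^{-1} = A a^{-2} (\beta T)^{-1}$ — here the factor $\gamma^2$ produced by the integral exactly compensates the prefactor $\gamma^{-2}$ — and the contribution of the fast mode is at most $A \beta^{-1} (a\gamma)^{-2} T^{-1} \leq A a^{-2} (\beta T)^{-1}$ since $\gamma \geq 1$. Hence $\abs{\mathcal R(T)} \leq 2 A a^{-2} (\beta T)^{-1}$ when $\gamma \geq 1$, and combining with the first regime proves~\eqref{eq:bias} with $\widehat C = \max\{C, 2 A a^{-2}\}$.

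The only substantive ingredient is \cref{proposition:semigroup_meanzero_observable} itself, whose proof is deferred to \cref{sec:auxiliary_technical_results}; the delicate point there is to isolate, within the decay of $p$-dependent correlations when $\gamma \gg 1$, a slowly decaying but $\mathcal O(\gamma^{-2})$-small component from a rapidly decaying $\mathcal O(1)$ component. Granting that proposition, the remainder of the argument is the routine bookkeeping outlined above, and I do not expect any further obstacle.
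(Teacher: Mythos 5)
Your proof is correct and takes essentially the same route as the paper's: apply \cref{proposition:semigroup_meanzero_observable} with $f = h = \vect e^\t \vect p$ and integrate the resulting two-exponential bound against the weight $\min\{1,\theta/T\}$ via~\eqref{eq:petit_calcul}. You are in fact slightly more careful than the paper in one respect: since \cref{proposition:semigroup_meanzero_observable} is stated only for $\gamma \geq 1$, you explicitly supply the case $\gamma < 1$ from \cref{lemma:easy_lemma}, whereas the paper's proof invokes the proposition without comment and leaves that small-$\gamma$ regime implicit.
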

\begin{proof}
    Applying \cref{proposition:semigroup_meanzero_observable} with $f(\vect q, \vect p) = h(\vect q, \vect p) = \vect e^\t \vect p$,
    and recalling that the bias coincides with the second term on the right-hand side of~\eqref{eq:bias_without_control},
    we obtain
    \begin{align}
        \label{eq:refined_bound_for_u}
        \abs{\expect \bigl[u(T)\bigr] - D^{\gamma}_{\vect e}}
        &\leq \frac{A}{\beta} \int_{0}^{\infty} \left( \gamma^{-2} \e^{- a \gamma^{-1} \theta} + \e^{-a  \gamma \theta} \right)  \min \left\{1,  \frac{\theta}{T} \right\} \, \d \theta \\
        &\leq \frac{A}{\beta T} \int_{0}^{\infty} \left( \gamma^{-2} \e^{- a \gamma^{-1} \theta} + \e^{-a  \gamma \theta} \right)  \theta \, \d \theta
        \leq \frac{A}{\beta a^2 T} \left(1 + \frac{1}{\gamma^2} \right),
    \end{align}
    which directly yields the result.
\end{proof}
The estimate~\eqref{eq:bias} shows that the relative bias in fact scales as~$\max\{\gamma^{-1}, \gamma\} T^{-1}$,
and so it is sufficient to take~$T \propto \gamma$ in order to control the bias in the overdamped limit.

\begin{remark}
The case where~$V(\vect q) = 0$ is particular,
in that the correlation $\ip{\e^{\theta \mathcal L} (\vect e^\t \vect p)}{\vect e^\t \vect p}$ decays as $\e^{-\gamma t}$ with a prefactor independent of $\gamma$ in this setting.
Consequently, the bias of~$u(T)$ scales as $(\gamma T)^{-1}$ in both the underdamped and the overdamped regimes,
as observed in~\cref{example:constant}.
\end{remark}

\subsubsection{Bias of the improved estimator}%
\label{ssub:bias_of_the_improved_estimator}
We now obtain a bound on the bias of the improved estimator $v(T)$.
The following result can be viewed as a generalization of~\cref{lemma:easy_lemma},
which is recovered in the particular case when $\psi_{\vect e} = 0$.
\begin{proposition}
    [Bias of the estimator]
    \label{lemma:bias_improved}
    Assume that $\mathcal L \psi_{\vect e} \in \lp{2}{\mu}$.
    With the same notation as in~\eqref{eq:decay_semigroup_general},
    it holds that
    \begin{align}
        \label{eq:basic_bound_bias}
        \forall \gamma \in (0, \infty), \qquad
        \abs{\expect \bigl[ v(T) \bigr] - D^{\gamma}_{\vect e}}
                &\leq  \frac{L \max\{\gamma^2, \gamma^{-2}\}}{T \ell^2 }  \,  \norm{\vect e^\t \vect p + \mathcal L \psi_{\vect e}}  \left(\beta^{-1/2} + \norm{\mathcal L \psi_{\vect e}} \right).
    \end{align}
\end{proposition}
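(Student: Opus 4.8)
The plan is to mimic the computation that led to~\eqref{eq:bias_without_control}, but now working with the ``corrected'' increment obtained by subtracting the control variate. The key observation is that, by~\eqref{eq:ito_for_phi} and the definition~\eqref{eq:definition_control_variate} of $\xi_T$, the difference
\[
    \vect e^\t(\vect q_T - \vect q_0) - \xi_T
    = (\phi_{\vect e} - \psi_{\vect e})(\vect q_0, \vect p_0) - (\phi_{\vect e} - \psi_{\vect e})(\vect q_T, \vect p_T)
    + \sqrt{2\gamma\beta^{-1}} \int_0^T \grad_{\vect p}(\phi_{\vect e} - \psi_{\vect e})(\vect q_t, \vect p_t) \cdot \d \vect w_t
\]
is exactly the It\^o decomposition~\eqref{eq:ito_for_phi} associated with $\phi_{\vect e} - \psi_{\vect e}$. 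Equivalently, when $\psi_{\vect e}$ is smooth, $\vect e^\t(\vect q_T - \vect q_0) - \xi_T = \int_0^T (\mathcal L \psi_{\vect e} + \vect e^\t \vect p_t)(\vect q_t,\vect p_t)\,\d t = -\int_0^T \bigl(\mathcal L(\phi_{\vect e} - \psi_{\vect e})\bigr)(\vect q_t, \vect p_t)\,\d t$, since $-\mathcal L \phi_{\vect e} = \vect e^\t \vect p$. Write $g := \vect e^\t \vect p + \mathcal L \psi_{\vect e} = -\mathcal L(\phi_{\vect e} - \psi_{\vect e})$, which belongs to $\lp{2}{\mu}$ by hypothesis (and has mean zero, since both $\vect e^\t \vect p$ and $\mathcal L \psi_{\vect e}$ do). Then $\vect e^\t(\vect q_T - \vect q_0) = \xi_T - \int_0^T g(\vect q_t, \vect p_t)\,\d t$.

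Next I would expand the square. Using this identity,
\[
    \abs{\vect e^\t(\vect q_T - \vect q_0)}^2 = \abs{\xi_T}^2 - 2\,\xi_T \int_0^T g(\vect q_t, \vect p_t)\,\d t + \left(\int_0^T g(\vect q_t, \vect p_t)\,\d t\right)^2.
\]
Dividing by $2T$, taking expectations, and recalling the definition~\eqref{eq:improved_estimator} of $v(T)$ together with \cref{lemma:explicit_limit}, the $\abs{\xi_T}^2/(2T)$ terms combine into $\expect[v(T)] - D^\gamma_{\vect e} = \expect\bigl[\abs{\xi_T}^2/(2T)\bigr] - d[\psi_{\vect e}] \text{-type remainders}$; more carefully, one gets
\[
    \expect[v(T)] = \expect\left[\frac{\abs{\vect e^\t(\vect q_T - \vect q_0)}^2}{2T}\right] + \lim_{S\to\infty}\expect\left[\frac{\abs{\xi_S}^2}{2S}\right] - \expect\left[\frac{\abs{\xi_T}^2}{2T}\right],
\]
and then I substitute the square expansion above to express $\expect[v(T)] - D^\gamma_{\vect e}$ purely in terms of correlations involving $g$. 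The cross term $\expect\bigl[\xi_T \int_0^T g\,\d t\bigr]$ and the quadratic term $\expect\bigl[(\int_0^T g\,\d t)^2\bigr]$ are both controlled by stationary time-correlation integrals $\int_0^T \abs{\ip{\e^{\theta\mathcal L} g}{\cdot}}\,\d\theta$, exactly as in the proof of \cref{lemma:easy_lemma}: one writes $\expect\bigl[(\int_0^T g\,\d t)^2\bigr] = 2\int_0^T (T-\theta)\ip{\e^{\theta\mathcal L}g}{g}\,\d\theta$, bounds $\abs{\ip{\e^{\theta\mathcal L}g}{g}} \le L \e^{-\ell\theta\min\{\gamma,\gamma^{-1}\}} \norm{g}^2$ via~\eqref{eq:decay_semigroup_general}, and uses~\eqref{eq:petit_calcul}. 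For the term $\xi_T = \vect e^\t(\vect q_T-\vect q_0) + \int_0^T g\,\d t$ one splits accordingly; the piece involving $\vect e^\t(\vect q_T - \vect q_0)$ pairs a martingale-type increment against $\int_0^T g\,\d t$ and is again handled by Cauchy--Schwarz plus the same correlation bound, picking up the factor $\norm{g}(\beta^{-1/2} + \norm{\mathcal L\psi_{\vect e}})$ since $\norm{\vect e^\t \vect p} = \sqrt{d/\beta}$ (or $\beta^{-1/2}$ in the relevant normalization) and $\norm{\xi_T\text{-increment}}$ contributes $\norm{\vect e^\t\vect p} + \norm{\mathcal L\psi_{\vect e}} \lesssim \beta^{-1/2} + \norm{\mathcal L\psi_{\vect e}}$.

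The main obstacle I anticipate is the bookkeeping that produces precisely the asymmetric product $\norm{\vect e^\t\vect p + \mathcal L\psi_{\vect e}}\,(\beta^{-1/2} + \norm{\mathcal L\psi_{\vect e}})$ rather than a crude $(\norm{\vect e^\t\vect p} + \norm{\mathcal L\psi_{\vect e}})^2$: one factor of $g = \vect e^\t\vect p + \mathcal L\psi_{\vect e}$ must be kept intact (this is what makes the bound vanish when $\psi_{\vect e} = \phi_{\vect e}$), while the other factor is allowed to be bounded by the triangle inequality $\norm{\vect e^\t\vect p} + \norm{\mathcal L\psi_{\vect e}}$. Achieving this requires organizing the cross terms so that every correlation integral has at least one slot occupied by $g$ itself; the symmetry $\ip{\e^{\theta\mathcal L}g}{f} \le \norm{\e^{\theta\mathcal L}g}\norm{f} \le L\e^{-\ell\theta\min\{\gamma,\gamma^{-1}\}}\norm{g}\norm{f}$ with $f \in \{\vect e^\t\vect p,\ \xi_T\text{-increment}\}$ then delivers the stated form, and the $\theta$-integration via~\eqref{eq:petit_calcul} with $\lambda = \ell\min\{\gamma,\gamma^{-1}\}$ produces the prefactor $L\max\{\gamma^2,\gamma^{-2}\}/(\ell^2 T)$. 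A minor additional point is checking that the boundary terms $\expect\bigl[(\phi_{\vect e}-\psi_{\vect e})(\vect q_0,\vect p_0)^2\bigr]$ etc.\ are $O(1/T)$ after division by $T$, which is immediate from stationarity as in \cref{lemma:explicit_limit}.
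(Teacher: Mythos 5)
Your plan is correct and follows the same essential route as the paper: reduce the bias to time--correlation integrals of stationary observables, invoke the semigroup decay~\eqref{eq:decay_semigroup_general}, and integrate via~\eqref{eq:petit_calcul}. The one place where the paper is cleaner than your sketch is precisely the ``bookkeeping obstacle'' you flag at the end. Rather than expanding $\abs{\vect e^\t(\vect q_T-\vect q_0)}^2 - \abs{\xi_T}^2$ into the three pieces $2\xi_T \int_0^T g\,\d t$ and $(\int_0^T g\,\d t)^2$ (which yields three correlation integrals, each carrying a factor of $g$, and a slightly worse constant of the form $\norm{g}(\norm{g}+2\norm{\mathcal L\psi_{\vect e}})$), the paper first writes
\begin{equation*}
    \expect\bigl[v(T)\bigr] - D^\gamma_{\vect e}
    = -\int_0^\infty \min\Bigl\{1,\tfrac{\theta}{T}\Bigr\}\Bigl(\ip{\e^{\theta\mathcal L}(\vect e^\t\vect p)}{\vect e^\t\vect p} - \ip{\e^{\theta\mathcal L}\mathcal L\psi_{\vect e}}{\mathcal L\psi_{\vect e}}\Bigr)\,\d\theta
\end{equation*}
and then uses, with $a = \vect e^\t\vect p$ and $b = \mathcal L\psi_{\vect e}$, the exact identity
\begin{equation*}
    \ip{\e^{\theta\mathcal L}a}{a} - \ip{\e^{\theta\mathcal L}b}{b}
    = \ip{\e^{\theta\mathcal L}a}{a+b} - \ip{\e^{\theta\mathcal L}(a+b)}{b}.
\end{equation*}
This places $g=a+b$ in one slot of each of only two correlation terms, so Cauchy--Schwarz and~\eqref{eq:decay_semigroup_general} deliver the asymmetric factor $\norm{g}\bigl(\beta^{-1/2}+\norm{\mathcal L\psi_{\vect e}}\bigr)$ in a single stroke -- no case analysis of ``which slot holds $g$'' is needed. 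Your route is equivalent but more laborious, and $\mathcal L$ being non-self-adjoint makes the cross term $\expect[\xi_T\int g]$ split into two directional correlations, which is an extra bookkeeping chore the identity above avoids. Also note a harmless sign slip: since $\vect e^\t(\vect q_T-\vect q_0)-\xi_T = \int_0^T g\,\d t$ (as you derive in the preceding line), you should have $\vect e^\t(\vect q_T-\vect q_0) = \xi_T + \int_0^T g\,\d t$, not $\xi_T - \int_0^T g\,\d t$; the subsequent square expansion then has $+2\xi_T\int g$ rather than $-2\xi_T\int g$. This does not affect the final bound.
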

Note that the right-hand side of~\eqref{eq:basic_bound_bias} is small when $\mathcal L \psi_{\vect e} \approx \mathcal L \phi_{\vect e} = - \vect e^\t p$.
\begin{proof}
    Using Itô's formula for $\psi_{\vect e}$,
    we have
    \[
        \psi_{\vect e}(\vect q_T, \vect p_T) - \psi_{\vect e}(\vect q_0, \vect p_0)
        = \int_{0}^{T} (\mathcal L \psi_{\vect e}) (\vect q_t, \vect p_t) \, \d t
        + \sqrt{2 \gamma \beta^{-1}} \int_{0}^{T} \grad_{\vect p} \psi_{\vect e} (\vect q_t, \vect p_t) \cdot \d \vect w_t,
    \]
    and employing the same reasoning as in~\eqref{eq:bias_without_control}, we obtain
    \begin{align*}
        \expect \bigl[v(T)\bigr]
    &= d[\psi_{\vect e}] +  \frac{1}{2T} \, \expect \biggl[ \bigl\lvert \vect e^\t (\vect q_T - \vect q_0) \bigr\rvert^2 - \biggl| \int_0^T {\mathcal L \psi_{\vect e}}(\vect q_t, \vect p_t) \, \d t \biggr|^2 \biggr] \\
    &= d[\psi_{\vect e}] +  \int_{0}^{T} \Bigl( \ip{\e^{\theta \mathcal L}\bigl(\vect e^\t \vect p\bigr)}{\vect e^\t \vect p} - \ip{\e^{\theta \mathcal L} \mathcal L \psi_{\vect e}}{\mathcal L \psi_{\vect e}} \Bigr) \left( 1 - \frac{\theta}{T} \right) \d \theta \\
    &= D^{\gamma}_{\vect e} - \int_{0}^{\infty} \min\left\{1, \frac{\theta}{T}\right\} \Bigl( \ip{\e^{\theta \mathcal L}\bigl(\vect e^\t \vect p\bigr)}{\vect e^\t  \vect p} - \ip{\e^{\theta \mathcal L} \mathcal L \psi_{\vect e}}{\mathcal L \psi_{\vect e}} \Bigr) \, \d \theta.
    \end{align*}
    we denote the $L^2(\mu)$ of the generator $\mathcal L$ by~
    \(
        \mathcal L^* = - \mathcal L_{\rm Ham} + \gamma \mathcal L_{\rm FD}
    \).
    We have
    \begin{align*}
         \left\lvert \ip{\e^{t \mathcal L}(\vect e^\t \vect p)}{\vect e^\t \vect p} - \ip{\e^{t \mathcal L} \mathcal L \psi_{\vect e}}{\mathcal L \psi_{\vect e}} \right\rvert
        &= \abs{\ip{\e^{t \mathcal L} (\vect e^\t \vect p)}{\vect e^\t \vect p + \mathcal L \psi_{\vect e}}- \ip{\e^{t \mathcal L} \left(\vect e^\t \vect p + \mathcal L \psi_{\vect e}\right)}{\mathcal  L\psi_{\vect e}}} \\
    &\qquad \leq \norm{\e^{t \mathcal L}}[\mathcal B\left(L^2_0(\mu) \right)] \norm*{\vect e^\t \vect p + \mathcal L \psi_{\vect e}}
     \left(\norm*{\vect e^\t \vect p} + \norm{\mathcal L \psi_{\vect e}} \right) \\
    &\qquad \leq L \e^{- \ell \min\{\gamma, \gamma^{-1}\} t} \norm*{\vect e^\t \vect p + \mathcal L\psi_{\vect e}}  \left(\beta^{-1/2} + \norm{\mathcal L \psi_{\vect e}} \right),
    \end{align*}
    where $L$ and $\ell$ are the same constants as in~\eqref{eq:decay_semigroup_general}.
    We finally obtain~\eqref{eq:basic_bound_bias} in view of~\eqref{eq:petit_calcul}.
\end{proof}

\Cref{lemma:bias_improved} suffers from the same shortcoming as~\cref{lemma:easy_lemma}:
it is not optimal in the large $\gamma$ regime.
Employing~\cref{proposition:semigroup_meanzero_observable} in a similar manner as in the proof of~\cref{corollary:better_bias},
we prove in~\cref{sec:proof_technical_result} that,
if $\mathcal L \psi_{\vect e} \in H^{1,\vect q}(\mu)$,
then there is $C$ independent of $\psi_{\vect e}$ such that
\begin{align}
    \notag
    \forall \gamma \geq 1, \quad
    \forall T > 0, \qquad
    \abs{\expect \bigl[ v(T) \bigr] - D_{\vect e}^{\gamma}}
        &\leq C T^{-1}
        \norm*{\vect e^\t \vect p +  \mathcal L \psi_{\vect e}}[1,\vect q] \, \bigl(\beta^{-1/2} + \norm{\mathcal L \psi_{\vect e}}[1,\vect q] \bigr) \\
        \label{eq:refined_bound}
        &\quad + C T^{-1} \gamma^2 \norm{\Pi_{\vect p} \mathcal L \psi_{\vect e}} \norm{\mathcal L \psi_{\vect e}},
\end{align}
This bound is not as satisfying as~\cref{corollary:better_bias},
because a $\gamma^2$ factor remains in the second term on the right-hand side,
although the prefactor $\norm{\Pi_{\vect p} \mathcal  L \psi_{\vect e}}$ is expected to be small as~$\norm{\Pi_{\vect p} \mathcal L \psi_{\vect e}} = \norm{\Pi_{\vect p} (\vect e^\t \vect p + \mathcal L \psi_{\vect e})} \leq \norm{\vect e^\t \vect p + \mathcal L \psi_{\vect e}}$.
The bound~\eqref{eq:refined_bound} is therefore an improvement over~\eqref{eq:basic_bound_bias} for large $\gamma$.
However,
unless $\Pi_{\vect p} \mathcal L \psi_{\vect e} = \Pi_{\vect p} \mathcal L_{\rm ham} \psi_{\vect e} = \mathcal O(\gamma^{-2})$,
the dependence on $\gamma$ of the bias in~\eqref{eq:refined_bound} is worse in the limit $\gamma \to \infty$
than that of the simple estimator $u(T)$, see~\eqref{eq:bias}.
It is then not clear that employing a control variate is useful in this limit.
Since our focus in this work is on the underdamped limit $\gamma \to 0$,
and since the overdamped limit $\gamma \to \infty$ for one or two-dimensional systems is more easily studied numerically through deterministic methods anyway,
we do not further investigate this issue.

\subsection{Variance of the estimators}%
\label{sub:variance}

In this section, we obtain bounds on the variance of the estimator $v(T)$,
first for finite $T$ and then asymptotically in the limit as~$T \to \infty$.
Since $v(T)$ and $u(T)$ coincide when $\psi_{\vect e} = 0$,
bounds on the variance of $u(T)$ can be recovered by letting $\psi_{\vect e} = 0$ in the estimates below.

Since it is difficult to obtain bounds that scale well both as $\gamma \to 0$ and as $\gamma \to \infty$,
we aim here at obtaining bounds with a good scaling only in the underdamped regime~$\gamma \to 0$,
as this is the regime where our approach is of practical interest.

\begin{proposition}
    \label{proposition:variance}
    There exists $C > 0$ independent of $\gamma$, $T$ and $\psi_{\vect e}$ such that
    \begin{equation}
        \label{eq:statement_variance}
        \begin{aligned}[b]
            \var \bigl[v(T)\bigr]
            \leq
            C &\left( T^{-1} \norm{\phi_{\vect e} - \psi_{\vect e}}[L^4(\mu)]^2  + \gamma \norm{\grad_{\vect p} \phi_{\vect e} - \grad_{\vect p} \psi_{\vect e}}[L^4(\mu)]^2 \right) \\
              &\quad \times \left( T^{-1} \norm{\phi_{\vect e} + \psi_{\vect e}}[L^4(\mu)]^2  + \gamma \norm{\grad_{\vect p} \phi_{\vect e} + \grad_{\vect p} \psi_{\vect e}}[L^4(\mu)]^2 \right),
        \end{aligned}
    \end{equation}
    provided that all the terms on the right-hand side are finite.
\end{proposition}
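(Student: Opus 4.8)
The plan is to exploit the algebraic identity $|a|^2-|b|^2=(a-b)(a+b)$ for the two scalar random variables $a:=\vect e^\t(\vect q_T-\vect q_0)$ and $b:=\xi_T$, together with the fact that their It\^o representations~\eqref{eq:ito_for_phi} and~\eqref{eq:definition_control_variate} differ only through the replacement of $\phi_{\vect e}$ by $\psi_{\vect e}$. Since $d[\psi_{\vect e}]$ is a deterministic constant it contributes nothing to the variance, so using $\var[X]\le\expect[X^2]$ and then the Cauchy--Schwarz inequality,
\[
    \var\bigl[v(T)\bigr]=\var\left[\frac{(a-b)(a+b)}{2T}\right]\le\frac{1}{4T^2}\,\expect\bigl[(a-b)^2(a+b)^2\bigr]\le\frac{1}{4T^2}\sqrt{\expect\bigl[(a-b)^4\bigr]\,\expect\bigl[(a+b)^4\bigr]}.
\]
Subtracting~\eqref{eq:definition_control_variate} from~\eqref{eq:ito_for_phi}, and also adding the two, shows that $a\mp b=\Theta^{\mp}+M^{\mp}$ where, writing $g^{\mp}:=\phi_{\vect e}\mp\psi_{\vect e}$,
\[
    \Theta^{\mp}:=g^{\mp}(\vect q_0,\vect p_0)-g^{\mp}(\vect q_T,\vect p_T),\qquad M^{\mp}:=\sqrt{2\gamma\beta^{-1}}\int_0^T\grad_{\vect p}g^{\mp}(\vect q_t,\vect p_t)\cdot\d\vect w_t.
\]

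The core of the argument is then the fourth-moment bound
\[
    \expect\bigl[(\Theta+M)^4\bigr]\le C\Bigl(\norm{g}[L^4(\mu)]^4+\gamma^2T^2\norm{\grad_{\vect p}g}[L^4(\mu)]^4\Bigr),
\]
valid for any $g$ with $\Theta=g(\vect q_0,\vect p_0)-g(\vect q_T,\vect p_T)$ and $M=\sqrt{2\gamma\beta^{-1}}\int_0^T\grad_{\vect p}g(\vect q_t,\vect p_t)\cdot\d\vect w_t$. First I would split $\expect[(\Theta+M)^4]\le 8\expect[\Theta^4]+8\expect[M^4]$. For the boundary term, the stationarity assumption gives $(\vect q_0,\vect p_0)\sim\mu$ and $(\vect q_T,\vect p_T)\sim\mu$, whence $\expect[\Theta^4]\le 8\bigl(\expect[|g(\vect q_0,\vect p_0)|^4]+\expect[|g(\vect q_T,\vect p_T)|^4]\bigr)=16\,\norm{g}[L^4(\mu)]^4$. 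For the martingale term, I would invoke the Burkholder--Davis--Gundy inequality, which bounds $\expect[M^4]$ by a constant times $\expect[\langle M\rangle_T^2]$ with $\langle M\rangle_T=2\gamma\beta^{-1}\int_0^T|\grad_{\vect p}g(\vect q_t,\vect p_t)|^2\,\d t$; Jensen's inequality in time followed by stationarity then gives $\expect[\langle M\rangle_T^2]\le(2\gamma\beta^{-1})^2\,T\int_0^T\expect[|\grad_{\vect p}g(\vect q_t,\vect p_t)|^4]\,\d t=(2\gamma\beta^{-1})^2\,T^2\,\norm{\grad_{\vect p}g}[L^4(\mu)]^4$.

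Combining these, and using $\sqrt{x^2+y^2}\le x+y$ for $x,y\ge 0$, I would obtain $\sqrt{\expect[(a\mp b)^4]}\le C\bigl(\norm{g^{\mp}}[L^4(\mu)]^2+\gamma T\,\norm{\grad_{\vect p}g^{\mp}}[L^4(\mu)]^2\bigr)$; substituting into the Cauchy--Schwarz bound above and pulling the factor $T^2$ out of the product yields~\eqref{eq:statement_variance}. The finiteness assumed in the statement is exactly what is needed to legitimise the interchange of expectation and time integration and the use of It\^o's isometry and the BDG inequality. I expect the only genuinely delicate step to be the fourth-moment bound on $M$: the BDG inequality is what converts the stochastic integral into an $L^4(\mu)$ norm of $\grad_{\vect p}g$, and the reduction of $\expect[\langle M\rangle_T^2]$ to $T^2\norm{\grad_{\vect p}g}[L^4(\mu)]^4$ through Jensen in time and stationarity is the step that fixes the correct powers of $T$ and $\gamma$; letting $\psi_{\vect e}=0$ then recovers a variance bound for $u(T)$ as a special case.
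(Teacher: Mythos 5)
Your proof is correct and follows essentially the same route as the paper: bounding the variance by a second moment centered at $d[\psi_{\vect e}]$, Cauchy--Schwarz on the factorization $|a|^2-|b|^2=(a-b)(a+b)$, and then a fourth-moment estimate for each factor by splitting into a boundary term and a stochastic integral. The only difference is cosmetic — you invoke Burkholder--Davis--Gundy followed by Jensen in time, whereas the paper cites a ready-made moment inequality for It\^o integrals (Mao's Theorem~7.1); the two give the same powers of $\gamma$ and $T$.
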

\begin{remark}
    As already observed in \cref{example:constant},
    the variance does not converge to zero in the limit as $T \to \infty$.
    This asymptotic behavior is in contrast with that of estimators based on linear response,
    but not unlike the behavior of estimators based on the Green-Kubo formula,
    where in fact the variance grows with the integration time~\cite{LMS16}.
    We study more precisely the behavior of the variance in the limit as $T \to \infty$ in \cref{proposition:asymptotic_variance} below.
\end{remark}
\begin{remark}
    In order to assess the quality of the upper bound~\eqref{eq:statement_variance} in the underdamped limit~$\gamma \to 0$,
    let us consider the particular setting where $\psi_{\vect e} = 0$ in one dimension.
    In~\cite[Remark 6.10]{MR2394704},
    it is proves that $\norm{\phi}[\lp{r}{\mu}] = \bigo{\gamma^{-1}}$ as $\gamma \to 0$ for every $r \in [1, \infty)$,
    and it is conjectured that also $\norm{\partial_q \phi}[\lp{r}{\mu}] = \mathcal O(\gamma^{-1})$ in the same limit.
    Assuming that this is true, the estimate~\eqref{eq:statement_variance} gives
    \begin{equation}
        \label{eq:variance_scaling}
        \forall \gamma \in (0, 1), \qquad
        \var \bigl[v(T)\bigr]
        \leq \widetilde C \left( T^{-1} \gamma^{-2} + \gamma^{-1} \right)^2,
    \end{equation}
    for some constant $\widetilde C > 0$.
    For an integration time $T$ scaling as $\gamma^{-1}$,
    which is required in order to control the bias,
    we find from this formula that the variance scales as $\gamma^{-2}$,
    and so the relative standard deviation scales as $\mathcal O(1)$.
    In practice, this means that we can keep the number of Monte Carlo replicas constant as $\gamma \to 0$
    without degrading the relative width of our confidence interval.
\end{remark}
\begin{remark}
    We choose $a = b = 2$ in the following proof,
    and so the norm obtained on the right-hand side of~\eqref{eq:statement_variance} is that of~$L^4(\mu)$.
    Naturally, other choices could have been considered.
\end{remark}
\begin{proof}
    The proof is based on the crude inequality
    \begin{align}
        \notag
        \var \bigl[v(T)\bigr]
        &= \min_{\mathscr V \in \real} \expect \Bigl[\bigl(v(T) - \mathscr V\bigr)^2\Bigr]
        \leq \expect \left[ \bigl( v(T) - d[\psi_{\vect e}] \bigr)^2 \right] \\
        &= \frac{1}{4 T^2} \expect \left[ \left\lvert \vect e^\t (\vect q_T - \vect q_0) - \xi_T \right\rvert^2 \, \left\lvert \vect e^\t (\vect q_T - \vect q_0) + \xi_T \right\rvert^2 \right] \\
        \label{eq:bound_variance_two_factors}
        &\leq \frac{1}{4 T^2}
        {\left(\expect \left[  \left\lvert \vect e^\t(\vect q_T - \vect q_0) - \xi_T \right\rvert^{2a} \right]\right)^{\frac{1}{a}}} \,
        {\left(\expect \left[ \left\lvert \vect e^\t (\vect q_T - \vect q_0) + \xi_T \right\rvert^{2b} \right]\right)^{\frac{1}{b}}},
    \end{align}
    for any $(a, b) \in (1, \infty)^2$ such that $\frac{1}{a} + \frac{1}{b} = 1$.
    We will use the notation
    \[
        I_{\phi} = \sqrt{2 \gamma \beta^{-1}} \int_{0}^{T} \, \grad_{\vect p} \phi_{\vect e} (\vect q_t, \vect p_t) \cdot \d \vect w_t,
        \qquad
        I_{\psi} = \sqrt{2 \gamma \beta^{-1}} \int_{0}^{T} \, \grad_{\vect p} \psi_{\vect e} (\vect q_t, \vect p_t) \cdot \d \vect w_t.
    \]
    Using the short-hand notations $\phi_{s} = \phi_{\vect e}(\vect q_s, \vect p_s)$ and $\psi_{s} = \psi_{\vect e}(\vect q_s, \vect p_s)$,
    we have by~\eqref{eq:ito_for_phi} and~\eqref{eq:definition_control_variate}
    \begin{align*}
        \expect \left[ |\vect e^\t(\vect q_T - \vect q_0) - \xi_T|^{2 a} \right]
        &= \expect \left[ \abs{\phi_{0} - \phi_{T} - \psi_{0} + \psi_{T}  + I_{\phi} - I_{\psi}}^{2 a} \right] \\
        &\leq 3^{2 a-1} \left( \expect \left[ \abs{\phi_T - \psi_T}^{2 a} \right] + \expect \left[ \abs{\phi_0 - \psi_0}^{2 a} \right] + \expect \left[ \abs{I_{\psi} - I_{\phi}}^{2 a} \right] \right),
    \end{align*}
    where we employed the inequality $|x_1 + \dotsb + x_N|^{y} \leq N^{y-1} \left( |x_1|^y + \dotsb + |x_N|^y \right)$ for any $y \geq 1$,
    which follows from the convexity of $x \mapsto \abs{x}^y$.
    The first two terms are equal to $\norm{\phi_{\vect e} - \psi_{\vect e}}_{L^{2 a}(\mu)}^{2 a}$
    given the assumption of stationary initial condition.
    Using a moment inequality for It\^o integrals~\cite[Theorem 7.1]{MR2380366},
    we bound the last term as
    \begin{align*}
        \expect \left[ \abs{I_{\psi} - I_{\phi}}^{2 a} \right]
        &\leq \bigl( a(2 a - 1)\bigr)^a \, \left(2 \gamma \beta^{-1}\right)^a \,  T^{a-1} \,
        \expect \left[ \int_{0}^{T} \abs{\grad_{\vect p} \phi_{\vect e}(\vect q_s, \vect p_s) - \grad_{\vect p} \psi_{\vect e}(\vect q_s, \vect p_s)}^{2a} \, \d s\right] \\
        &= \bigl(a(2a - 1)\bigr)^a \, \left(2 \gamma \beta^{-1}\right)^a \,  T^{a} \, \int_{\torus^d \times \real^d} \abs{\grad_{\vect p} \phi_{\vect e} - \grad_{\vect p} \psi_{\vect e}}^{2a} \d \mu \\
        &= \bigl(a(2a - 1)\bigr)^a \, \left(2 \gamma \beta^{-1}\right)^a \,  T^{a} \, \norm{\grad_{\vect p} \phi_{\vect e} - \grad_{\vect p} \psi_{\vect e}}_{L^{2a}(\mu)}^{2a}.
    \end{align*}
    Likewise, the second factor in~\eqref{eq:bound_variance_two_factors} can be bounded as
    \begin{align*}
        &\expect \left[ |\vect e^\t(\vect q_T - \vect q_0) + \xi_T|^{2b} \right] \\
        &\qquad
        \leq 3^{2b-1} \Bigl( 2 \norm{\phi_{\vect e} + \psi_{\vect e}}[L^{2b}(\mu)]^{2b}
        + \bigl(b(2b - 1)\bigr)^b \, \left(2 \gamma \beta^{-1}\right)^b \, T^{b} \norm{\grad_{\vect p} \phi_{\vect e} + \grad_{\vect p} \psi_{\vect e}}[L^{2b}(\mu)]^{2b} \Bigr).
    \end{align*}
    The statement is then obtained by choosing $a = b = 2$.
\end{proof}

To conclude this section,
we quantify more precisely the asymptotic variance of $v(T)$ in the limit as $T \to \infty$.
\begin{proposition}
    [Asymptotic value of the variance]
    \label{proposition:asymptotic_variance}
    Assume that there exists $\varepsilon > 0$ such that
    \begin{equation}
        \label{eq:regularity_solution}
        \phi_{\vect e} \in L^{4+\varepsilon}(\mu),
        \qquad
        \psi_{\vect e} \in L^{4+\varepsilon}(\mu),
        \qquad
        \grad_{\vect p} \phi_{\vect e} \in L^{4+\varepsilon}(\mu),
        \qquad
        \grad_{\vect p} \psi_{\vect e} \in L^{4+\varepsilon}(\mu).
    \end{equation}
    Then it holds that
    \begin{equation}
        \label{eq:asymptotic_variance}
        \lim_{T \to \infty} \var\bigl[v(T)\bigr] =
        2 \Bigl( d[\phi_{\vect e}]^2 +  d[\psi_{\vect e}]^2 \Bigr) - 4 \gamma^2 \beta^{-2} \left( \int \grad_{\vect p} \phi_{\vect e} \cdot \grad_{\vect p} \psi_{\vect e} \, \d \mu \right)^2.
    \end{equation}
    In particular,
    \begin{align}
        \label{eq:bounds_asymptotic_variance}
        2\bigl( d[\phi_{\vect e}] - d[\psi_{\vect e}] \bigr)^2
        \leq \lim_{T \to \infty} \var \bigl[v(T)\bigr]
        &\leq 4 \gamma \beta^{-1} \norm{\grad_{\vect p} \phi_{\vect e} - \grad_{\vect p} \psi_{\vect e}}^2 (d[\phi_{\vect e}] + d[\psi_{\vect e}]).
    \end{align}
\end{proposition}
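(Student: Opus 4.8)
The plan is to identify the limiting law of $v(T) - d[\psi_{\vect e}]$ as $T \to \infty$ and then to upgrade convergence in distribution to convergence of the first two moments. Since the variance is unaffected by a constant shift, it suffices to study $v(T) - d[\psi_{\vect e}] = \frac{1}{2T}\bigl(\abs{\vect e^\t(\vect q_T - \vect q_0)}^2 - \abs{\xi_T}^2\bigr)$. Writing $X_T := \vect e^\t(\vect q_T - \vect q_0)$, the identities~\eqref{eq:ito_for_phi} and~\eqref{eq:definition_control_variate} read $X_T = \phi_0 - \phi_T + I_\phi$ and $\xi_T = \psi_0 - \psi_T + I_\psi$, with $\phi_s := \phi_{\vect e}(\vect q_s, \vect p_s)$, $\psi_s := \psi_{\vect e}(\vect q_s, \vect p_s)$, and $I_\phi, I_\psi$ the stochastic integrals $\sqrt{2\gamma\beta^{-1}}\int_0^T \grad_{\vect p}\phi_{\vect e}(\vect q_t,\vect p_t)\cdot\d\vect w_t$ and its $\psi_{\vect e}$-analogue. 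Under stationarity, $\phi_0, \phi_T$ (resp.\ $\psi_0, \psi_T$) have the same law and a second moment bounded uniformly in $T$, so $T^{-1/2}(\phi_0 - \phi_T)$ and $T^{-1/2}(\psi_0 - \psi_T)$ tend to $0$ in probability; after the diffusive rescaling only the martingale parts remain.

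For those martingale parts, I would invoke a central limit theorem for the $\real^2$-valued continuous martingale $s \mapsto \bigl(I_\phi(s), I_\psi(s)\bigr)$, whose matrix of quadratic (co)variations at time $T$ is
\[
    2\gamma\beta^{-1}\int_0^T
    \begin{pmatrix}
        \abs{\grad_{\vect p}\phi_{\vect e}}^2 & \grad_{\vect p}\phi_{\vect e}\cdot\grad_{\vect p}\psi_{\vect e} \\[2pt]
        \grad_{\vect p}\phi_{\vect e}\cdot\grad_{\vect p}\psi_{\vect e} & \abs{\grad_{\vect p}\psi_{\vect e}}^2
    \end{pmatrix}(\vect q_t,\vect p_t)\,\d t .
\]
By ergodicity of~\eqref{eq:langevin} with respect to $\mu$, this matrix divided by $T$ converges almost surely, as $T \to \infty$, to
\[
    \Sigma := 2\begin{pmatrix} d[\phi_{\vect e}] & c \\[2pt] c & d[\psi_{\vect e}] \end{pmatrix},
    \qquad c := \gamma\beta^{-1}\int \grad_{\vect p}\phi_{\vect e}\cdot\grad_{\vect p}\psi_{\vect e}\,\d\mu ,
\]
the diagonal entries being obtained as in~\eqref{eq:explicit_limit}. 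The martingale central limit theorem~\cite{MR668684} then gives $T^{-1/2}\bigl(I_\phi, I_\psi\bigr) \Rightarrow (G_\phi, G_\psi) \sim \mathcal N(0, \Sigma)$, and combining this with the vanishing boundary terms via Slutsky's lemma and the continuous mapping theorem yields $v(T) - d[\psi_{\vect e}] \Rightarrow \tfrac12\bigl(G_\phi^2 - G_\psi^2\bigr)$.

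Next I would promote this to convergence of moments. Convergence in distribution does not give convergence of the variance by itself, so one needs uniform integrability of $\bigl\{(v(T) - d[\psi_{\vect e}])^2\bigr\}_{T \geq 1}$, and this is exactly where the hypothesis~\eqref{eq:regularity_solution} enters. Running the moment inequality for Itô integrals invoked in the proof of~\cref{proposition:variance} with exponent $4+\varepsilon$ instead of $4$ gives $\expect\bigl[\abs{X_T \mp \xi_T}^{4+\varepsilon}\bigr] \leq C\,T^{2+\varepsilon/2}$ for $T \geq 1$, whence the Cauchy--Schwarz inequality yields $\expect\bigl[\abs{v(T) - d[\psi_{\vect e}]}^{2+\varepsilon/2}\bigr] \leq C$ uniformly in $T \geq 1$; thus $(v(T) - d[\psi_{\vect e}])^2$ is bounded in $L^{1+\varepsilon/4}$ and hence uniformly integrable. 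Consequently $\expect[v(T) - d[\psi_{\vect e}]] \to d[\phi_{\vect e}] - d[\psi_{\vect e}]$ and $\expect\bigl[(v(T) - d[\psi_{\vect e}])^2\bigr] \to \tfrac14\expect\bigl[(G_\phi^2 - G_\psi^2)^2\bigr]$, so $\var[v(T)] \to \var\bigl[\tfrac12(G_\phi^2 - G_\psi^2)\bigr]$. Evaluating the latter with Isserlis' theorem (using $\expect[G_\phi^4] = 3\bigl(2d[\phi_{\vect e}]\bigr)^2$, $\expect[G_\phi^2 G_\psi^2] = 4\,d[\phi_{\vect e}]\,d[\psi_{\vect e}] + 2(2c)^2$, and symmetrically for $G_\psi$) gives $\var\bigl[\tfrac12(G_\phi^2 - G_\psi^2)\bigr] = 2d[\phi_{\vect e}]^2 + 2d[\psi_{\vect e}]^2 - 4c^2$, which is~\eqref{eq:asymptotic_variance} once the definition of $c$ is substituted.

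Finally, the two-sided bound~\eqref{eq:bounds_asymptotic_variance} is elementary given the formula just derived, the Cauchy--Schwarz inequality $c^2 \leq d[\phi_{\vect e}]\,d[\psi_{\vect e}]$, and the polarization identity $\gamma\beta^{-1}\norm{\grad_{\vect p}\phi_{\vect e} - \grad_{\vect p}\psi_{\vect e}}^2 = d[\phi_{\vect e}] + d[\psi_{\vect e}] - 2c$: the first gives the lower bound $2d[\phi_{\vect e}]^2 + 2d[\psi_{\vect e}]^2 - 4c^2 \geq 2\bigl(d[\phi_{\vect e}] - d[\psi_{\vect e}]\bigr)^2$, and for the upper bound one checks
\begin{align*}
    & 4\gamma\beta^{-1}\norm{\grad_{\vect p}\phi_{\vect e} - \grad_{\vect p}\psi_{\vect e}}^2\bigl(d[\phi_{\vect e}] + d[\psi_{\vect e}]\bigr) - \Bigl(2d[\phi_{\vect e}]^2 + 2d[\psi_{\vect e}]^2 - 4c^2\Bigr) \\
    &\qquad = 2\bigl(d[\phi_{\vect e}] + d[\psi_{\vect e}] - 2c\bigr)^2 + 4\bigl(d[\phi_{\vect e}]\,d[\psi_{\vect e}] - c^2\bigr) \;\geq\; 0 .
\end{align*}
The main obstacle is the analytic input feeding the distributional limit: justifying the martingale central limit theorem (which rests on the almost-sure ergodic-theorem convergence of the quadratic variation) and, above all, the passage from convergence in law to convergence of the second moment, for which the $L^{4+\varepsilon}$ integrability in~\eqref{eq:regularity_solution} is needed together with a check that the Itô moment inequality with exponent $4+\varepsilon$ has constants independent of $T$. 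The remaining Gaussian computation and the two inequalities are routine.
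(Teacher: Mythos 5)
Your proposal is correct and follows essentially the same path as the paper: a martingale central limit theorem with ergodic normalization of the quadratic variation identifies the Gaussian limit, the $L^{4+\varepsilon}$ hypotheses give uniform integrability (you bound $(v(T)-d[\psi_{\vect e}])^2$ in $L^{1+\varepsilon/4}$ directly, while the paper equivalently shows uniform integrability of $|X^T_i|^4$ and $|X^T_1 X^T_2|^2$, both via the same Itô moment inequality), Isserlis' theorem yields the explicit limit, and the bounds follow from Cauchy--Schwarz and inner-product algebra. Your upper-bound verification via the polarization identity $\gamma\beta^{-1}\norm{\grad_{\vect p}\phi_{\vect e}-\grad_{\vect p}\psi_{\vect e}}^2 = d[\phi_{\vect e}]+d[\psi_{\vect e}]-2c$ and the decomposition $2(d[\phi_{\vect e}]+d[\psi_{\vect e}]-2c)^2 + 4(d[\phi_{\vect e}]d[\psi_{\vect e}]-c^2)$ is a minor algebraic variant of the paper's identity $2\norm{a}^4+2\norm{b}^4-4\ip{a}{b}^2 = \ip{a-b}{a+b}^2+\norm{a-b}^2\norm{a+b}^2$, and both are correct.
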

\begin{remark}
    \label{remark:asym_variance_u}
    The result~\eqref{eq:asymptotic_variance} implies that $\var\bigl[v(T)\bigr] \to 2 d[\phi_{\vect e}]^2$ in the limit as $T \to \infty$ when $\psi_{\vect e} = 0$,
    which is consistent with our explicit computations in \cref{example:constant} for the case of a constant potential.
\end{remark}
\begin{proof}
    Using It\^o's isometry and the martingale central limit theorem~(see, for instance, \cite{MR668684} or \cite[Theorem 3.3]{pavliotis2008multiscale}),
    we obtain that
    {\small%
        \[
            X^T :=
            \frac{1}{\sqrt{2T}}
            \begin{pmatrix}
                \vect e^\t (q_T - q_0) \\
                \xi_T
            \end{pmatrix}
            \xrightarrow[T \to \infty]{\rm Law}
            X^{\infty} \sim
            \mathcal N \left(0,
                \gamma \beta^{-1}
                \begin{pmatrix}
                    \int \grad_{\vect p} \phi_{\vect e} \cdot \grad_{\vect p} \phi_{\vect e} \, \d \mu & \int \grad_{\vect p} \phi_{\vect e} \cdot \grad_{\vect p} \psi_{\vect e} \, \d \mu \\
                    \int \grad_{\vect p} \phi_{\vect e} \cdot \grad_{\vect p} \psi_{\vect e} \, \d \mu & \int \grad_{\vect p} \psi_{\vect e} \cdot \grad_{\vect p} \psi_{\vect e} \, \d \mu
                \end{pmatrix}
            \right),
        \]
    }
    where the domain of integration of all the integrals in the covariance matrix is $\torus^d \times \real^d$.
    For a bivariate Gaussian vector $X^{\infty} \sim \mathcal N(0, \Sigma)$ with density $g_{\Sigma}$,
    it holds by the general formula for the higher-order moments of multivariate Gaussians (Isserlis' theorem)
    that
    \begin{align}
        \notag
        \var \bigl[ \lvert X^{\infty}_1 \rvert^2 - \lvert X^{\infty}_2 \rvert^2\bigr]
        &= \int_{\real^2} \left( x_1^2 - x_2^2 - \Sigma_{11} + \Sigma_{22} \right)^2 \, g_{\Sigma}(x_1, x_2) \, \d x_1 \, \d x_2 \\
        \label{eq:intermediate_identity}
        &= 2 \Sigma_{11}^2 + 2 \Sigma_{22}^2 - 4 \Sigma_{12}^2.
    \end{align}
    We prove in~\cref{sec:proof_of_equation_variance} that,
    assuming~\eqref{eq:regularity_solution},
    \begin{equation}
        \label{eq:limit_variance}
        \var \Bigl[ \lvert X^T_1 \rvert^2 - \lvert X^T_2 \rvert^2 \Bigr]
        \xrightarrow[T \to \infty]{}
        \var \Bigl[ \lvert X^{\infty}_1 \rvert^2 - \lvert X^{\infty}_2 \rvert^2 \Bigr].
    \end{equation}
    (This equation does not follow directly from the convergence in distribution of $X^T$ to $X^{\infty}$,
    because polynomials are not uniformly bounded.)
    Combining~\eqref{eq:limit_variance} with the identity~\eqref{eq:intermediate_identity} directly implies~\eqref{eq:asymptotic_variance}.
    The lower bound in~\eqref{eq:bounds_asymptotic_variance} then follows from an application of the Cauchy--Schwarz inequality.
    In order to obtain the upper bound,
    we use that in any inner product space it holds
    \begin{align*}
        2 \norm{a}^4 + 2 \norm{b}^4 - 4\ip{a}{b}^2
        &= 2 \norm{a}^4 + 2 \norm{b}^4 - \bigl(\ip{a}{a} + \ip{b}{b} - \ip{a-b}{a-b} \bigr)^2 \\
        &= \bigl(\norm{a}^2 - \norm{b}^2 \bigr)^2 + 2 \norm{a-b}^2 \bigl(\norm{a}^2 + \norm{b}^2 \bigr) - \norm{a-b}^4 \\
        &= \bigl(\norm{a}^2 - \norm{b}^2 \bigr)^2 + \norm{a-b}^2 \norm{a + b}^2 \\
        &= \ip{a - b}{a + b}^2 + \norm{a-b}^2 \norm{a + b}^2 \\
        &\leq 2 \norm{a - b}^2 \norm{a + b}^2
        \leq 4 \norm{a - b}^2 \bigl(\norm{a}^2 + \norm{b}^2\bigr).
    \end{align*}
    The desired upper bound is obtained by
    using this inequality with $a = \grad_{\vect p} \phi_{\vect e}$ and $b = \grad_{\vect p} \psi_{\vect e}$ in the Hilbert space $L^2(\mu)$.
\end{proof}

\section{Application to one-dimensional Langevin-type dynamics}%
\label{sec:application_to_one_dimensional_langevin_type_dynamics}

We consider in this section two different approaches for constructing an approximate solution to the Poisson equation~\eqref{eq:poisson_equation}:
through a Fourier/Hermite Galerkin method in \cref{sub:galerkin_approach},
and through formal asymptotic expansions for the underdamped regime in \cref{sub:underdamped_approach}.
We then present numerical results in \cref{sub:numerical_results}
and discuss an extension of our approach to higher order Langevin dynamics, obtained as the Markovian approximation of the generalized Langevin equation, in \cref{sub:generalization_to_generalized_langevin_dynamics}.
Throughout this section, we consider the one-dimension potential
\[
    V(q) = - \frac{\cos(q)}{2}.
\]
Since we are concerned only with one-dimensional dynamics in most of this section,
we employ the scalar notation $q, p, \phi, \psi, D^{\gamma}$ in place of $\vect q, \vect p, \phi_{\vect e}, \psi_{\vect e}, D^{\gamma}_{\vect e}$.

\subsection{Fourier/Hermite spectral method}%
\label{sub:galerkin_approach}
We employ the non-conformal Galerkin method developed and analyzed in~\cite{roussel2018spectral}.
Specifically, we calculate an approximate solution to~\eqref{eq:poisson_equation} through the following saddle point formulation:
find~$\Psi_N \in V_N$ such that
\begin{align}
  \label{eq:saddle_point_formutation}
  \left\{
    \begin{aligned}
       & - \mathscr P_N \, \mathcal L \, \mathscr P_N \Psi_N + \alpha_N u_N = \mathscr P_N p, \\
       & \ip{\Psi_N}{u_N} = 0,
    \end{aligned}
  \right.
\end{align}
where $V_N$ is a finite-dimensional approximation space,
$\mathscr P_N$ is the $\lp{2}{\mu}$ projection operator onto~$V_N$,
$u_N = \mathscr P_N 1 / \norm{\mathscr P_N 1} \in V_N$
and $\alpha_N$ is a Lagrange multiplier.
As above,
$\ip{\cdot}{\cdot}$ and~$\norm{\cdot}$ denote respectively the standard inner product and norm of $\lp{2}{\mu}$.
The formulation~\eqref{eq:saddle_point_formutation} ensures
that the system is well-conditioned at the finite dimensional level.
The solution $\Psi_N$ to~\eqref{eq:saddle_point_formutation} equivalently solves
\[
    - \mathcal P_N \, \mathcal L \, \mathcal P_N \Psi_N = \mathcal P_N p, \qquad \Psi_N \in \{\phi - \ip{u_N}{\phi} u_N : \phi \in V_N\} =: W_N,
\]
where now $\mathcal P_N$ is the $L^2(\mu)$ projection onto $W_N$.
In practice, we solve~\eqref{eq:saddle_point_formutation}.
As in~\cite{roussel2018spectral},
we choose $V_N$ to be the subspace of $\lp{2}{\mu}$ spanned by the orthonormal basis of functions
\begin{equation}
  \label{eq:basis_functions}
  e_{i,j} = Z^{1/2} \, \e^{\frac{\beta}{2} H(q,p)}
  \, g_i(q) \, h_j(p), \qquad 0 \leq i,j \leq N,
\end{equation}
where $g_i$ are trigonometric functions,
\begin{equation}
  \label{eq:definition_trigonometric_functions}
  g_i(q) =
  \left\{ \begin{aligned}
    (2 \pi)^{-1/2}, \quad & \text{if}~i = 0, \\
    \pi^{-1/2} \sin\left(\frac{i + 1}{2}q\right), \quad & \text{if}~i~\text{is odd}, \\
    \pi^{-1/2} \cos\left(\frac{i}{2}q\right), \quad & \text{if}~i~\text{is even}, i \geq 2, \\
  \end{aligned} \right.
\end{equation}
and $h_j$ are rescaled normalized Hermite functions,
\begin{equation}
  \label{eq:definition_hermite_functions}
  h_j(p) = \frac{1}{\sqrt{\sigma}} \, \psi_j \left( \frac{p}{\sigma} \right),
  \qquad \psi_j (p) := (2 \pi)^{-\frac{1}{4}} \frac{(-1)^j}{\sqrt{j!}} \e^{\frac{p^2}{4}} \, \derivative*{j}{p^j} \, \Bigl( \e^{- \frac{p^2}{2}} \Bigr).
\end{equation}
The functions $(h_j)_{j\in \nat}$ are orthonormal in $\lp{2}{\real}$ regardless of the value of $\sigma$,
which is a scaling parameter that can be adjusted in order to better resolve $\Psi_N$.

As mentioned in \cref{remark:cost_control_variate},
calculating realizations of the control variate requires to evaluate the derivative $\partial_p \psi(q_t, p_t)$ along full paths of the solution to Langevin dynamics.
If~$\psi = \Psi_N$ and its derivative $\partial_p \psi = \partial_p \Psi_N$ are stored as arrays of Fourier/Hermite coefficients,
then the evaluation of $\partial_p \Psi_N$ at a configuration $(q, p)$ is computationally expensive
because it requires to evaluate all the basis functions~\eqref{eq:basis_functions} at this configuration.
Therefore, in practice, the approximate solution~$\Psi_N$ and its gradient are discretized, in a preprocessing step,
over a Cartesian grid with vertices
\begin{equation}
    \label{eq:discretization_points}
    (q_i, p_j) = \bigl(- \pi + 2\pi(i/N_q) , - L_p + 2L_p (j/N_p) \bigr), \qquad 0 \leq i \leq N_q - 1, \quad 0 \leq j \leq N_p.
\end{equation}
The domain in the $p$ direction is truncated at $-L_p$ and $L_p$,
and the parameter~$L_p$ is chosen sufficiently large that escaping the domain is unlikely during a simulation of the Langevin dynamics.
From the values of~$\Psi_N$ at the points~\eqref{eq:discretization_points},
a bilinear interpolant $\widehat \Psi_N$ is constructed over the domain $[-\pi, \pi] \times [-L_p, L_p]$ as follows:
\begin{align*}
    \widehat \Psi_N(q, p)
    &= \Psi_N(q_i,p_j) + \frac{\Psi_N(q_{i+1}, p_j) - \Psi_N(q_i, p_j)}{q_{i+1} - q_i} (q - q_i)
    +\frac{\Psi_N(q_{i}, p_{j+1}) - \Psi_N(q_{i}, p_{j})}{p_{j+1} - p_j} (p - p_j) \\
    &\quad + \frac{\Psi_N(q_{i+1}, p_{j+1}) - \Psi_N(q_{i+1}, p_{j}) - \Psi_N(q_{i}, p_{j+1}) + \Psi_N(q_{i}, p_{j})}{(q_{i+1} - q_i)(p_{j+1} - p_j)}  (q-q_i)(p-p_j),
\end{align*}
where $i = \lfloor (q + \pi) / \Delta q \rfloor$ and $j = \lfloor (p+L_p) / \Delta p \rfloor$,
with $\Delta q = 2\pi/N_q$ and $\Delta p = 2L_p/N_p$.
Here we use the convention that $q_{N_q} = q_{0}$ in view of the $2\pi$ periodicity of $\Psi_N$ in direction $q$.
We emphasize that,
since accessing an array element is an operation with time complexity~$\mathcal O(1)$ with respect to the length of the array,
the computational cost of evaluating~$\widehat \Psi_N$ at a point $(q,p)$ is independent of $N_q$ and $N_p$.

From the bilinear interpolant~$\widehat \Psi_N$,
the control variate~$\xi_t$ is constructed by discretizing~\eqref{eq:definition_control_variate} using the approach presented in~\cref{sub:numerical_results}.
The approximate effective diffusion coefficient~$d[\psi]$,
which enters in the definition~\eqref{eq:improved_estimator} of the estimator~$v(T)$,
is calculated based on~$\widehat \Psi_N$ by numerical quadrature.
The parameters employed for the construction of the control variate described in this section are summarized in~\cref{table:parameters_employed_for_the_construction_of_the_control_variate}.
\begin{table}[ht]
    \centering
    \begin{tabular}{|c|c|c|}
        \hline
        Scaling coefficient of Hermite functions & $\sigma$ & $0.1/\sqrt{\beta}$ \\
        \hline
        \# Fourier/Hermite modes in $q$ or $p$ & $N$ & 300 \\
        \hline
        \# discretization points in $q$ & $N_q$ & 300 \\
        \hline
        \# discretization points in $p$ & $N_p$ & 500 \\
        \hline
        Truncation size of domain & $L_p$ & $9/\sqrt{\beta}$\\
        \hline
    \end{tabular}
    \caption{Parameters employed for the construction of the control variate.}
    \label{table:parameters_employed_for_the_construction_of_the_control_variate}
\end{table}

\subsection{Control variate for the underdamped limit}%
\label{sub:underdamped_approach}
In dimension one, the underdamped limit of the Langevin dynamics is well understood.
Specifically, it is possible to show that the solution to the Poisson equation~\eqref{eq:poisson_equation},
when multiplied by $\gamma$,
converges as $\gamma \to 0$ to a limit $\phi_0$ in $L^2(\mu)$ which can be calculated simply using one-dimensional numerical integration;
see~\cite[Lemma 3.4]{MR2394704} and~\cite[Proposition 4.1]{roussel_thesis}
for proofs of the convergence to $\phi_0$ in $L^2(\mu)$ using probabilistic and analytical arguments,
respectively. See also~\cite{MR2427108} for an explicit calculation of $\phi_0$ in the case where $V$ is a simple cosine potential and for numerical experiments using Monte Carlo simulations as well as the Hermite spectral method. The limiting solution reads~$\phi_0(q,p) = \sign(p) \, \varphi_0\bigl(H(q,p)\bigr)$,
where
\[
    \varphi_0(E) = 2 \pi \int_{E_0}^{\max\{E_0, E\}} \frac{1}{S_{\rm und}(\mathcal E)} \, \d \mathcal E,
    \quad S_{\rm und}(\mathcal E) = \int_{-\pi}^{\pi} P(q, \mathcal E) \, \d q, \quad P(q, \mathcal E) = \sqrt {2 \bigl(\mathcal E - V(q)\bigr)},
\]
and $E_0 = \min_{q \in \torus} V(q)$.
In particular $\phi_0(q,p) = 0$ if $H(q,p) \leq E_0$,
and
\[
    \partial_p \phi_0(q, p)
    = \sign(p) \, \varphi_0'\bigl(H(q,p)\bigr) \, \partial_p H(q,p)
    =
    \begin{cases}
        0 & \text{ if $H(q,p) < E_0$}, \\
        \displaystyle \frac{2 \pi \abs{p}}{S_{\rm und}\bigl(H(q,p)\bigr)} & \text{ if $H(q,p) > E_0$}.
    \end{cases}
\]
For certain potentials,
the function $S_{\rm und}$ can be calculated explicitly,
and in general this function can be approximated accurately by numerical quadrature.
For the potential $V(q) = \frac{1}{2} \bigl(1 - \cos(q)\bigr)$ considered in~\cref{sub:numerical_results},
for example, $S_{\rm und}$ admits an explicit expression in terms of an elliptic integral of the second kind~\cite{MR2427108}.
In practice, given an explicit expression or implementation of~$S_{\rm und}$,
we calculate~$\varphi_0$ over a large interval $[0, E_{\max}]$ with $E_{\max} = 100 \beta$
using the ODE solver from the Julia module \texttt{DifferentialEquations.jl}~\cite{rackauckas2017differentialequations} with default parameters.
This returns an object containing an approximation of $\varphi_0$ at discrete values of $E$ automatically selected in order to meet a default accuracy threshold.
Conveniently, this object can be evaluated at any~$E \in [0, E_{\max}]$,
in which case an approximation $\varphi_0(E)$ automatically obtained by a high-order interpolation from the discrete solution is returned.

\begin{remark}
    In practice,
    several control variates can be calculated simultaneously during the simulation,
    and only the one leading to the smallest variance can be retained in the estimator.
    Alternatively, the control variates can be combined in order to minimize the variance;
    specifically, given two approximations $\xi^{(1)}_T$ and $\xi^{(2)}_T$ of $q_T - q_0$,
    one may consider the following composite estimator instead of \eqref{eq:improved_estimator}:
    \[
        \widehat v(T) =  \frac{\bigl\lvert q_T - q_0 \bigr\rvert^2}{2T}
        - \alpha_1 \left( \frac{\bigl\lvert \xi^{(1)}_T \bigr\rvert^2}{2T} + \lim_{T \to \infty} \expect \left[\frac{\bigl\lvert \xi^{(1)}_T \bigr\rvert^2}{2T}\right] \right)
        - \alpha_2 \left( \frac{\bigl\lvert \xi^{(2)}_T \bigr\rvert^2}{2T} + \lim_{T \to \infty} \expect \left[\frac{\bigl\lvert \xi^{(2)}_T \bigr\rvert^2}{2T}\right] \right).
    \]
    There are systematic \emph{a posteriori} approaches for choosing optimally $\alpha_1$ and $\alpha_2$ in order to minimize the variance,
    which are studied, for example, in~\cite{MR1958845}.
    For the sake of simplicity, we do not implement or study these approaches here.
\end{remark}

\subsection{Numerical results}%
\label{sub:numerical_results}
We discretize the Langevin dynamics using the geometric Langevin algorithm introduced in~\cite{MR2608370} which,
in the general multi-dimensional case, is based on the iteration
\begin{align*}
    \vect p^{n+1/2}
    &= \vect p^n - \frac{\Delta t}{2} \grad V(\vect q^n), \\
    \vect q^{n+1/2}
    &= \vect q^n + \Delta t \, \vect p^{n+1/2}, \\
    \widetilde {\vect p}^{n+1}
    &= \vect p^{n+1/2} - \frac{\Delta t}{2} \grad V(\vect q^{n+1}), \\
    \vect p^{n+1} &= \exp \left(- \gamma \Delta t \right) \widetilde {\vect p}^{n+1}
    + \sqrt{2 \gamma \beta^{-1}} \vect g^n, \qquad \vect g^n = \int_{n \Delta t}^{(n+1)\Delta t} \e^{-\gamma \bigl((n+1)\Delta t-s\bigr)} \, \d \vect w_s,
\end{align*}
supplemented with the initial condition $(\vect q^0, \vect p^0) \sim \mu$.
The resulting discrete-time process $(\vect q^n, \vect p^n)_{n \in \nat}$ is an approximation of $(\vect q_{n \Delta t}, \vect p_{n \Delta t})_{n \in \nat}$.
The first three lines can be viewed as a Strang splitting of the Hamiltonian part of the dynamics;
this is the St\o rmer--Verlet scheme~\cite{verlet1967computer}.
The fourth line is an analytical integration of the fluctuation/dissipation part.
We write the terms~$(\vect g^n)_{n\geq0}$ as stochastic integrals, instead of giving only their law,
because these are correlated with the Brownian increments necessary for constructing the control variate.
Specifically, the It\^o integral in the definition~\eqref{eq:definition_control_variate} of $\xi$ is approximated using the explicit scheme
\[
    I_{\psi}^{n+1} = I_{\psi}^n + \sqrt{2 \gamma \beta^{-1}} \, \grad_{\vect p} \psi(\vect q^{n}, \vect p^{n}) \cdot
    \widetilde {\vect g}^n, \qquad \widetilde {\vect g}^n = \vect w_{(n+1)\Delta t} - \vect w_{n \Delta t},
\]
where $(\widetilde {\vect g}^n)_{n \geq 0}$ are $d$-dimensional standard Gaussian random variables correlated with $(\vect g^n)_{n \geq 0}$
and $I_{\psi}^{n}$ is an approximation of $\sqrt{2 \gamma \beta^{-1}} \int_{0}^{n \Delta t} \grad_{\vect p} \psi_{\vect e}(q_t, p_t) \cdot \d \vect w_t$.
An explicit calculation using It\^o's isometry shows that the vectors $(\vect g^n, \widetilde {\vect g}^n)_{n \geq 0}$,
which are independent and identically distributed for different values of $n$,
are normally distributed with mean 0 and covariance matrix
\[
    \mat S =
    \begin{pmatrix}
        \displaystyle \frac{1}{2 \gamma} \bigl(1 - \e^{-2 \gamma \Delta t}\bigr) \mat I_d
        & \displaystyle \frac{1}{\gamma}\bigl(1 - \e^{-\gamma \Delta t}\bigr) \mat I_d \\[.4cm]
        \displaystyle  \frac{1}{\gamma}\bigl(1 - \e^{-\gamma \Delta t}\bigr) \mat I_d
        & \displaystyle \Delta t \, \mat I_d
    \end{pmatrix}.
\]
In practice, we generate $(\vect g^n, \widetilde {\vect g}^n)$ as
\[
    \begin{pmatrix}
        \vect g^n \\
        \widetilde {\vect g}^n
    \end{pmatrix}
    =
    \sqrt{\mat S}
    \begin{pmatrix}
        \mathcal  G^n \\
        \widetilde {\mathcal  G}^n
    \end{pmatrix},
    \qquad
    \begin{pmatrix}
        \mathcal  G^n \\
        \widetilde {\mathcal  G}^n
    \end{pmatrix}
    \overset{\rm iid} \sim
    \mathcal N(0, \mat I_{2d}),
\]
where $\sqrt{\mat S}$ is the unique symmetric, positive definite matrix square root of $\mat S$.


For a given final time $T$,
the expectations and standard deviation of the estimators~$u(T)$ and~$v(T)$,
defined respectively in~\eqref{eq:simple_estimator} and~\eqref{eq:improved_estimator},
are estimated from a number $J$ of realizations.
The parameters employed in the simulation are summarized in~\cref{table:parameters_employed_for_mc},
and the associated numerical results are presented in~\cref{fig:effective_diffusion_langevin}.
\begin{table}[ht]
    \centering
    \begin{tabular}{|c|c|c|}
        \hline
        Time step & $\Delta t$ & 0.01 \\
        \hline
        \# Number of realizations & $J$ & 5000 \\
        \hline
        \# Final time & $T$ & $100 \gamma^{-1}$ \\
        \hline
    \end{tabular}
    \caption{Parameters employed for the Monte Carlo simulation.}
    \label{table:parameters_employed_for_mc}
\end{table}
We observe that the sample means corresponding to each of the estimators are in good agreement,
and that for $\gamma = 10^{-5}$ the effective diffusion coefficient is very close, in relative terms,
to its theoretical limit $D_{\rm und}/\gamma$.
The Galerkin method for the Poisson equation~\eqref{eq:poisson_equation} is inaccurate for $\gamma \ll 1$,
which explains the mismatch in this regime between the curve labeled ``Galerkin'',
which corresponds to a deterministic approximation of the effective diffusion coefficient from the numerical solution to the Poisson equation,
and the other curves in the left-panel.

The two control variate approaches yield computational benefits in different regimes:
the control variate constructed from the Fourier/Hermite approximation of the solution to Poisson equation~\eqref{eq:poisson_equation}
enables a variance reduction by a factor more than 100 for $\gamma \geq 10^{-2}$,
but this factor decreases as $\gamma \to 0$.
This is not surprising since,
if the basis~\eqref{eq:basis_functions} is fixed with respect to~$\gamma$,
then the accuracy of the Galerkin method~\eqref{eq:saddle_point_formutation} becomes worse and worse as $\gamma \to 0$;
see~\cite{roussel2018spectral}.
In contrast, the control variate constructed using the ``underdamped'' strategy of \cref{sub:underdamped_approach} enables a variance reduction by a factor close to 100 for the smallest value of $\gamma$ considered (namely $10^{-5}$),
but the benefits decrease as $\gamma$ increases.

\begin{figure}[ht]
    \centering
    \includegraphics[width=0.99\linewidth]{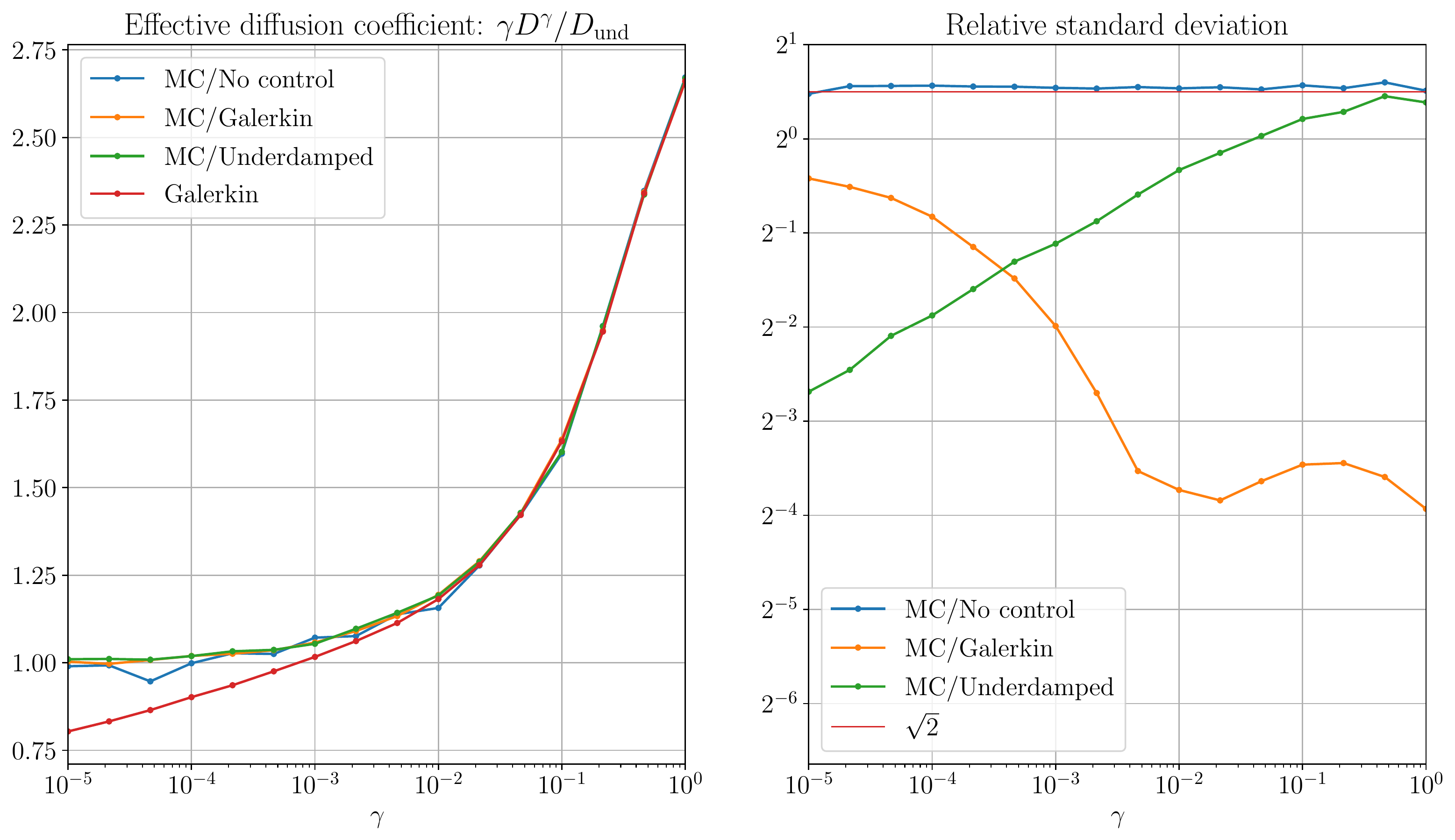}
    \caption{
        Effective diffusion coefficient and relative standard deviation $\frac{\sqrt{\var\bigl[v(T)\bigr]}} {D^{\gamma}}$ of the estimators considered at time $T = 100/\gamma$.
        The data labeled ``MC/No control'' correspond to Monte Carlo simulations without a control variate,
        i.e.\ to the estimator $u(T)$ given in~\eqref{eq:simple_estimator}.
        The data labeled ``MC/Galerkin'' and ``MC/Underdamped'' correspond to the improved estimator~\eqref{eq:definition_control_variate},
        with $\psi$ obtained using the approaches of~\cref{sub:galerkin_approach,sub:underdamped_approach},
        respectively.
        Finally, the curve labeled ``Galerkin'' is the approximate diffusion coefficient obtained by the Galerkin method alone,
        which is given by~$\ip*{\widehat \Psi_N}{p}$ in the notation of \cref{sub:galerkin_approach}.
        The value $\sqrt{2}$ is the asymptotic relative standard deviation for the simple estimator $u(T)$,
        see~\cref{remark:asym_variance_u}.
    }%
    \label{fig:effective_diffusion_langevin}
\end{figure}

\Cref{fig:time_bias_variance} illustrates the evolution of the expectation and standard deviation of the estimators~$u(t)$ given in~\eqref{eq:simple_estimator} and~$v(t)$ given in~\eqref{eq:improved_estimator},
empirically estimated from $J = 5000$ realizations,
with respect to the integration time~$t$.
It appears clearly that,
for the value $\gamma = 10^{-3}$ considered,
the improved estimators $v(t)$ obtained using the approaches outlined in \cref{sub:galerkin_approach,sub:underdamped_approach}
have a much smaller variance than~$u(t)$ throughout the simulation.
\begin{figure}[ht]
    \centering
    \includegraphics[width=0.99\linewidth]{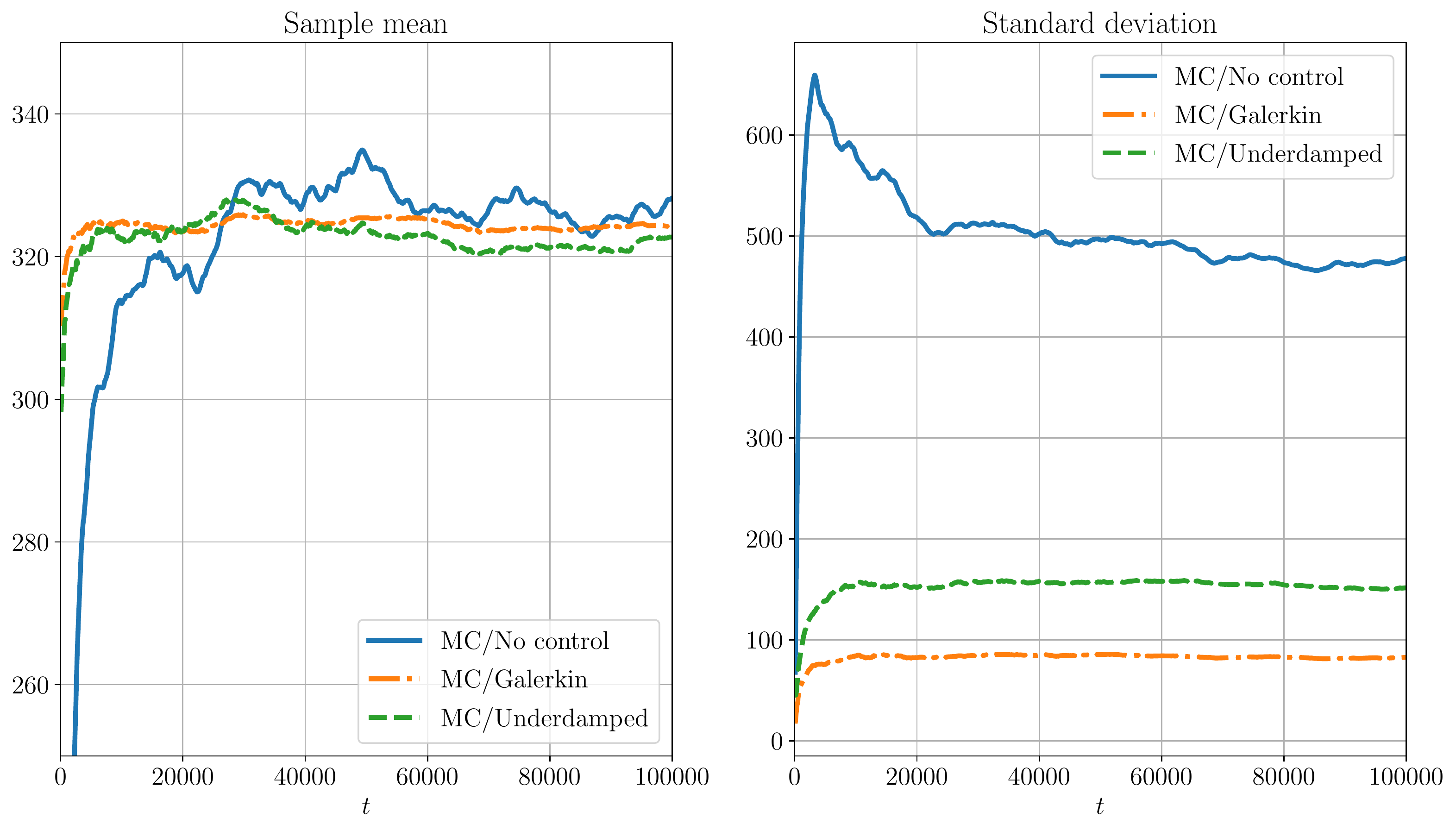}
    \caption{
        Sample mean and sample standard deviation of the estimators $u(t)$ in~\eqref{eq:simple_estimator} and $v(t)$ in~\eqref{eq:improved_estimator},
        for the friction parameter $\gamma = 10^{-3}$.
        The approximate solution to the Poisson equation used for constructing the control variate appearing in $v(t)$ here
        is that given in \cref{sub:underdamped_approach}.
    }%
    \label{fig:time_bias_variance}
\end{figure}

\subsection{Extension to the generalized Langevin dynamics}%
\label{sub:generalization_to_generalized_langevin_dynamics}
The variance reduction approach described in \cref{sec:method},
in particular with the control variate constructed from the limiting solution to the Poisson equation as $\gamma \to 0$,
may be extended for calculating the mobility of simple generalized Langevin dynamics in one spatial dimension.
The paradigmatic example dynamics we consider here is the following,
which is studied in~\cite{MR2793823,GPGSUV21}:
\begin{equation}
\label{eq:gle}
\left\{
  \begin{aligned}
      & \d q_t = p_t \, \d t, \\
      & \d p_t = - V'(q_t) \, \d t + \frac{\sqrt{\gamma}}{\nu} \, z_t \, \d t, \\
      & \d z_t = - \frac{\sqrt{\gamma}}{\nu} \, p_t  \, \d t
      -   \frac{1}{\nu^2} \, z_t \, \d t + \sqrt{\frac{2 \beta^{-1}}{\nu^2}} \, \d w_t,
  \end{aligned}
\right.
\end{equation}
where $z_t \in \real$.
The unique invariant probability measure for this dynamics is given by
\[
    \mu_{\rm GLE}(\d q \, \d p \, \d z) \propto \exp \biggl( - \beta \left( H(q,p) + \frac{z^2}{2} \right) \biggr) \, \d q \, \d p \, \d z.
\]
As proved in~\cite{ottobre2012asymptotic},
a functional central limit theorem applies also to the dynamics~\eqref{eq:gle}:
the diffusively rescaled position process $(\varepsilon q_{t/\varepsilon^2})_{t \geq 0}$ converges in distribution,
in the Banach space of continuous functions over bounded time intervals,
to a Brownian motion with diffusion coefficient $ D^{\gamma, \nu}$.
As in the case of the underdamped Langevin dynamics, this diffusion coefficient can be calculated in terms of the solution of an appropriate Poisson equation: $  D^{\gamma, \nu} = \ip{\phi}{p}$, where $\ip{\dummy}{\dummy}$ is the inner product of $L^2(\mu_{\rm GLE})$
and $\phi$ is the unique solution to the following Poisson equation posed in $L^2_0(\mu_{\rm GLE})$:
\begin{equation}
    \label{eq:poisson_gle}
    - \mathcal L_{\rm GLE} \phi = p,
\end{equation}
where $\mathcal L_{\rm GLE}$ is the generator of~\eqref{eq:gle}.
As for the underdamped Langevin dynamics, the (Markovian approximation of the) generalized Langevin dynamics is difficult to understand in the underdamped regime,
and in particular there does not exist a rigorous result on the behavior of~$D^{\gamma, \nu}$ in the limit as $\gamma \to 0$.
Our goal in this section is to calculate accurately the mobility for the dynamics~\eqref{eq:gle} in the underdamped regime
using a control variate approach similar to that described in~\cref{sec:method},
and to assess in this manner the validity of the asymptotic scaling for~$D^{\gamma,\nu}$ conjectured in~\cite{GPGSUV21} by means of formal asymptotics.
An application of It\^o's formula gives
\[
    q_T - q_0 = \int_{0}^{T} p_t \, \d t
    = \phi(q_0, p_0, z_0) - \phi(q_T, p_T, z_T) + \sqrt{\frac{2 \beta^{-1}}{\nu^2}} \int_{0}^{T} \partial_z \phi(q_t, p_t, z_t) \, \d w_t,
\]
where $\phi$ is now the solution to~\eqref{eq:poisson_gle}.
This motivates the following estimator for the mobility:
\begin{subequations}
\begin{equation}
    \label{eq:improved_estimator_gle}
    v(T) = d[\psi] + \frac{1}{2T} \left( \abs*{q_T - q_0}^2 - \abs{\xi_T}^2\right),
    \qquad d[\psi] := \frac{\beta^{-1}}{\nu^2} \int \abs{\partial_z \psi}^2 \, \d \mu_{\rm GLE},
\end{equation}
where
\begin{align}
    \label{eq:definition_control_variate_gle}
    \xi_T = \psi(q_0, p_0, z_0) - \psi(q_T, p_T, z_T) + \sqrt{\frac{2 \beta^{-1}}{\nu^2}} \int_{0}^{T} \partial_z \psi(q_t, p_t, z_t) \, \d w_t,
\end{align}
\end{subequations}
and $\psi$ is an approximate solution to the Poisson equation~\eqref{eq:poisson_gle}.
The initial condition~$(q_0, p_0, z_0)$ is distributed according the invariant measure of the process,
i.e.~$\mu_{\rm GLE}$.

In~\cite{GPGSUV21},
we employed an asymptotic expansion of the form
\(
    \phi = \gamma^{-1} \phi_0 + \gamma^{-1/2} \phi_1 + \gamma^{0} \phi_2 + \dotsb
\)
in order to study the underdamped limit
and we derived expressions for $\phi_0$ and $\phi_1$
which enable to show formally that $D^{\gamma, \nu}$ behaves as $1/\gamma$ in the limit as $\gamma \to 0$,
with a prefactor $D^{\rm und}_{\nu}$ that can be efficiently calculated and is different from $D^{\rm und}$.
(Recall from~\cref{sec:introduction} that the diffusion coefficient $D^{\rm und}$ is defined as~$D^{\rm und} = \lim_{\gamma \to 0} \gamma D^{\gamma}$,
where $D^{\gamma}$ is the effective diffusion coefficient of the one-dimensional Langevin dynamics~\eqref{eq:effective_diffusion_poisson}.)
Although the assumed asymptotic expansion is shown to be invalid in~\cite{GPGSUV21}
because $\mathcal L_{\rm GLE} \phi_1 \notin \lp{2}{\mu}$,
our numerical results in this section demonstrate that this expansion can still be leveraged for constructing an efficient control variate $\xi_T$ in~\eqref{eq:definition_control_variate_gle}.
Specifically,
we obtain a considerable reduction in variance by choosing the approximation $\psi$ in~\eqref{eq:definition_control_variate_gle} as $\psi = \gamma^{-1} \phi_0 + \gamma^{-1/2} \phi_1$.
We refer to~\cite[Section~4.3.2] {GPGSUV21} for the expressions of the asymptotic value $D^{\nu}_{\rm und}$ and of the functions $\phi_0$ and $\phi_1$.

For the numerical integration of~\eqref{eq:gle},
we employ a numerical scheme similar to that presented in~\cref{sub:numerical_results} for the Langevin dynamics.
The scheme we use may be abbreviated as BABO using the terminology of~\cite{MR4379630},
although it is not explicitly studied in that paper.
More precisely, compared to the scheme used in~\cref{sub:numerical_results},
the last update for $p$ is replaced by the following simultaneous update for $p$ and $z$,
corresponding to the analytical integration of the Ornstein--Uhlenbeck part of the dynamics:
\begin{equation*}
    \begin{pmatrix}
        p^{n+1} \\
        z^{n+1}
    \end{pmatrix}
    =
    \e^{\mat M \Delta t}
    \begin{pmatrix}
        p^{n} \\
        z^{n}
    \end{pmatrix}
    + \int_{n \Delta t}^{(n+1)\Delta t} \e^{\mat M \bigl((n+1) \Delta t - s\bigr)}
    \begin{pmatrix}
        0 \\
        \sqrt{\frac{2 \beta^{-1}}{\nu^2}}
    \end{pmatrix}
    \, \d w_s,
    \qquad \mat M =
    \begin{pmatrix}
        0 & \frac{\sqrt{\gamma}}{\nu} \\
        - \frac{\sqrt{\gamma}}{\nu} & -   \frac{1}{\nu^2}
    \end{pmatrix}.
\end{equation*}
The It\^o integral in this equation, which we denote by $\vect I_n$,
is a bivariate Gaussian with mean~0 and a covariance matrix independent of~$n$.
The covariance matrix is calculated from It\^o's isometry only once, at the beginning of the simulation.
Likewise, the matrix exponential $\e^{\mat M \Delta t}$ is precalculated before simulating the GLE dynamics.

The It\^o integral in~\eqref{eq:definition_control_variate_gle} is discretized by using the Euler--Maruyama method.
Since the Brownian increment $w_{(n+1)\Delta t} - w_{n \Delta t}$ is correlated to the It\^o integral~$\vect I_n$,
a technique similar to that presented~\cref{sub:numerical_results} is required in order to generate $(\vect I_n, w_{(n+1)\Delta t} - w_{n \Delta t})$ at each iteration.

\Cref{fig:effective_diffusion_time_gle} illustrates the evolution of $\expect \bigl[ u(t) \bigr]$ and $\expect \bigl[ v(t) \bigr]$,
given in~\eqref{eq:simple_estimator} and~\eqref{eq:improved_estimator_gle},
with respect to time,
for a value of~$\nu = 2$ that is sufficiently large to observe a different asymptotic behavior in the limit as $\gamma \to 0$ than that of standard Langevin dynamics.
These expectations are estimated from simulations using the same parameters as in~\cref{table:parameters_employed_for_mc}.
The associated $[m - 3 s, m + 3 s]$ confidence intervals (corresponding to a confidence of approximately $99.7\%$ assuming Gaussianity) are also depicted,
where $m$ and $s$ are the sample mean and sample standard deviation.
It is evident from the figures that the control variate leads to considerable improvements in terms of variance.
Drawing conclusions on the bias is a more delicate task,
as the true value of $D^{\gamma, \nu}$ is unknown,
but it is clear that the improved estimator~\eqref{eq:improved_estimator_gle} has a smaller bias for small times.
Furthermore, we observe that the effective diffusion coefficient for $\gamma = 10^{-5}$
is in very good agreement with the asymptotic equivalent $D^{\rm und}_{\nu}/\gamma$,
where $D^{\rm und}_{\nu} = \lim_{\gamma \to 0} \gamma D^{\gamma, \nu}$ is the limiting value conjectured in~\cite{GPGSUV21}.
The dependence of the effective diffusion coefficient on~$\gamma$,
calculated using the control variate, is presented in~\cref{fig:effective_diffusion_gle}.
\begin{figure}[ht]
    \centering
    \includegraphics[width=0.8\linewidth]{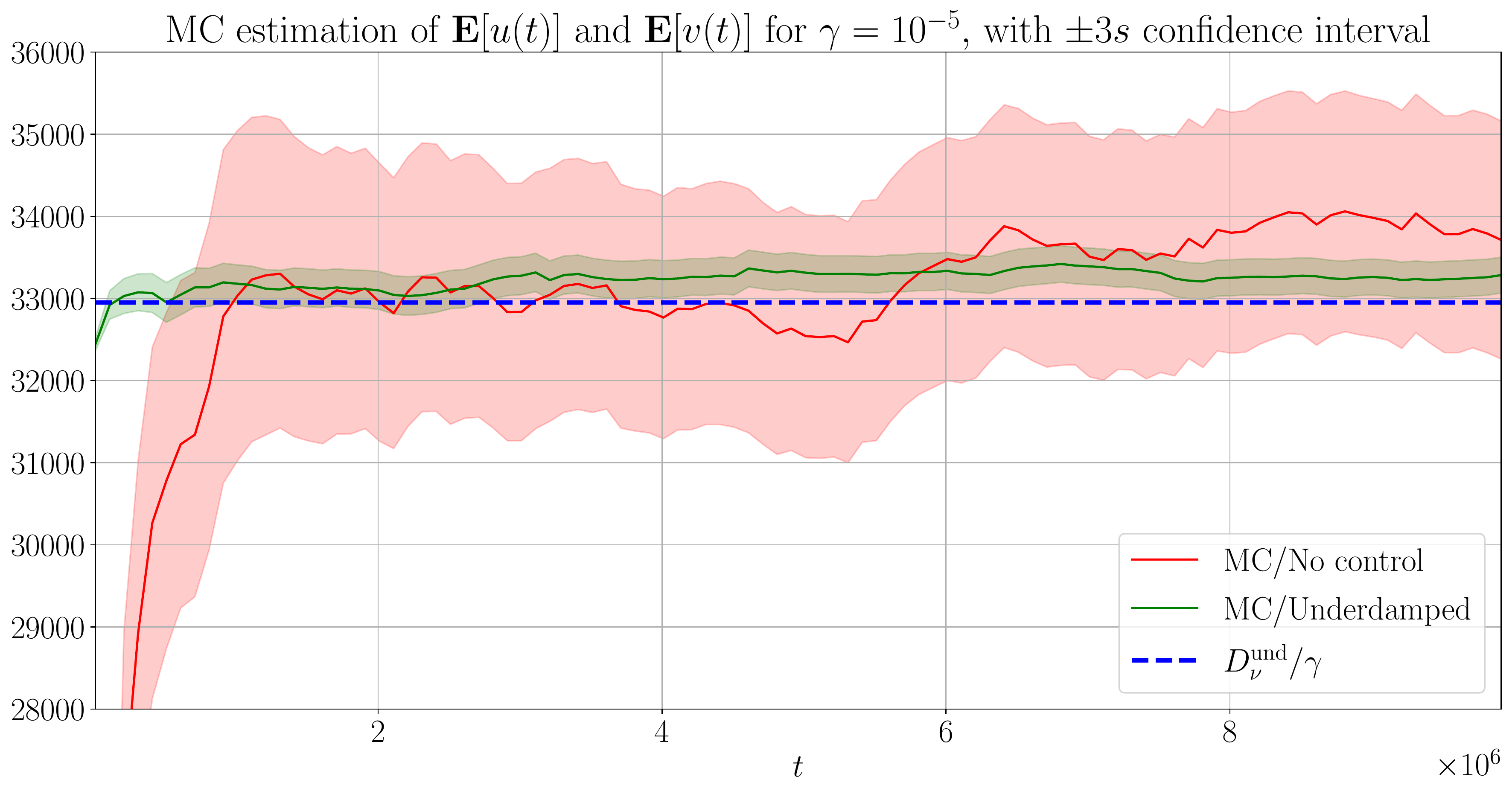}
    \caption{%
        Expectations of the naive~\eqref{eq:simple_estimator} and improved~\eqref{eq:improved_estimator_gle} estimators for generalized Langevin dynamics
        and associated ``$m \pm 3 s$'' confidence intervals,
        estimated from $J = 5000$ realizations.
    }
    \label{fig:effective_diffusion_time_gle}
\end{figure}
\begin{figure}[ht]
    \centering
    \includegraphics[width=0.9\linewidth]{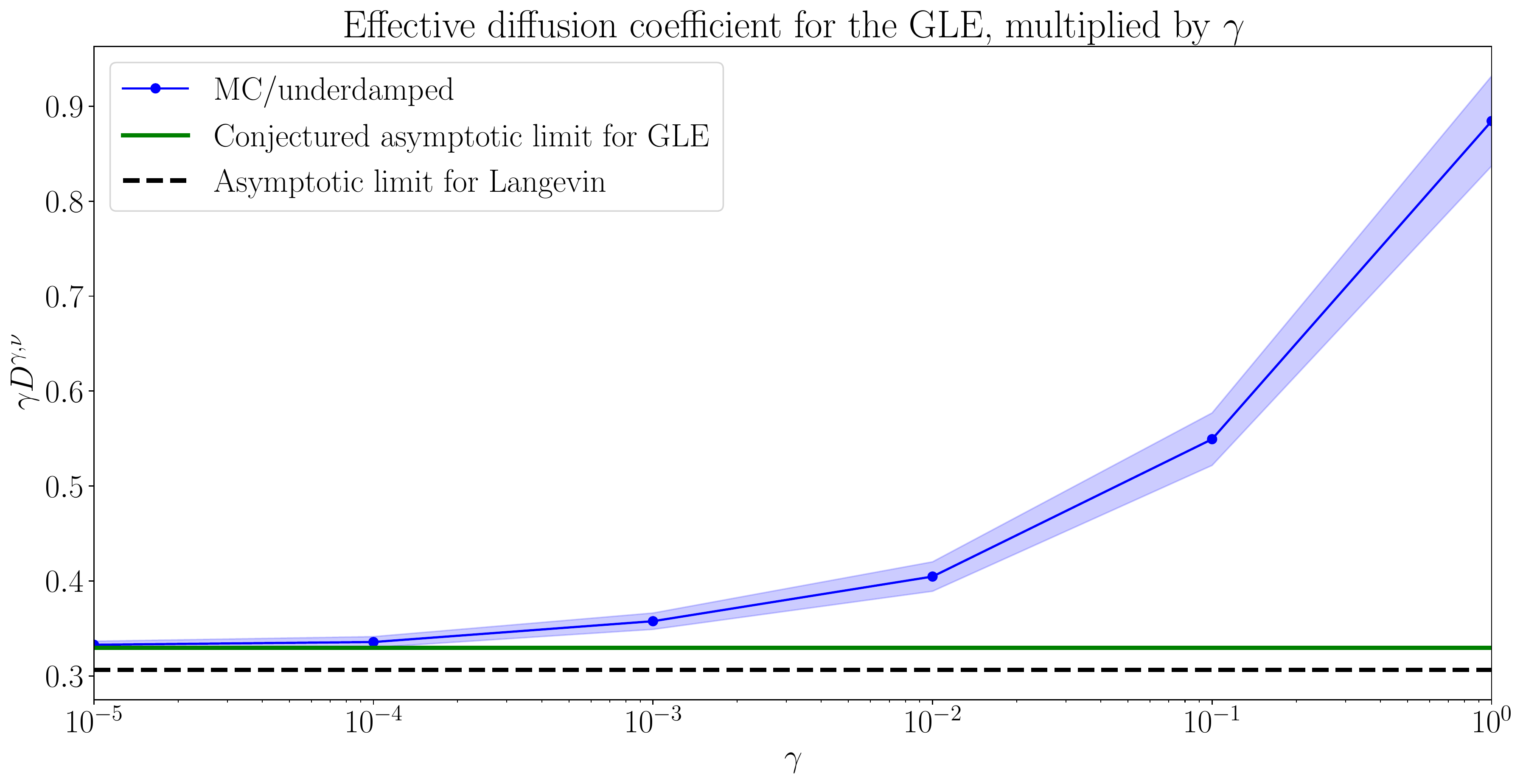}
    \includegraphics[width=0.9\linewidth]{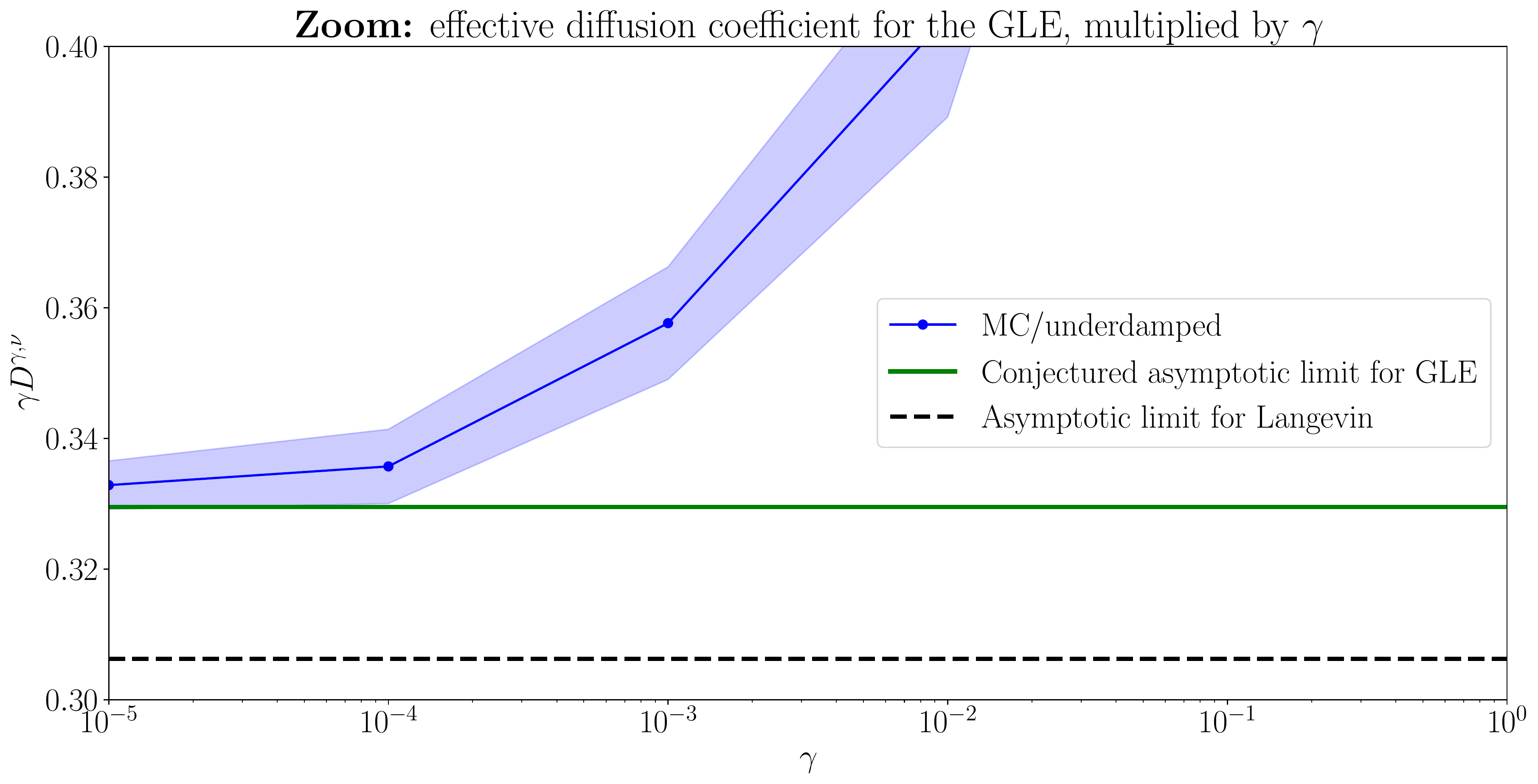}
    \caption{%
        Expectation and ``$m \pm 3 s$'' confidence intervals for $\gamma \expect [v(T)]$ in the underdamped limit,
        for generalized Langevin dynamics with $\nu = 2$.
        Since $T$ scales as $1/\gamma$ with a large prefactor,
        it is expected that $\expect \bigl[ v(T) \bigr] \approx D^{\gamma,\nu}$.
    }
    \label{fig:effective_diffusion_gle}
\end{figure}


\section{Application to Langevin dynamics in two dimensions}%
\label{sec:applications_2d}%
The approaches employed in \cref{sec:application_to_one_dimensional_langevin_type_dynamics} for constructing an approximate solution to the Poisson equation~\eqref{eq:poisson_equation}
do not generalize well to the multi-dimensional setting for non-separable potentials.
On one hand, Galerkin methods for the Poisson equation suffer from the curse of dimensionality and,
on the other hand, the behavior of the solution to the Poisson equation is not well understood in the underdamped limit.
In this section, we discuss alternative approaches.
We begin by showing that,
under symmetry assumptions on the potential~$V$,
the diffusion tensor $\mat D^{\gamma}$ is isotropic.

\begin{lemma}
    \label{lemma:diffusion_symmmetry}
    If $V$ satisfies the symmetry relation $V(q_1, q_2) = V(q_2, q_1)$,
    then $D^{\gamma}_{11} = D^{\gamma}_{22}$.
    In addition, if $V(q_1, q_2) = V(q_1, -q_2)$ or $V(q_1, q_2) = V(-q_1, q_2)$,
    then $D_{12} = 0$.
\end{lemma}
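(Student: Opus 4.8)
The plan is to exploit the equivariance of the Langevin dynamics under the coordinate transformations that leave $V$ invariant, transferring these symmetries to the solutions of the Poisson equation~\eqref{eq:poisson_equation}. First I would introduce, for a map $\Theta$ of $\torus^2 \times \real^2$ of the form $(\vect q, \vect p) \mapsto (\Theta_0 \vect q, \Theta_0 \vect p)$ with $\Theta_0$ either the permutation of the two coordinates or the sign-flip of one coordinate, the composition operator $U_\Theta f = f \circ \Theta$ on $L^2(\mu)$. When $H \circ \Theta = H$ --- which holds for $\Theta(\vect q, \vect p) = (q_2, q_1, p_2, p_1)$ under the hypothesis $V(q_1,q_2) = V(q_2,q_1)$, and for $\Theta(\vect q,\vect p) = (q_1, -q_2, p_1, -p_2)$ under $V(q_1,q_2) = V(q_1,-q_2)$ (and analogously for the flip in the first variable under $V(q_1,q_2) = V(-q_1,q_2)$) --- the measure $\mu$ is invariant under $\Theta$. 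Hence $U_\Theta$ is a unitary involution on $L^2(\mu)$ fixing the constants, so it maps $L^2_0(\mu)$ onto itself.

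The key step is the commutation relation $U_\Theta \mathcal L = \mathcal L U_\Theta$ on the domain of $\mathcal L$. I would establish this either at the level of the SDE --- noting that if $(\vect q_t, \vect p_t)$ solves~\eqref{eq:langevin} then $\Theta(\vect q_t, \vect p_t)$ solves~\eqref{eq:langevin} driven by $\Theta_0$ applied to the Brownian motion, which is again a standard Brownian motion because $\Theta_0$ is orthogonal --- or, equivalently, by a direct computation on the generator~\eqref{eq:decomposition_generator}, using the chain rule together with the identities for $\grad V$ obtained by differentiating the symmetry relation satisfied by $V$ (for instance $\partial_1 V(q_2,q_1) = \partial_2 V(q_1,q_2)$ in the swap case, and $\partial_2 V(q_1,-q_2) = -\partial_2 V(q_1,q_2)$ in the reflection case). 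The fluctuation--dissipation part $\mathcal L_{\rm FD}$ is manifestly invariant under any such $\Theta$. Granting the commutation, and since $U_\Theta$ preserves $L^2_0(\mu)$, uniqueness of the solution to~\eqref{eq:poisson_equation} implies that $U_\Theta$ maps the solution of $-\mathcal L \phi = g$ to the solution of $-\mathcal L \phi = U_\Theta g$ for every $g \in L^2_0(\mu)$.

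It then remains to combine this with the expression $D^\gamma_{ij} = \langle \phi_{\vect e_i}, \vect e_j^\t \vect p \rangle$, which follows from~\eqref{eq:effective_diffusion_poisson} and the linearity of $\vect e \mapsto \phi_{\vect e}$ (so that $\phi_{\vect e_i}$ solves $-\mathcal L \phi = \vect e_i^\t \vect p$), together with the symmetry $\langle \phi_{\vect e_i}, \vect e_j^\t \vect p\rangle = \langle \phi_{\vect e_j}, \vect e_i^\t \vect p\rangle$ coming from~\eqref{eq:equivalent_definition_effective_diffusion}. In the swap case one has $U_\Theta(\vect e_1^\t \vect p) = \vect e_2^\t \vect p$, hence $U_\Theta \phi_{\vect e_1} = \phi_{\vect e_2}$, and since $U_\Theta$ is unitary, $D^\gamma_{22} = \langle \phi_{\vect e_2}, \vect e_2^\t \vect p\rangle = \langle U_\Theta \phi_{\vect e_1}, U_\Theta(\vect e_1^\t \vect p)\rangle = \langle \phi_{\vect e_1}, \vect e_1^\t \vect p\rangle = D^\gamma_{11}$. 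In the reflection case $V(q_1,q_2) = V(q_1,-q_2)$ one has $U_\Theta(\vect e_1^\t \vect p) = \vect e_1^\t \vect p$ and $U_\Theta(\vect e_2^\t \vect p) = -\vect e_2^\t \vect p$, so $U_\Theta \phi_{\vect e_1} = \phi_{\vect e_1}$ and $U_\Theta \phi_{\vect e_2} = -\phi_{\vect e_2}$; therefore both $\langle \phi_{\vect e_1}, \vect e_2^\t \vect p\rangle = \langle U_\Theta \phi_{\vect e_1}, U_\Theta(\vect e_2^\t \vect p)\rangle = -\langle \phi_{\vect e_1}, \vect e_2^\t \vect p\rangle$ and, similarly, $\langle \phi_{\vect e_2}, \vect e_1^\t \vect p\rangle$ equal their own negatives and hence vanish, giving $D^\gamma_{12} = 0$. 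The case $V(q_1,q_2) = V(-q_1,q_2)$ is identical after exchanging the two coordinates. The only delicate point in the whole argument is the commutation relation $U_\Theta \mathcal L = \mathcal L U_\Theta$ --- in particular verifying it on a core for $\mathcal L$ and tracking the signs produced by differentiating the symmetry relation for $V$; everything else is bookkeeping.
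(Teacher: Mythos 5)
Your proposal is correct and follows essentially the same route as the paper: the paper introduces the reflection operator $\mathcal R f(q_1,p_1,q_2,p_2)=f(q_1,p_1,-q_2,-p_2)$, shows $\mathcal R\mathcal L\mathcal R=\mathcal L$, deduces $\mathcal R\phi_1=\phi_1$ by uniqueness of the Poisson solution, and concludes $D^\gamma_{12}=\ip{\phi_1}{p_2}=0$, which is precisely your argument with $U_\Theta=\mathcal R$ in the reflection case. Your treatment is merely more explicit, spelling out the coordinate-swap case that the paper dispatches as ``obvious by symmetry'' and covering both reflections.
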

\begin{proof}
    The first claim is obvious by symmetry.
    Here we prove only that $D^{\gamma}_{12} = 0$ if the potential satisfies $V(q_1, q_2) = V(q_1, -q_2)$.
    To this end,
    let $\mathcal R$ be the operator on~$L^2_0(\mu)$ defined by $\mathcal R f(q_1, p_1, q_2, p_2) = f(q_1, p_1, -q_2, -p_2)$.
    A simple calculation shows that~$\mathcal R \mathcal L \mathcal R = \mathcal L$ and so,
    denoting by $\phi_1$ the solution to $- \mathcal L \phi_1 = p_1$ posed on~$L^2_0(\mu)$,
    we have $- \mathcal R \mathcal L \mathcal R \phi_1 = p_1$.
    This implies  $- \mathcal L \mathcal R \phi_1 = p_1$ because $\mathcal R^2 = \id$,
    so $\phi_1 = \mathcal R \phi_1$ by uniqueness of the solution to the Poisson equation.
    It then follows that $D^{\gamma}_{12} = \ip{\phi_1}{p_2} = 0$.
\end{proof}

We consider in this work a non-separable potential even simpler than~\eqref{eq:potential_julien}:
\begin{align}
    \label{eq:potential_simple}
    V(q) =  \mathcal V(q_1) + \mathcal V(q_2) + \delta \mathcal W(q_1, q_2) := - \frac{\cos(q_1) + \cos(q_2)}{2} - \delta \cos(q_1) \cos(q_2).
\end{align}
This potential satisfies the symmetry assumption of~\cref{lemma:diffusion_symmmetry},
and so the corresponding diffusion tensor is isotropic.
Therefore, in order to simplify the discussion,
we focus on estimating~$D^{\gamma}_{\vect e}$ only for the unit vector~$\vect e = (1, 0)^\t$.
Accordingly, let $\phi_\delta(q, p)$ denote the solution to the Poisson equation $- \mathcal L^{\delta} \phi_\delta = p_1$,
where the generator~$\mathcal L^{\delta}$ of the dynamics now reads
\begin{align}
    \label{eq:generator}
    \mathcal L^{\delta} = \mathcal L_0 + \delta \mathcal L_1
    := L_1 + L_2
    - \delta \grad \mathcal W \cdot \grad_{\vect p},
\end{align}
with $L_i = p_i \partial_{q_i} - \mathcal V'(q_i) \, \partial_{p_i} + \gamma \left(- p_i \partial_{p_i} + \beta^{-1} \partial^2_{p_i} \right)$
for $i \in \{1, 2\}$.
Notice that $\phi_\delta(\vect q, \vect p) = \phi_{\rm 1D}(q_1, p_1)$ when $\delta = 0$,
with $\phi_{\rm 1D}$ the solution to the one-dimensional Poisson equation $-L_1 \phi_{\rm 1D}(q_1, p_1) = p_1$.
In other words, it holds in this case that $\phi_0 = \phi_{\rm 1D} \otimes 1$,
where for two functions $f_1: \torus \times \real \rightarrow \real$ and $f_2: \torus \times \real \rightarrow \real$
the notation $f_1 \otimes f_2$ denotes the function $(\vect q, \vect p) \mapsto f_1(q_1,p_1) \, f_2(q_2,p_2)$.
For small $\delta$, it is natural to use $\phi_0$, or an approximation thereof,
as the function $\psi_{\vect e}$ in the definition of the control variate~\eqref{eq:definition_control_variate}.
Note that $d[\psi_{\vect e}]$ in~\eqref{eq:definition_control_variate} is an average with respect to the invariant distribution of the non-separable dynamics,
which we denote by $\mu_{\delta}$ to emphasize the dependence on $\delta$.
Specifically, we have
\begin{equation}
    \label{eq:effective_diffusion_d}
    d[\psi_{\vect e}]
    = \int_{\torus^2 \times \real^2} \abs{\grad_{\vect p} \psi_{\vect e}}^2 \, \d \mu_{\delta}.
\end{equation}
The following result establishes that~$\phi_\delta$ converges to $\phi_0$ in $L^2(\mu)$ in the limit as~$\delta \to 0$.

\begin{proposition}
    \label{proposition:convergence_gradient}
    Let $\phi_{\delta}$ and $\phi_0$ denote the solutions to the Poisson equation $- \mathcal L^{\delta} \phi_{\delta} = p_1$
    and its separable counterpart $- \mathcal L_0 \phi_0 = p_1$,
    these equations being posed in $L^2_0(\mu_{\delta})$ and $L^2_0(\mu_0)$, respectively.
    Then there exists $C > 0$ independent of $\delta$ and $\gamma$ such that
    \begin{align}
        \notag
        &\forall (\gamma, \delta) \in (0, \infty) \times [-1, 1], \qquad \\
        \label{eq:convergence_delta}
        &\qquad \norm*{\phi_\delta - \phi_0}[L^2(\mu_{\delta})]
        \leq C\lvert \delta \rvert \left(
            \max\bigl\{\gamma,\gamma^{-1}\bigr\} \norm{\grad_{\vect p} \phi_0}[L^2(\mu_{\delta})]
            + \norm{\phi_0}[L^2(\mu_{\delta})]
        \right),
    \end{align}
    and
    \begin{align}
        \label{eq:convergence_gradient}
        \forall (\gamma, \delta) \in (0, \infty) \times [-1, 1], \qquad
        \norm*{\grad_{\vect p} \phi_\delta - \grad_{\vect p} \phi_0}[L^2(\mu_{\delta})]
        \leq C \lvert \delta \rvert  \max\bigl\{1,\gamma^{-1}\bigr\} \norm{\grad_{\vect p} \phi_0}[L^2(\mu_{\delta})] .
    \end{align}
\end{proposition}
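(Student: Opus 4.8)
The idea is to regard $\phi_\delta - \phi_0$ as the solution of a Poisson equation for $\mathcal L^\delta$ whose right-hand side is of size $\mathcal O(\delta)$, to bound its $L^2(\mu_\delta)$ norm through the resolvent estimate implied by~\eqref{eq:decay_semigroup_general}, and then to upgrade this to a bound on $\grad_{\vect p}(\phi_\delta - \phi_0)$ via the energy identity already used in~\eqref{eq:equivalent_definition_effective_diffusion}. Concretely, since $\mathcal L^{\delta} = \mathcal L_0 + \delta \mathcal L_1$ with $\mathcal L_1 = -\grad\mathcal W\cdot\grad_{\vect p}$ and $-\mathcal L_0\phi_0 = p_1$, one has $-\mathcal L^\delta\phi_0 = p_1 + \delta\,\grad\mathcal W\cdot\grad_{\vect p}\phi_0$, so that, setting $h := -\delta\,\grad\mathcal W\cdot\grad_{\vect p}\phi_0$, we get $-\mathcal L^\delta(\phi_\delta - \phi_0) = h$. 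Because $\grad_{\vect p}\phi_0 = (\partial_{p_1}\phi_{\rm 1D},0)$ and $\grad\mathcal W$ is bounded, $h\in L^2(\mu_\delta)$ with $\norm{h}[L^2(\mu_\delta)] \le |\delta|\,\norm{\grad\mathcal W}[L^\infty]\,\norm{\grad_{\vect p}\phi_0}[L^2(\mu_\delta)]$; here I use that, since $\mathcal W$ is bounded, the $(q_1,p_1)$-marginal of $\mu_\delta$ is comparable, uniformly in $\delta\in[-1,1]$, to that of $\mu_0$, which in particular guarantees $\phi_0,\grad_{\vect p}\phi_0\in L^2(\mu_\delta)$. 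As $\phi_0$ need not be centered against $\mu_\delta$, I would work with $\widetilde\phi_0 := \phi_0 - \int\phi_0\,\d\mu_\delta \in L^2_0(\mu_\delta)$; since $\int\phi_0\,\d\mu_0 = 0$ and $\d\mu_\delta/\d\mu_0 = (Z_0/Z_\delta)\e^{-\beta\delta\mathcal W}$ differs from $1$ by $\mathcal O(\delta)$ in $L^\infty$, one has $\abs{\int\phi_0\,\d\mu_\delta}\le C|\delta|\,\norm{\phi_0}[L^2(\mu_\delta)]$. Note that $\grad_{\vect p}\widetilde\phi_0 = \grad_{\vect p}\phi_0$ and that $h = -\mathcal L^\delta(\phi_\delta - \widetilde\phi_0)$ lies in $L^2_0(\mu_\delta)$, since $\mathcal L^\delta$ maps $L^2_0(\mu_\delta)$ into itself.

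For~\eqref{eq:convergence_delta}: by~\eqref{eq:decay_semigroup_general} applied to the dynamics generated by $\mathcal L^\delta$ — with constants $L,\ell$ that can be taken independent of $\delta\in[-1,1]$, because the potential~\eqref{eq:potential_simple} and all its derivatives are bounded uniformly on this range — integrating $\e^{\theta\mathcal L^\delta}$ over $\theta\in(0,\infty)$ shows that $\mathcal L^\delta$ is boundedly invertible on $L^2_0(\mu_\delta)$ with $\norm{(\mathcal L^\delta)^{-1}}[\mathcal B(L^2_0(\mu_\delta))] \le (L/\ell)\max\{\gamma,\gamma^{-1}\}$. Applying this to $h$ gives
\[
    \norm{\phi_\delta - \widetilde\phi_0}[L^2(\mu_\delta)] \le \frac{L}{\ell}\max\{\gamma,\gamma^{-1}\}\,\norm{h}[L^2(\mu_\delta)] \le C|\delta|\max\{\gamma,\gamma^{-1}\}\,\norm{\grad_{\vect p}\phi_0}[L^2(\mu_\delta)],
\]
and~\eqref{eq:convergence_delta} follows by the triangle inequality together with the centering estimate above.

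For~\eqref{eq:convergence_gradient}: set $w := \phi_\delta - \widetilde\phi_0 \in L^2_0(\mu_\delta)$, so that $\grad_{\vect p}w = \grad_{\vect p}\phi_\delta - \grad_{\vect p}\phi_0$. Writing $\mathcal L^\delta = \mathcal L_{\rm Ham} + \gamma\mathcal L_{\rm FD}$ with the Hamiltonian part (associated with the full potential~$V$) skew-adjoint in $L^2(\mu_\delta)$, and integrating by parts in $\vect p$ exactly as in~\eqref{eq:equivalent_definition_effective_diffusion}, pairing $-\mathcal L^\delta w = h$ with $w$ yields the energy identity
\[
    \gamma\beta^{-1}\norm{\grad_{\vect p}\phi_\delta - \grad_{\vect p}\phi_0}[L^2(\mu_\delta)]^2 = \ip{h}{w}[L^2(\mu_\delta)] \le \norm{h}[L^2(\mu_\delta)]\,\norm{w}[L^2(\mu_\delta)].
\]
Inserting the bound on $\norm{h}[L^2(\mu_\delta)]$ and the bound on $\norm{w}[L^2(\mu_\delta)]$ obtained in the previous step, and using $\gamma^{-1}\max\{\gamma,\gamma^{-1}\} = \max\{1,\gamma^{-2}\}$, one obtains $\norm{\grad_{\vect p}\phi_\delta - \grad_{\vect p}\phi_0}[L^2(\mu_\delta)]^2 \le C\max\{1,\gamma^{-2}\}\,\delta^2\,\norm{\grad_{\vect p}\phi_0}[L^2(\mu_\delta)]^2$, which is~\eqref{eq:convergence_gradient} after taking square roots (the constant $C$ absorbs a power of $\beta$).

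The only genuinely delicate point is the $\delta$-uniformity of the constants in~\eqref{eq:decay_semigroup_general}: one has to revisit the hypocoercivity arguments of the references cited there and check that every potential-dependent quantity entering them (bounds on $\hess V$, the Poincaré constant of $\nu$, and so on) is controlled uniformly for $\delta\in[-1,1]$, which is the case for the explicit potential~\eqref{eq:potential_simple}. The remaining ingredients — the measure-perturbation estimate on $\abs{\int\phi_0\,\d\mu_\delta}$ and the justification of the integration by parts, which rests on the smoothness and Gaussian-type decay in $\vect p$ of $\phi_\delta$ and $\phi_0$ furnished by hypoellipticity — are routine.
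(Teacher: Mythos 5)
Your proof is correct and follows essentially the same route as the paper's: write $-\mathcal L^\delta(\phi_\delta - \phi_0)$ as the $\mathcal O(\delta)$ source $\pm\delta\,\grad\mathcal W\cdot\grad_{\vect p}\phi_0$, recenter $\phi_0$ against $\mu_\delta$ (your $\widetilde\phi_0$ is the paper's projection $\phi_0^\delta$), invoke the $\delta$-uniform resolvent bound $\norm{(\mathcal L^\delta)^{-1}}_{\mathcal B(L^2_0(\mu_\delta))}\lesssim\max\{\gamma,\gamma^{-1}\}$ for the $L^2$ estimate, and then pair the remainder equation against $\phi_\delta - \widetilde\phi_0$ to turn the $\gamma\beta^{-1}\norm{\grad_{\vect p}\cdot}^2$ dissipation term into the gradient estimate. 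The only cosmetic difference is that the paper pins the $\delta$-uniformity of the hypocoercivity constant on a specific reference (\cite{BFLS20}) rather than asking the reader to revisit the argument, but the content is the same.
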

Before we present the proof of this result,
a couple of remarks are in order.
\begin{remark}
    \label{remark:norms_equivalent}
    There exist constants $c_1, c_2 > 0$ such that
    \[
        \forall \delta \in [-1, 1], \qquad
        c_1 \norm{\dummy}[L^2(\mu_{\delta})] \leq \norm{\dummy}[L^2(\mu_0)] \leq c_2 \norm{\dummy}[L^2(\mu_{\delta})],
    \]
    and so the inequalities~\eqref{eq:convergence_delta} and~\eqref{eq:convergence_gradient} are valid also with~$\norm{\dummy}[L^2(\mu_{\delta})]$ replaced by $\norm{\dummy}[L^2(\mu_{0})]$.
\end{remark}
\begin{remark}
Using the definition~\eqref{eq:effective_diffusion_d} and~\cref{proposition:convergence_gradient},
we deduce that
\[
    \forall (\gamma, \delta) \in (0, 1) \times [-1, 1], \qquad
    \frac{ \left\lvert \sqrt{d[\phi_{\delta}]} - \sqrt{d[\phi_{0}]} \right\rvert } {\sqrt{d[\phi_{0}]}}
    \leq C \lvert\delta\rvert \gamma^{-1}.
\]
Consequently,
since if $\abs{\sqrt{x} - 1} \leq \varepsilon$ for $x \geq 0$ then $\abs{x-1} = \abs{\sqrt{x} - 1} \abs{\sqrt{x} + 1} \leq \varepsilon (2 + \varepsilon)$,
\[
    \forall (\gamma, \delta) \in (0, 1) \times [-1, 1], \qquad
    \frac{ \lvert d[\phi_{\delta}] - d[\phi_{0}] \rvert } {d[\phi_{0}]}
    \leq  C \lvert\delta\rvert \gamma^{-1} (2 + C \lvert\delta\rvert \gamma^{-1}).
\]
In particular, for~$\delta/\gamma$ fixed and sufficiently small,
it holds that $\liminf_{\gamma \to 0} \gamma D^{\gamma}_{\vect e} > 0$;
that is, the effective diffusion coefficient scales as $\gamma^{-1}$ in this case.
(Of course, this does not say anything about the behavior of the diffusion coefficient for $\abs{\delta} > 0$ fixed and $\gamma \to 0$.)
\end{remark}
\begin{proof}
    Let $\phi_0^{\delta}$ denote the $L^2(\mu_{\delta})$ orthogonal projection of $\phi_0$ onto $L^2_0(\mu_{\delta})$.
    We consider the decomposition~\eqref{eq:generator} of the generator and note that
    \begin{equation}
        \label{eq:remainder_expansion}
        \mathcal L^{\delta}(\phi_\delta - \phi_0^{\delta})
        = \mathcal L^{\delta}(\phi_\delta - \phi_0)
        = - \delta \, \mathcal L_1 \phi_0
        = \delta \, \grad \mathcal W \, \cdot \grad_{\vect p} \phi_0.
    \end{equation}
    It follows from~\eqref{eq:remainder_expansion} that
     \begin{align*}
         \norm{\phi_\delta - \phi_0^{\delta}}[L^2(\mu_{\delta})]
         &\leq \lvert \delta \rvert \norm{(\mathcal L^{\delta})^{-1}} [\mathcal B\left(L^2_0(\mu_{\delta})\right)]
         \norm{\grad \mathcal W \, \cdot \grad_{\vect p} \phi_0}[L^2(\mu_{\delta})].
     \end{align*}
     From the results in~\cite[Section 3.1]{BFLS20} which,
     as discussed in~\cref{ssub:bias_of_the_standard_estimator},
     were shown in various other works,
     it is clear that
     $\norm*{(\mathcal L^{\delta})^{-1}}[\mathcal B \left( L^2_{0}(\mu_{\delta})\right)] $ is bounded from above by $K(\delta) \max\{\gamma, \gamma^{-1}\}$ for all~$\gamma \in (0, \infty)$,
     with a constant $K(\delta)$ depending on $\delta$.
     A careful inspection of the result in~\cite{BFLS20} and its proof reveal that $K(\delta)$,
     which depends on $\delta$ through the Poincaré constant of $\mu_{\delta}$ among other things,
     is in fact such that $\sup \{K(\delta): -1 \leq \delta \leq 1\} < \infty$.
     Therefore,
     using in addition that $\grad \mathcal W$ is uniformly bounded,
     we obtain that
     \begin{align}
         \label{eq:convergence_l2}
         \forall (\gamma, \delta) \in (0, \infty) \times [-1, 1], \qquad
         \norm{\phi_\delta - \phi_0^{\delta}}[L^2(\mu_{\delta})]
         &\leq C \lvert \delta \rvert \max\{\gamma, \gamma^{-1}\}
         \norm{\grad_{\vect p} \phi_0}[L^2(\mu_{\delta})].
     \end{align}
    Here and throughout this proof, $C$ denotes a constant independent of $\gamma$ and $\delta$,
    whose value can change from occurrence to occurrence.
    In order to show~\eqref{eq:convergence_delta},
    it remains to bound $\norm*{\phi_0^{\delta}- \phi_0}$;
    the result then follows from the triangle inequality.
    To this end, note that
    \(
        \d \mu_{\delta} / \d \mu_0  = 1 + \delta f_{\delta}(q_1, q_2)
    \)
    for some appropriate smooth function~$f_{\delta}$ that is $0$ if $\delta = 0$ and is otherwise given explicitly by
    \[
        f_{\delta}(q_1, q_2) = \frac{1}{\delta} \left( \frac{Z(0)}{Z(\delta)} \e^{- \beta \delta \mathcal W(q_1, q_2)} - 1 \right),
        \qquad Z(\delta) = \int_{\torus \times \torus} \e^{- \beta\bigl(\mathcal V(q_1) + \mathcal V(q_2) + \delta \mathcal W(q_1, q_2)\bigr)} \, \d q_1 \d q_2.
    \]
    Denoting by $M$ the supremum of $\beta |\mathcal W|$ over $\torus^2$,
    we have $\e^{-\lvert \delta \rvert M} \leq Z(\delta)/Z(0) \leq \e^{\lvert \delta \rvert M}$,
    which implies that $\lvert f_{\delta}(q_1, q_2) \rvert \leq \lvert \delta \rvert^{-1}(\e^{2 \lvert \delta \rvert M} - 1) \leq 2M\e^{2M}$
    uniformly for $(\delta, q_1, q_2) \in [-1, 1] \times \torus \times \torus$.
    Therefore, using that $\phi_0$ has average 0 with respect to $\mu_0$,
    \[
        \norm*{\phi_0^\delta - \phi_0} [L^2(\mu_\delta)]
        = \abs{ \int_{\torus^2 \times \real^2} \phi_0 \, \d \mu_{\delta} }
        = \abs{ \int_{\torus^2 \times \real^2} \phi_0 \,\bigl( 1 + \delta f_{\delta}(q_1, q_2) \bigr)\d \mu_{0}}
        \leq C \lvert \delta \rvert \norm{\phi_0}[L^2(\mu_0)].
    \]
    which leads to~\eqref{eq:convergence_delta} in view of~\cref{remark:norms_equivalent}.

    We now show~\eqref{eq:convergence_gradient}.
    Taking the $L^2(\mu_{\delta})$ inner product of both sides of~\eqref{eq:remainder_expansion} with $\phi_\delta - \phi_0^{\delta}$,
    noting that $\grad_{\vect p} \phi_0^{\delta} = \grad_{\vect p} \phi_0$
    and using~\eqref{eq:convergence_l2},
    we obtain,
    for all $(\gamma, \delta) \in (0, \infty) \times [-1, 1]$,
    \begin{align*}
        \notag
        \gamma \beta^{-1} \norm*{\grad_{\vect p} \phi_\delta - \grad_{\vect p} \phi_0}[L^2(\mu_{\delta})]^2
        &\leq \lvert \delta \rvert \norm*{\grad \mathcal W \cdot \grad_{\vect p} \phi_0}[L^2(\mu_{\delta})] \norm*{\phi_\delta - \phi_0^{\delta}}[L^2(\mu_{\delta})] \\
        \label{eq:intermediate_delta}
        &\leq C \lvert \delta \rvert \norm*{\grad_{\vect p} \phi_0}[L^2(\mu_{\delta})] \norm*{\phi_\delta - \phi_0^{\delta}}[L^2(\mu_{\delta})] \\
        &\leq  C \delta^2 \max\{\gamma, \gamma^{-1}\} \norm*{\grad_{\vect p} \phi_0}[L^2(\mu_{\delta})]^2,
    \end{align*}
    which gives the claimed result.
\end{proof}

A direct corollary of~\cref{proposition:convergence_gradient} is that,
if the exact solution to the Poisson equation in one spatial dimension is employed for constructing the control variate,
i.e.\ if $\psi_{\vect e} = \phi_0$ in the control variate~\eqref{eq:definition_control_variate},
then by~\cref{proposition:asymptotic_variance} and~\eqref{eq:convergence_gradient}
we have
\begin{align*}
    \lim_{T \to \infty} \var \bigl[v(T)\bigr]
    &\leq 4 \gamma \beta^{-1} \norm{\grad_{\vect p} \phi_{\delta} - \grad_{\vect p} \phi_0}[L^2(\mu_{\delta})]^2 (d[\phi_{\delta}] + d[\phi_0]) \\
    &\leq 4 C \delta^2 \gamma^{-1} \beta^{-1} \norm{\grad_{\vect p} \phi_{0}}^2 (d[\phi_{\delta}] + d[\phi_0]) \\
    & = 4 C \delta^2 \gamma^{-2} d[\phi_{0}] (d[\phi_{\delta}] + d[\phi_0]),
\end{align*}
which shows that, in this case,
the asymptotic relative standard deviation admits an upper bound scaling as $\lvert \delta \rvert / \gamma$
in the limit as $\lvert \delta \rvert/\gamma \to 0$,
provided that $\delta \in [-1, 1]$ and $\gamma \in (0, 1)$.

\Cref{fig:time_bias_variance_2d} depicts the behavior of the mobility with respect to $\gamma$ for various values of $\delta$.
For the sake of clarity,
only the data calculated without a control variate,
i.e.\ with the simple estimator~$u(T)$ given in~\eqref{eq:simple_estimator},
is depicted in this figure.
The reason for this choice is that,
as we shall see in the next figure,
the estimator $u(T)$ has the smallest variance when $\gamma \ll 1$ and~$\delta/\gamma \gg 1$.
It appears clearly from the figure that the mobility behaves,
over the range of values of $\gamma$ we consider,
as $\gamma^{-\sigma}$ for $\sigma \in (0, 1]$ in the underdamped regime,
with an exponent~$\sigma$ that decreases as $\delta > 0$ increases (at least for sufficiently small values of~$\delta$). As mentioned in the introduction, it is an open problem to identify classes of potentials for which a universal scaling of the diffusion coefficient with respect to the friction can be rigorously shown in the nonseperable case.

The variance of the estimators obtained using the control approach described above,
where~$\psi_{\vect e}$ is constructed from an approximate solution to the Poisson equation in the one-dimensional setting,
is presented in~\cref{fig:time_bias_deviation_2d}.
The control variates corresponding to the data in the left and right panels
are constructed with the Galerkin approach of~\cref{sub:galerkin_approach}
and the underdamped approach of~\cref{sub:underdamped_approach}, respectively.
We observe that unless $\delta = 0$, in which case we recover the one-dimensional case,
there seems to be, for every~$\delta > 0$ and for each of the two choices of $\psi_{\vect e}$,
a threshold value of $\gamma$ below which the control variate ceases to be useful.
This is not unexpected, in view of~\cref{proposition:convergence_gradient}.
Although the control variates we consider are advantageous in certain regimes,
further work is necessary in order to develop efficient estimators in the small~$\gamma$, constant $\delta$ regime.

\begin{figure}[ht]
    \centering
    \includegraphics[width=0.99\linewidth]{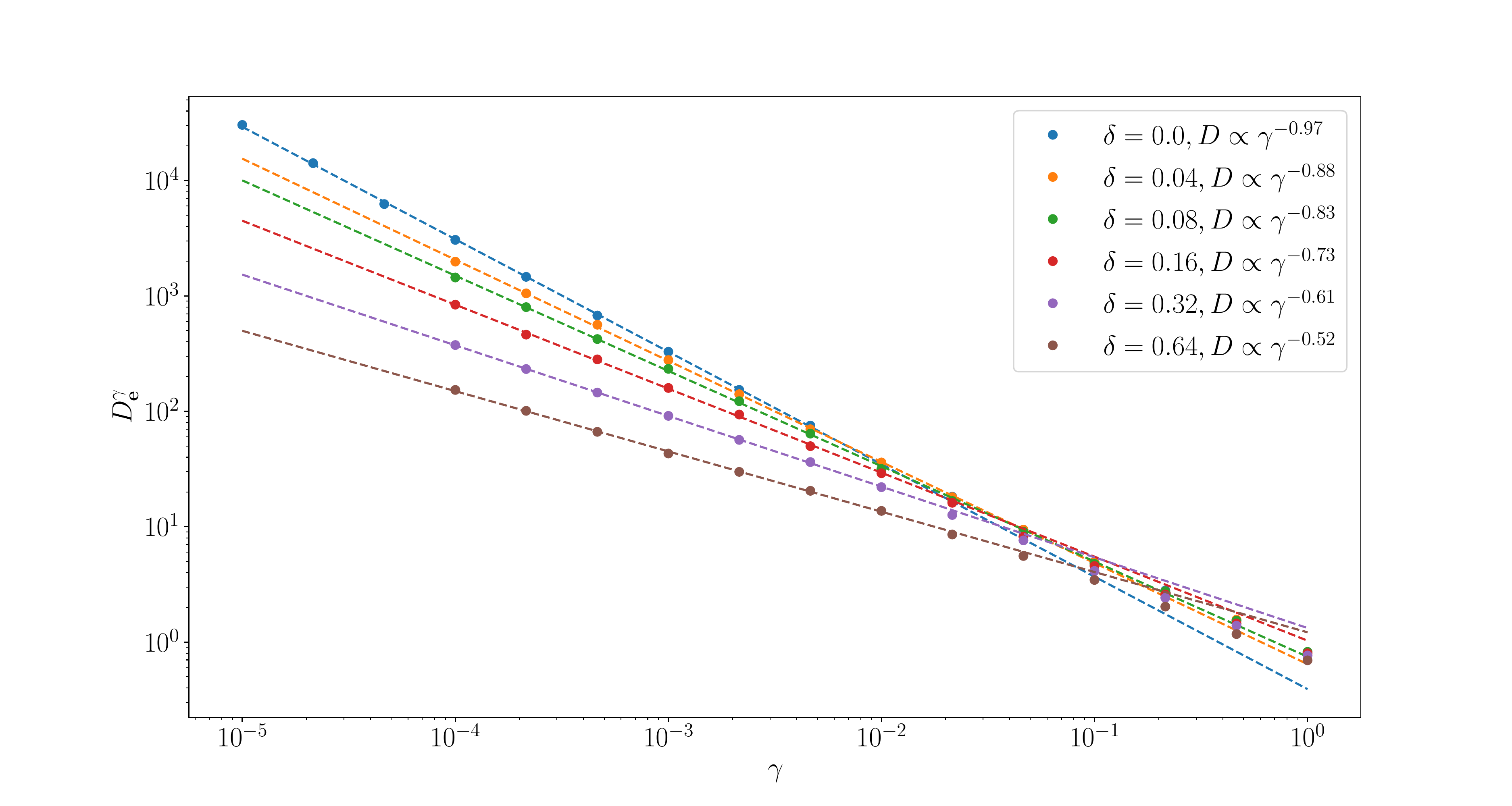}
    \caption{
        Effective diffusion coefficient as a function of $\gamma$,
        for various values of $\delta$.
        The straight dashed lines, which correspond to functions of the form $\gamma \mapsto C \gamma^{-\sigma}$ with powers $\sigma$ indicated in the legend,
        are obtained by linear interpolation (in the log-log plot) using only values of $\gamma$ less than or equal to $10^{-2}$.
    }%
    \label{fig:time_bias_variance_2d}
\end{figure}

\begin{figure}[ht]
    \centering
    \includegraphics[width=0.49\linewidth]{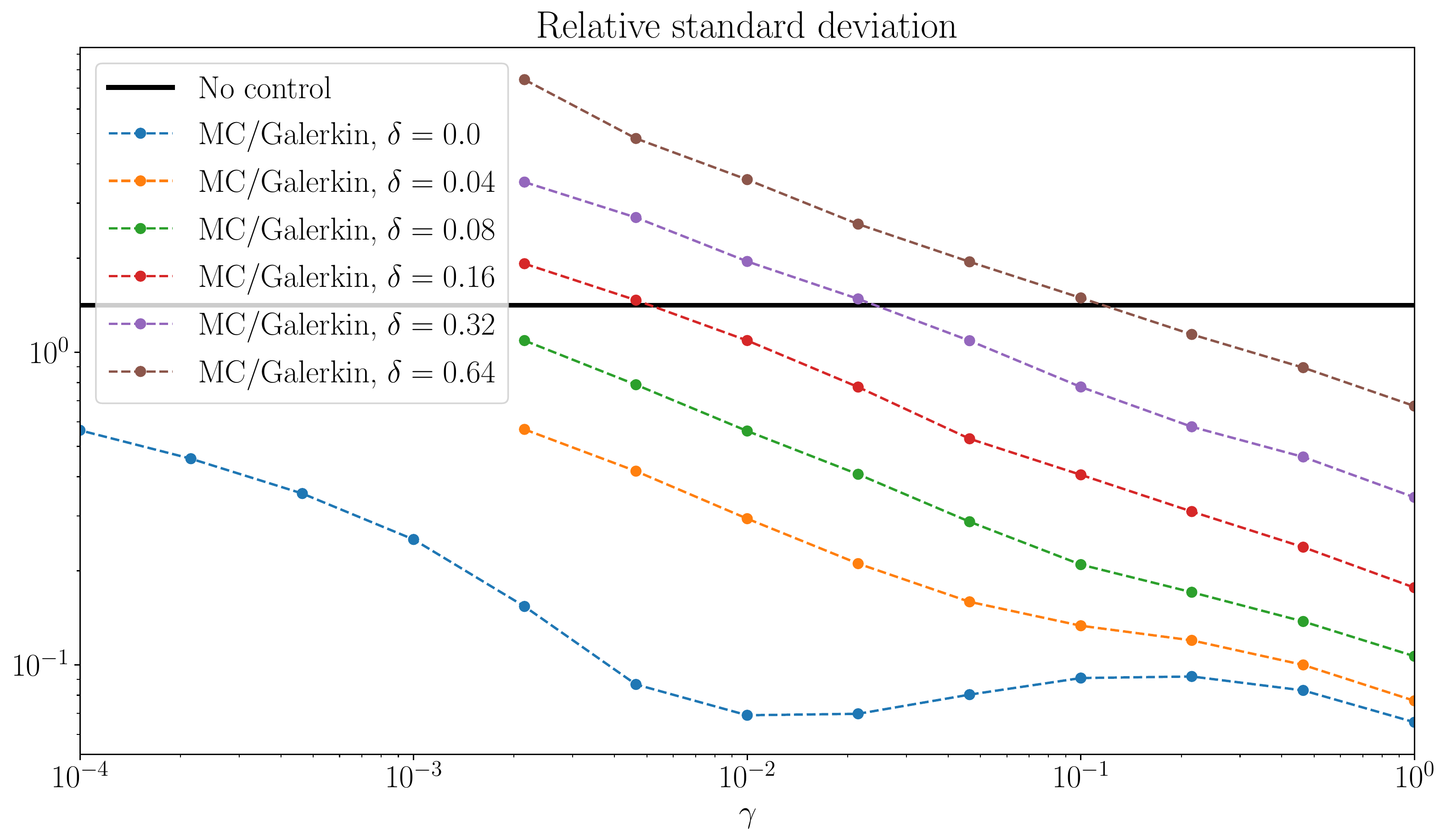}
    \includegraphics[width=0.49\linewidth]{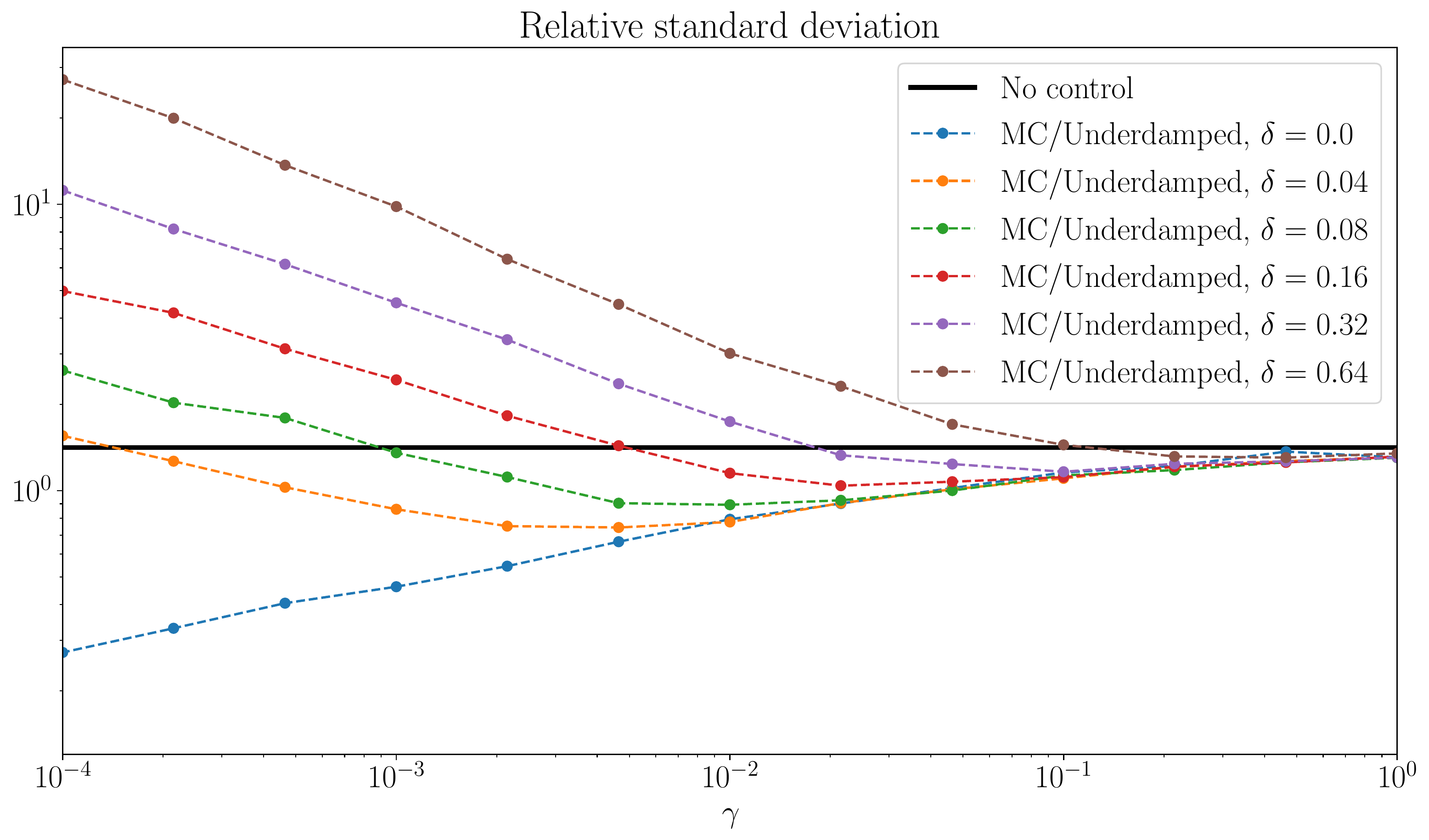}
    \caption{
        Relative standard deviation of the estimator $v(T)$ in~\eqref{eq:improved_estimator} for two-dimensional Langevin dynamics,
        when the approximate solution to the Poisson equation is constructed by tensorization from the solution of the one-dimensional equation.
        The one-dimensional solution is approximated using either the Galerkin approach of \cref{sub:galerkin_approach} (\textbf{left} panel)
        or the underdamped approach of \cref{sub:underdamped_approach} (\textbf{right} panel).
    }%
    \label{fig:time_bias_deviation_2d}
\end{figure}

\section{Conclusions and perspectives for future work}%
\label{sec:conclusions_and_perspectives_for_future_work}
In this work,
we showed how techniques based on control variates can be employed for improving estimators of dynamical properties,
here the mobility of Langevin dynamics based on Einstein's formula.
The control variate approach we propose requires the knowledge of an approximate solution to a Poisson equation involving the generator of the dynamics.
We studied several practical approaches for constructing this approximate solution,
and we obtain general bounds on the bias and variance of the proposed estimator in terms of the approximation error.

In the one-dimensional setting,
we demonstrated the efficiency of control variates
(i) obtained by a Fourier/Hermite spectral method for the Poisson equation, and
(ii) based on an explicit expression for the limiting solution to the Poisson equation in the underdamped limit.
For both Langevin and generalized Langevin dynamics,
the latter approach leads to a significant variance reduction in the very small friction regime $\gamma \leq 10^{-3}$,
in which fully deterministic Galerkin methods are inaccurate.

The numerical experiments we presented for the one-dimensional generalized Langevin dynamics also corroborate prior findings in~\cite{GPGSUV21},
obtained through formal asymptotics,
concerning the asymptotic behavior of the mobility in the small friction limit.
More precisely, they indicate that the mobility scales as $1/\gamma$ as $\gamma \to 0$ when the parameter~$\nu$ encoding memory is fixed,
with a prefactor different from that corresponding to Langevin dynamics.

The two-dimensional setting for Langevin dynamics is much more challenging because of the high dimensionality of the state space of the dynamics
and the lack of theoretical results for the underdamped limit in the case of a non-separable potential.
Nevertheless, the control variates developed for one-dimensional Langevin dynamics may still be applied with appropriate tensorization,
and we show by means of numerical experiments that
they lead to estimators with reduced variance provided that
the parameter multiplying the non-separable part of the potential is small with respect to the friction $\gamma$.

We anticipate that, in the future,
approaches based on physics-informed neural networks (PINN)~\cite{MR3759415,MR3881695} will be useful for constructing more accurate solutions to the Poisson equation~\eqref{eq:poisson_equation} in two spatial dimensions than is possible using a Galerkin method,
providing scope for greater variance reduction for Monte Carlo estimators of the mobility.
The body of literature on the use of neural networks in the context of high-dimensional PDEs has grown very rapidly in recent years,
and there is increasing evidence that these approaches are able to overcome the curse of dimensionality in a variety of PDE applications~\cite{MR3847747,MR4338293,2018arXiv180907321J,2019arXiv190810828G,2019arXiv190110854H,2021arXiv210614473D,MR4203091,pmlr-v145-zhai22a}.

\paragraph{Acknowledgements.}
The work of GS and UV was partially funded by the European Research Council (ERC) under the European Union's Horizon 2020 research and innovation programme (grant agreement No 810367),
and by the Agence Nationale de la Recherche under grant ANR-21-CE40-0006 (SINEQ).
The work of GAP was partially supported by JPMorgan Chase \& Co under J.P. Morgan A.I. Research Awards in 2019 and 2021 and by the EPSRC, grant number EP/P031587/1.
The work of UV was partially funded by the Fondation Sciences Mathématiques de Paris (FSMP),
through a postdoctoral fellowship in the ``mathematical interactions'' programme.

\appendix
\section{Proof of \texorpdfstring{\cref{proposition:semigroup_meanzero_observable}}{Proposition 2.1}}%
\label{sec:auxiliary_technical_results}

The proof is based on several lemmata.
Before presenting these, we introduce some notation and recall useful background material.
For a measure $\pi$, we define the weighted Sobolev space~$H^i(\pi)$ as the subspace of $L^2(\pi)$
of functions whose derivatives up to order $i$ are in $L^2(\pi)$.
The associated norm is given by
\[
    \norm{u}_i^2 = \sum_{|\vect \alpha| \leq i} \norm*{\partial^{\vect \alpha} f}^2,
\]
where we use the standard multi-index notation.
Throughout this section, $\norm{\dummy}$ and $\ip{\dummy}{\dummy}$ are the norm and inner product of~$L^2(\pi)$.
The probability measure $\pi$ implicit in this notation will be either obvious in the context or explicitly specified.
We recall that $\pi$ is said to satisfy a Poincaré inequality with constant $R$ if
\begin{equation}
    \label{eq:poincare}
    \tag{P$_R$}
    \forall f \in H^1(\pi) \cap L^2_0(\pi), \qquad
    \norm*{f}^2 \leq \frac{1}{R} \norm{\grad f}^2.
\end{equation}
It is well known that the Gaussian measure $\kappa$ in~\eqref{eq:definition_prob_measures} satisfies~\eqref{eq:poincare} with a constant $R_{\kappa} = \beta$;
see, for example,~\cite[Lemma 2.1]{MR4071827} for a short proof.
We will use the following standard result which follows,
for example, from~\cite[Chapter 9]{lorenzi2006analytical}.
\begin{lemma}
    \label{lemma:semigroup}
    It holds that
    \[
        D(\mathcal L_{\rm FD}) := \Bigl\{ f \in L^2_0(\kappa): \mathcal L_{\rm FD} f \in L^2_0(\kappa) \Bigr\} = H^2(\kappa) \cap L^2_0(\kappa).
    \]
    In addition, the unbounded operator
    \(
        \mathcal L_{\rm FD}
    \)
    with domain $D(\mathcal L_{\rm FD})$ has a discrete spectrum and
    generates a contraction semigroup~$(\e^{t \mathcal L_{\rm FD}})_{t \geq 0}$ on $L^2_0(\kappa)$,
    with $\lVert \e^{t \mathcal L_{\rm FD}} f \rVert \leq \e^{-t} \lVert f \rVert$ for all $f \in L^2_0(\kappa)$ and $t \geq 0$.
\end{lemma}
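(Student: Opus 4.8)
The plan is to identify $\mathcal L_{\rm FD} = -\vect p \cdot \grad_{\vect p} + \beta^{-1} \laplacian_{\vect p}$ with the classical Ornstein--Uhlenbeck operator associated with the Gaussian measure $\kappa$, and to leverage its explicit spectral decomposition in the tensorized Hermite basis. Let $(H_{\vect k})_{\vect k \in \nat^d}$ be the family of tensor-product Hermite polynomials that is orthonormal in $L^2(\kappa)$, normalized so that $\mathcal L_{\rm FD} H_{\vect k} = - |\vect k|_1 H_{\vect k}$, where $|\vect k|_1 = k_1 + \dots + k_d$; these satisfy the differentiation rule $\partial_{p_i} H_{\vect k} = \sqrt{\beta k_i}\, H_{\vect k - \vect e_i}$. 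Since $(H_{\vect k})_{\vect k \in \nat^d}$ is a complete orthonormal system in $L^2(\kappa)$, every $f \in L^2(\kappa)$ expands as $f = \sum_{\vect k} c_{\vect k} H_{\vect k}$, and $L^2_0(\kappa)$ is precisely the set of such $f$ with $c_{\vect 0} = 0$.

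For the domain identity, I would express the relevant norms in terms of the coefficients $c_{\vect k}$ using Parseval's identity and the differentiation rule above. One obtains $\norm{\mathcal L_{\rm FD} f}^2 = \sum_{\vect k} |\vect k|_1^2 |c_{\vect k}|^2$, while a short computation of $\norm{\grad_{\vect p} f}^2$ and $\norm{\hess_{\vect p} f}^2$ gives $\norm{f}[H^2(\kappa)]^2 = \sum_{\vect k} \bigl(1 + \beta |\vect k|_1 + \beta^2 (|\vect k|_1^2 - |\vect k|_1)\bigr) |c_{\vect k}|^2$. The weight sequences $|\vect k|_1^2$ and $1 + \beta |\vect k|_1 + \beta^2 (|\vect k|_1^2 - |\vect k|_1)$ are comparable up to multiplicative constants depending only on $\beta$ and $d$, so both conditions $f \in H^2(\kappa)$ and $\mathcal L_{\rm FD} f \in L^2(\kappa)$ are equivalent to $\sum_{\vect k} |\vect k|_1^2 |c_{\vect k}|^2 < \infty$; intersecting with $\{c_{\vect 0} = 0\}$ yields $D(\mathcal L_{\rm FD}) = H^2(\kappa) \cap L^2_0(\kappa)$. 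The spectrum is then read off from the expansion: on $L^2_0(\kappa)$ it consists of the eigenvalues $\{-n : n \geq 1\}$, each of finite multiplicity $\binom{n+d-1}{d-1}$, and is therefore discrete.

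For the semigroup estimate, I would note that $\mathcal L_{\rm FD}$ is symmetric and nonpositive on $L^2(\kappa)$, since integration by parts gives the Dirichlet form $\ip{-\mathcal L_{\rm FD} f}{g} = \beta^{-1} \ip{\grad_{\vect p} f}{\grad_{\vect p} g}$. By the Hille--Yosida theorem it therefore generates a contraction semigroup on $L^2(\kappa)$, and this semigroup leaves $L^2_0(\kappa)$ invariant because it preserves both the constants and, by symmetry, their orthogonal complement. On $L^2_0(\kappa)$, the Poincaré inequality~\eqref{eq:poincare} for $\kappa$ with constant $R_\kappa = \beta$ gives $\ip{-\mathcal L_{\rm FD} f}{f} = \beta^{-1} \norm{\grad_{\vect p} f}^2 \geq \norm{f}^2$, so the energy estimate $\frac{\d}{\d t} \norm{\e^{t \mathcal L_{\rm FD}} f}^2 \leq - 2 \norm{\e^{t \mathcal L_{\rm FD}} f}^2$ together with Gr\"onwall's lemma yields $\norm{\e^{t \mathcal L_{\rm FD}} f} \leq \e^{-t} \norm{f}$ for all $f \in L^2_0(\kappa)$ and $t \geq 0$; equivalently, this is immediate from the identity $\e^{t \mathcal L_{\rm FD}} f = \sum_{\vect k} \e^{-|\vect k|_1 t} c_{\vect k} H_{\vect k}$.

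I do not expect a genuine obstacle, as this is essentially a textbook statement; the one point requiring a little care is the comparison of the weighted $H^2(\kappa)$ norm with the graph norm of $\mathcal L_{\rm FD}$ --- in particular the a priori nontrivial fact that $\vect p \cdot \grad_{\vect p} f \in L^2(\kappa)$ whenever $f$ and its second-order derivatives belong to $L^2(\kappa)$ --- which is exactly what the Hermite computation above provides, and which is also the content of~\cite[Chapter 9]{lorenzi2006analytical}.
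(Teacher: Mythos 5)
Your proof is correct, and it is worth noting up front that the paper does not actually prove \cref{lemma:semigroup}: it is presented as a ``standard result'' with a pointer to~\cite[Chapter 9]{lorenzi2006analytical}, and the paper then immediately exploits the Hermite eigenbasis in the expansion~\eqref{eq:series_expansion}. Your Hermite-spectral argument is therefore the natural route and entirely consistent with the paper's framework: the eigenvalue relation $\mathcal L_{\rm FD} H_{\vect k} = -|\vect k|_1 H_{\vect k}$, the differentiation rule $\partial_{p_i} H_{\vect k} = \sqrt{\beta k_i}\,H_{\vect k - \vect e_i}$ (for the rescaled Hermite basis orthonormal in $L^2(\kappa)$), Parseval, and the Poincar\'e inequality~\eqref{eq:poincare} with $R_\kappa=\beta$ give exactly the domain identification, the discrete spectrum, and the contraction rate $\e^{-t}$ on $L^2_0(\kappa)$. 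One small point you gloss over: the two weight sequences $|\vect k|_1^2$ and $1 + \beta|\vect k|_1 + \beta^2(|\vect k|_1^2 - |\vect k|_1)$ are \emph{not} globally comparable (at $\vect k = \vect 0$ the first vanishes while the second equals $1$); they are comparable only on $\{|\vect k|_1 \geq 1\}$, but since $f \in L^2(\kappa)$ already controls the $\vect k = \vect 0$ term, the equivalence of summability conditions still holds, and the subsequent intersection with $\{c_{\vect 0}=0\}$ renders the discrepancy harmless. Stating this explicitly would make the domain argument airtight, but the conclusion $D(\mathcal L_{\rm FD}) = H^2(\kappa) \cap L^2_0(\kappa)$ stands.
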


The eigenfunctions of $\mathcal L_{\rm FD}$ are given by rescaled Hermite polynomials,
which form a complete orthonormal set of~$L^2(\kappa)$.
The eigenvalue associated with the constant polynomial is~0,
and all the other eigenvalues are integer numbers less than or equal to $-1$.
We denote the normalized eigenfunctions by $(\mathscr H_{\vect \alpha})_{\vect \alpha \in \nat^d}$
and the corresponding eigenvalues by $(-\lambda_{\vect \alpha})_{\vect \alpha \in \nat^d}$.
Then
\begin{equation}
    \label{eq:series_expansion}
    \forall f \in L^2_0(\kappa), \qquad
    \e^{t \mathcal L_{\rm FD}} f
    = \sum_{\abs{\vect \alpha} \geq 1} \e^{-\lambda_{\vect \alpha} t} \ip{f}{\mathscr H_{\vect \alpha}} \mathscr H_{\vect \alpha}.
\end{equation}
This formula can be employed to show the following estimates.
\begin{corollary}
    If $f \in L^2_0(\kappa)$, then $\e^{t \mathcal L_{\rm FD}} f \in H^1(\kappa)$ for $t > 0$ and
    \begin{subequations}
    \begin{equation}
        \label{lemma:elliptic_reg}
        \forall t > 0, \qquad
        \norm{\grad_{\vect p} \e^{t \mathcal L_{\rm FD}} f}_{L^2(\kappa)} \leq \frac{C}{\sqrt{t}} \norm{f}_{L^2(\kappa)}.
    \end{equation}
    On the other hand,
    if $f \in H^1(\kappa)$,
    then
    \begin{equation}
        \label{lemma:overdamped_langevin_decay_derivatives}
        \forall t \geq 0, \qquad
        \norm{\grad_{\vect p} \e^{t \mathcal L_{\rm FD}} f}_{L^2(\kappa)} \leq \e^{-t} \norm{\grad_{\vect p} f}_{L^2(\kappa)}.
    \end{equation}
    \end{subequations}
\end{corollary}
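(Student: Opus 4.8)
The plan is to read off both estimates from the spectral decomposition of $\mathcal L_{\rm FD}$ together with the ladder relation satisfied by its eigenfunctions under differentiation in $\vect p$. Recall that the normalized eigenfunctions may be taken to be the tensor products $\mathscr H_{\vect\alpha}(\vect p) = \prod_{i=1}^d H_{\alpha_i}(p_i)$ of one-dimensional rescaled Hermite polynomials, with eigenvalue $-\lambda_{\vect\alpha}$ and $\lambda_{\vect\alpha} = \alpha_1 + \dots + \alpha_d =: |\vect\alpha|$, and that the one-dimensional normalization can be fixed so that $H_n' = \sqrt{\beta n}\,H_{n-1}$, hence $\partial_{p_i}\mathscr H_{\vect\alpha} = \sqrt{\beta\,\alpha_i}\,\mathscr H_{\vect\alpha - \vect e_i}$, with the convention that $\mathscr H_{\vect\beta}=0$ if some component of $\vect\beta$ is negative. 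Given $f \in L^2_0(\kappa)$ with expansion $f = \sum_{|\vect\alpha|\geq 1} c_{\vect\alpha}\,\mathscr H_{\vect\alpha}$, I would differentiate the series~\eqref{eq:series_expansion} term by term — legitimate for $t>0$ since the weights $\e^{-|\vect\alpha|t}$ decay geometrically — to get $\partial_{p_i}\e^{t\mathcal L_{\rm FD}}f = \sqrt\beta\sum_{|\vect\alpha|\geq 1}\e^{-|\vect\alpha|t}\sqrt{\alpha_i}\,c_{\vect\alpha}\,\mathscr H_{\vect\alpha - \vect e_i}$, and then, summing the squared $L^2(\kappa)$-norms over $i$ and using orthonormality of the $\mathscr H_{\vect\alpha}$ together with $\sum_i\alpha_i = |\vect\alpha|$,
\[
    \norm{\grad_{\vect p}\e^{t\mathcal L_{\rm FD}}f}[L^2(\kappa)]^2 = \beta\sum_{|\vect\alpha|\geq 1}|\vect\alpha|\,\e^{-2|\vect\alpha|t}\,|c_{\vect\alpha}|^2 .
\]

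Both bounds follow from this identity by elementary estimates on the scalar factor. For~\eqref{lemma:elliptic_reg} I would bound $|\vect\alpha|\,\e^{-2|\vect\alpha|t} \leq \sup_{x>0} x\,\e^{-2xt} = \tfrac{1}{2\e t}$, which gives $\norm{\grad_{\vect p}\e^{t\mathcal L_{\rm FD}}f}[L^2(\kappa)]^2 \leq \tfrac{\beta}{2\e t}\sum_{|\vect\alpha|\geq 1}|c_{\vect\alpha}|^2 = \tfrac{\beta}{2\e t}\,\norm{f}[L^2(\kappa)]^2$, i.e.\ the claim with $C = \sqrt{\beta/(2\e)}$; in particular the right-hand side is finite for every $t>0$, so $\e^{t\mathcal L_{\rm FD}}f \in H^1(\kappa)$. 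For~\eqref{lemma:overdamped_langevin_decay_derivatives}, when $f \in H^1(\kappa)$ the identity evaluated at $t=0$ reads $\norm{\grad_{\vect p}f}[L^2(\kappa)]^2 = \beta\sum_{|\vect\alpha|\geq 1}|\vect\alpha|\,|c_{\vect\alpha}|^2 < \infty$ — the constant mode of $f$ contributes nothing to $\grad_{\vect p}f$ and is left invariant by the semigroup, so it may be discarded — and since $\e^{-2|\vect\alpha|t}\leq\e^{-2t}$ for every $|\vect\alpha|\geq 1$, the identity immediately yields $\norm{\grad_{\vect p}\e^{t\mathcal L_{\rm FD}}f}[L^2(\kappa)]^2 \leq \e^{-2t}\,\norm{\grad_{\vect p}f}[L^2(\kappa)]^2$.

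There is no genuine obstacle here; the only points deserving attention, both routine, are recording the normalization of the rescaled Hermite polynomials so that the derivative relation holds with the stated constants, and justifying the series manipulations — for~\eqref{lemma:overdamped_langevin_decay_derivatives} one should work with the partial sums of $f$ and pass to the limit using that $\grad_{\vect p}$ is a closed operator on $L^2(\kappa)$, since a priori $\grad_{\vect p}f$ is only known to lie in $L^2(\kappa)$. As an alternative avoiding the explicit Hermite algebra, one can use the intertwining identity $\grad_{\vect p}\e^{t\mathcal L_{\rm FD}} = \e^{-t}\,\e^{t\mathcal L_{\rm FD}}\grad_{\vect p}$, equivalent to $[\partial_{p_i},\mathcal L_{\rm FD}] = -\partial_{p_i}$: combined with the $L^2(\kappa)$-contractivity of $\e^{t\mathcal L_{\rm FD}}$ from \cref{lemma:semigroup} it gives~\eqref{lemma:overdamped_langevin_decay_derivatives} at once, while~\eqref{lemma:elliptic_reg} then amounts to the standard regularizing estimate for the Ornstein--Uhlenbeck semigroup, which is itself most easily obtained from the same spectral decomposition. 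I therefore favor the route through~\eqref{eq:series_expansion}, as it is self-contained given what has just been established.
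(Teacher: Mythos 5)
Your proof is correct and follows essentially the same route as the paper: expand in the Hermite eigenbasis using~\eqref{eq:series_expansion}, reduce the squared $H^1$-seminorm of $\e^{t\mathcal L_{\rm FD}}f$ to the scalar sum $\beta\sum_{|\vect\alpha|\geq 1}\lambda_{\vect\alpha}\e^{-2\lambda_{\vect\alpha}t}|c_{\vect\alpha}|^2$, and conclude via the elementary bounds $\lambda\e^{-2\lambda t}\leq(2\e t)^{-1}$ and $\e^{-2\lambda_{\vect\alpha}t}\leq\e^{-2t}$, extending to general $f$ by density. The only cosmetic difference is that you obtain the key identity from the Hermite ladder relation and term-by-term differentiation, whereas the paper reaches it via $\norm{\grad_{\vect p}g}^2_{L^2(\kappa)} = -\beta\ip{\mathcal L_{\rm FD}g}{g}_{L^2(\kappa)}$ for $g\in D(\mathcal L_{\rm FD})$; both are equivalent and your closing remark on the intertwining relation $\grad_{\vect p}\e^{t\mathcal L_{\rm FD}} = \e^{-t}\e^{t\mathcal L_{\rm FD}}\grad_{\vect p}$ is a nice additional observation.
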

\begin{proof}
    Assume first that $f \in D(\mathcal L_{\rm FD})$.
    Then $\e^{t \mathcal L_{\rm FD} } f \in D(\mathcal L_{\rm FD})$ for all $t > 0$ by the general properties of operator semigroups~\cite{MR710486} and,
    using~\eqref{eq:series_expansion}, we obtain
    \begin{equation}
        \label{eq:gradient_series}
        \norm{\grad_{\vect p} \e^{t \mathcal L_{\rm FD}} f}_{L^2(\kappa)}^2 = \beta \ip{\mathcal L_{\rm FD} \e^{t \mathcal L_{\rm FD}} f}{\e^{t \mathcal L_{\rm FD}} f}
        = \sum_{\abs{\vect \alpha} \geq 1} \beta \lambda_{\vect \alpha} \e^{- 2\lambda_{\vect \alpha} t} \lvert \ip{f}{\mathscr H_{\vect \alpha}} \rvert^2.
    \end{equation}
    The first claim follows from the inequality
    \(
        \lambda \e^{- 2\lambda t} \leq \frac{\e^{-1}}{2 t},
    \)
    which is valid for all $\lambda \in \real$.
    The second claim follows from the fact that $\e^{-2\lambda_{\vect \alpha} t} \leq \e^{-2t}$ for all $|\vect \alpha| \geq 1$.

    Consider now the case where $f \in L^2(\kappa)$ does not necessarily belong to $D(\mathcal L_{\rm FD})$
    and let
    \[
        f_n = \sum_{0 \leq \abs{\vect \alpha} \leq n} \ip{f}{\mathscr H_{\vect \alpha}} \mathscr H_{\vect \alpha} \in D(\mathcal L_{\rm FD})
    \]
    For fixed $t > 0$, the sequence $(\e^{t \mathcal L_{\rm FD}}f_n)_{n\in \nat}$ is a Cauchy sequence in $H^1(\kappa)$ by~\eqref{lemma:elliptic_reg},
    and so it converges to a limit in $H^1(\kappa)$ which necessarily coincides with the $L^2(\kappa)$ limit $\e^{t \mathcal L_{\rm FD}} f$.
    We conclude by applying~\eqref{lemma:elliptic_reg} and~\eqref{lemma:overdamped_langevin_decay_derivatives} to $f_n$ and taking the limit $n \to \infty$.
\end{proof}



Let us now introduce additional notation.
We define $\nabla_{\vect q}^* = \beta \nabla V(\vect q) - \nabla_{\vect q}$ and $\nabla_{\vect p}^* = \beta \vect p - \nabla_{\vect p}$,
and note that these operators are formally the $L^2(\mu)$ adjoints of $\nabla_{\vect q}$ and $\nabla_{\vect p}$.
Similarly, in one dimension we write $\partial_q^* = \beta V'(q) - \partial_q$ and $\partial_p^* = \beta p - \partial_p$.

The next lemma provides an intermediate result for proving~\cref{proposition:semigroup_meanzero_observable},
and concerns Langevin dynamics over the state space $\torus^d \times \real^d$.
This result is sharper than what would be obtained from a simple application of~\eqref{eq:decay_semigroup_general},
but not yet sufficient for obtaining optimal estimates for the bias of estimator~\eqref{eq:simple_estimator}.
To lighten notations,
we confine ourselves from now on to the one-dimensional setting in the proofs,
but these carry over \emph{mutatis mutandis} to the multi-dimensional case.
\begin{lemma}
    \label{lemma:initial_lemma}
    Assume that $f \in L^2(\mu)$ is a smooth function such that $\grad_{\vect q} f \in L^2(\mu)$ and
    \begin{equation}
        \label{eq:assumption_f}
        \forall \vect q \in \torus^d, \qquad
        \Pi_{\vect p} f(\vect q) = \int_{\real^d} f(\vect q, \vect p) \, \kappa(\d \vect p) = 0.
    \end{equation}
    Then there exist constants $C > 0$ and $\lambda > 0$ independent of $\gamma$ and $f$ such that
    \begin{align}
        \label{eq:better_decay_estimate}
        \forall \gamma \geq 1, \qquad
        \forall t \geq 0, \qquad
        \norm*{\e^{t\mathcal L} f}
        \leq C \bigl( \norm{f} + \norm{\grad_{\vect q} f} \bigr)
        \left( \e^{- \gamma t} + \gamma^{-1} \e^{-\frac{\lambda t}{\gamma}} \right).
    \end{align}
\end{lemma}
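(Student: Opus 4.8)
The plan is to separate the fast ($\vect p$-relaxation, rate $\asymp\gamma$) and slow (effective overdamped, rate $\asymp\gamma^{-1}$) scales, prove a short-time ``boundary-layer'' bound, and glue it to the long-time decay~\eqref{eq:decay_semigroup_general}. Write $v(t) = \e^{t\mathcal L}f$ and $v = v_0 + v_\perp$ with $v_0 = \Pi_{\vect p}v$ (a function of $\vect q$ alone) and $v_\perp = \Pi_{\vect p}^\perp v$; by~\eqref{eq:assumption_f} we have $v_0(0)=0$. Using $\Pi_{\vect p}\mathcal L_{\rm FD}=0$, the skew-symmetry of $\mathcal L_{\rm Ham}$ in $\lp{2}{\mu}$, and the identities $\Pi_{\vect p}\mathcal L_{\rm Ham}\Pi_{\vect p}=0$ and $\Pi_{\vect p}^\perp\mathcal L_{\rm Ham}g = \vect p\cdot\grad_{\vect q}g$ for $g=g(\vect q)$, the pair obeys
\[
    \partial_t v_0 = \Pi_{\vect p}\mathcal L_{\rm Ham}v_\perp, \qquad
    \partial_t v_\perp = \vect p\cdot\grad_{\vect q}v_0 + \Pi_{\vect p}^\perp\mathcal L_{\rm Ham}v_\perp + \gamma\mathcal L_{\rm FD}v_\perp.
\]
The Poincaré inequality for $\kappa$ (constant $\beta$) gives $\tfrac{\d}{\d t}\norm{v}^2 = -2\gamma\beta^{-1}\norm{\grad_{\vect p}v}^2 \leq -2\gamma\norm{v_\perp}^2$, hence $\norm{v(t)}\leq\norm{f}$ and $\int_0^\infty\norm{\grad_{\vect p}v}^2\,\d s \leq \tfrac{\beta}{2\gamma}\norm{f}^2$; commuting $\grad_{\vect q}$ through $\mathcal L$ (with $\commut{\grad_{\vect q}}{\mathcal L} = -(\hess V)\grad_{\vect p}$, $\norm{\hess V}[L^\infty]<\infty$), and noting $\grad_{\vect q}f\in\Pi_{\vect p}^\perp\lp{2}{\mu}$ since $\Pi_{\vect p}\grad_{\vect q}f = \grad_{\vect q}\Pi_{\vect p}f = 0$, one obtains analogously a uniform bound $\norm{\grad_{\vect q}v(t)}\leq\norm{\grad_{\vect q}f}+C\norm{f}$ and, using $H^{1,\vect q}$-hypocoercivity \cite{MR2394704}, a $\gamma$-gaining space--time bound $\int_0^\infty\norm{\grad_{\vect p}\grad_{\vect q}v}^2\,\d s \lesssim \gamma^{-1}(\norm{f}+\norm{\grad_{\vect q}f})^2$.

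The core is the boundary-layer estimate $\norm{v(t)} \leq C(\norm{f}+\norm{\grad_{\vect q}f})(\e^{-\gamma t}+\gamma^{-1})$ for $\gamma\geq1$ and $t\in[0,1]$. The dissipative block gives $\tfrac{\d}{\d t}\norm{v_\perp} \leq -\gamma\norm{v_\perp} + C\norm{\grad_{\vect q}v}$, so by Grönwall $\norm{v_\perp(t)}\leq\e^{-\gamma t}\norm{f}+C\gamma^{-1}(\norm{f}+\norm{\grad_{\vect q}f})$; the work is to bound $\norm{v_0}$. For this I would eliminate the fast variable: solving the $v_\perp$-equation for $\gamma\mathcal L_{\rm FD}v_\perp$, inverting $\gamma\mathcal L_{\rm FD}$ on $\Pi_{\vect p}^\perp\lp{2}{\mu}$ (bounded by $\gamma^{-1}$, mapping into $H^1$ by the corollary to \cref{lemma:semigroup}), substituting into $\partial_t v_0 = \Pi_{\vect p}\mathcal L_{\rm Ham}v_\perp$, using $\Pi_{\vect p}\mathcal L_{\rm Ham}(\vect p\cdot\grad_{\vect q}g) = \mathcal L_{\rm ovd}g$ with $\mathcal L_{\rm ovd} = \beta^{-1}\laplacian - \grad V\cdot\grad$, and pulling the time-independent operator out of the $\partial_s v_\perp$ integral gives
\[
    v_0(t) = \Pi_{\vect p}\mathcal L_{\rm Ham}(\gamma\mathcal L_{\rm FD})^{-1}\bigl(v_\perp(t)-f\bigr)
    + \gamma^{-1}\int_0^t \mathcal L_{\rm ovd}v_0\,\d s
    - \int_0^t \Pi_{\vect p}\mathcal L_{\rm Ham}(\gamma\mathcal L_{\rm FD})^{-1}\Pi_{\vect p}^\perp\mathcal L_{\rm Ham}v_\perp\,\d s,
\]
which exhibits $v_0$ as governed by the overdamped dynamics slowed by $\gamma^{-1}$ plus boundary and fast remainders. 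The first term is $\bigo{\gamma^{-1}}(\norm{f}+\norm{\grad_{\vect q}f})$ from the uniform $H^{1,\vect q}$-bounds, and the plan is to absorb the last two using the $\gamma^{-1}$ prefactors together with the $\gamma$-gaining space--time bounds above and the Poincaré bound $\norm{\Pi_{\vect p}^\perp w}\leq\beta^{-1/2}\norm{\grad_{\vect p}w}$, giving $\norm{v(1)}\leq C\gamma^{-1}(\norm{f}+\norm{\grad_{\vect q}f})$. Gluing is then immediate: for $t\geq1$, $\norm{v(t)} = \norm{\e^{(t-1)\mathcal L}v(1)} \leq L\e^{-\ell(t-1)/\gamma}\norm{v(1)} \leq C'\gamma^{-1}\e^{-\ell t/\gamma}(\norm{f}+\norm{\grad_{\vect q}f})$ by~\eqref{eq:decay_semigroup_general} (absorbing $\e^{\ell/\gamma}\leq\e^\ell$ as $\gamma\geq1$), while for $t\in[0,1]$ the boundary-layer bound already suffices since $\gamma^{-1}\leq\e^\ell\gamma^{-1}\e^{-\ell t/\gamma}$ there. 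Taking $\lambda=\ell$ gives~\eqref{eq:better_decay_estimate}, the $d$-dimensional case following \emph{mutatis mutandis}.

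The main obstacle is making the bound $\norm{v_0(1)}\lesssim\gamma^{-1}(\norm{f}+\norm{\grad_{\vect q}f})$ rigorous: the crude space--time estimates above actually leave a spurious $(\log\gamma)^{1/2}$ factor, and removing it requires showing that $v_\perp$ \emph{and its $\vect q$-derivatives} genuinely decay at rate $\asymp\gamma$ on $[0,1]$, so that their time integrals are $\bigo{\gamma^{-1}}$. Since the source in the differentiated equation for $\grad_{\vect q}v_\perp$ is $\vect p\cdot\grad_{\vect q}^2 v_0$, and $\grad_{\vect q}^2 v_0$ couples back to still higher $\vect q$-derivatives (the derivative loss typical of hypoelliptic smoothing), this requires a weighted estimate summing over the resulting chain of derivative orders, exploiting the smoothness of $f$ while keeping all constants independent of $\gamma\geq1$ and ensuring that only $\norm{f}$ and $\norm{\grad_{\vect q}f}$ — and no higher norms of $f$ — survive in the final bound; this is the step I would expect to be the most delicate, and the one where the regularity hypothesis $\grad_{\vect q}f\in L^2(\mu)$ is essential.
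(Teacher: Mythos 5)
Your decomposition into $v_0 = \Pi_{\vect p} v$ and $v_\perp = \Pi_{\vect p}^\perp v$ and the adiabatic elimination of the fast variable is a genuinely different route from the paper's, which instead makes a Duhamel comparison with the pure Ornstein--Uhlenbeck-in-$\vect p$ flow: the paper writes
\[
    \e^{t\mathcal L}f = \e^{t\gamma\mathcal L_{\rm FD}}f + \int_0^t \e^{(t-s)\mathcal L}\Bigl(\mathcal L_{\rm Ham}\bigl(\e^{s\gamma\mathcal L_{\rm FD}}f\bigr)\Bigr)\,\d s,
\]
bounds the first term at rate $\e^{-\gamma t}$ on $\Pi_{\vect p}^\perp L^2(\mu)$, bounds the integrand by applying the global hypocoercive decay~\eqref{eq:decay_semigroup_general} to the outer semigroup and the estimate $\norm{\mathcal L_{\rm Ham}\e^{s\gamma\mathcal L_{\rm FD}}f} \leq C\e^{-\gamma s}\min\{1,\gamma s\}^{-1/2}(\norm{f}+\norm{\grad_{\vect q} f})$ to the inner factor, and evaluates the resulting integral by splitting at $s=1/\gamma$. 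The crucial mechanism is that $\e^{t\gamma\mathcal L_{\rm FD}}$ \emph{commutes with $\partial_{\vect q}$} (it is an operator in $\vect p$ alone, with $\vect q$ frozen as a parameter), so the single $\vect q$-derivative in $\mathcal L_{\rm Ham}$ lands directly on $f$ and produces $\norm{\grad_{\vect q} f}$ and nothing higher, while the single $\vect p$-derivative is absorbed by the parabolic smoothing estimate~\eqref{lemma:elliptic_reg}, $\norm{\grad_{\vect p}\e^{s\gamma\mathcal L_{\rm FD}}} \lesssim (\gamma s)^{-1/2}$, at the cost only of an integrable singularity.

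The gap you flag at the end of your proposal is real and, as far as I can tell, fatal to the route as you've set it up. After eliminating $v_\perp$, your representation of $v_0$ contains the term $\gamma^{-1}\int_0^t \mathcal L_{\rm ovd}v_0\,\d s$, and $\mathcal L_{\rm ovd}v_0$ involves $\Delta_{\vect q} v_0$. To control this you need $\norm{\grad_{\vect q}^2 v_0}$, and propagating a bound for that through the coupled system forces you to control $\norm{\grad_{\vect q}^2 v_\perp}$, whose equation has source $\vect p\cdot\grad_{\vect q}^3 v_0$, and so on: the chain of $\vect q$-derivative orders is infinite. A weighted sum over that chain would bring in arbitrarily high $\vect q$-derivatives of $f$, which is precisely what the statement forbids (the constants may depend only on $\norm{f}$ and $\norm{\grad_{\vect q} f}$). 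Treating the Volterra equation $v_0(t) = g(t) + \gamma^{-1}\int_0^t\mathcal L_{\rm ovd}v_0$ by resolving it with the slowed overdamped semigroup, rather than by crude norm bounds, is the natural next move, but then the remainder terms $g$ coming from the boundary layer and from $(\gamma\mathcal L_{\rm FD})^{-1}\Pi_{\vect p}^\perp\mathcal L_{\rm Ham}v_\perp$ would have to be placed in a negative Sobolev scale to avoid the same derivative loss, which is essentially as much work as the paper's argument. The lesson is that the regularity loss must be paid \emph{in $\vect p$, not in $\vect q$}, and only the hypoelliptic smoothing of $\e^{s\gamma\mathcal L_{\rm FD}}$ makes that trade available; your framework keeps the $\vect q$-variable dynamically active at every stage and therefore has no way to arrest the chain.
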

The key feature of~\eqref{eq:better_decay_estimate} is that,
when $\gamma \gg 1$,
the prefactor $\gamma^{-1}$ multiplying the slow exponential $\e^{-\frac{\lambda t}{\gamma}}$ is small,
showing that $\e^{t\mathcal L} f$ is small after a time of order $\mathcal O(\gamma^{-1})$.
\begin{proof}
    We prove the result for functions $f(q, p)$ of the form
    \begin{equation}
        \label{eq:expansion}
        f(q, p) = \e^{\frac{\beta}{2}H(q,p)}\sum_{i=0}^{N} \sum_{j=1}^{N} c_{ij} \, g_i(q) h_j(p),
    \end{equation}
    where $g_i$ and $h_j$ are the basis functions defined in~\eqref{eq:definition_trigonometric_functions} and~\eqref{eq:definition_hermite_functions}.
    The space of functions of this form is dense in $\bigl\{u \in (\id - \Pi_p) L^2(\mu): \partial_q u \in L^2(\mu) \bigr\}$
    endowed with the norm $\norm{u}_{1,q} := \norm{u} + \norm{\partial_q u}$,
    so the general result follows by density.

    We expect that $\e^{t \mathcal L} f \approx \e^{t \gamma \mathcal L_{\rm FD}} f$ in an appropriate sense for $\gamma \gg 1$.
    Therefore, let us introduce $v(t) = \e^{t \mathcal L} f(q, p) - \e^{t \gamma \mathcal L_{\rm FD}} f(q, p)$
    and show that~$v(t)$ is small.
    In the expression $\e^{t \gamma \mathcal L_{\rm FD}} f(q, p)$,
    the variable $q$ should be viewed as a parameter.
    The function~$v$ satisfies the initial value problem
    \[
        \partial_t v = \mathcal L v +  \mathcal L_{\rm Ham} \bigl(\e^{t \gamma \mathcal L_{\rm FD}} f\bigr), \qquad v(0) = 0.
    \]
    By Duhamel's formula,
    \[
        v(t) = \int_{0}^{t} \e^{(t-s) \mathcal L}  \Bigl( \mathcal L_{\rm Ham} \bigl(\e^{s \gamma \mathcal L_{\rm FD}} f\bigr) \Bigr) \, \d s,
    \]
    and therefore
    \begin{equation}
        \label{eq:intermediate_decay_correlation}
        \e^{t \mathcal L} f =  \e^{t \gamma \mathcal L_{\rm FD}} f
        + \int_{0}^{t} \e^{(t-s) \mathcal L}  \Bigl( \mathcal L_{\rm Ham} \bigl(\e^{s \gamma \mathcal L_{\rm FD}} f\bigr) \Bigr) \, \d s.
    \end{equation}
    The first term is bounded as
    \begin{align}
        \notag
        \norm{\e^{t \gamma \mathcal L_{\rm FD}} f}^2
        &= \int_{\torus} \int_{\real}  \abs{\e^{t \gamma \mathcal L_{\rm FD}} f(q, p) }^2 \kappa(\d p) \,\nu(\d q)
        = \int_{\torus} \norm{\e^{t \gamma \mathcal L_{\rm FD}} f(q, \dummy) }[L^2(\kappa)]^2 \, \nu(\d q) \\
        \label{eq:bound_first_term}
        &\leq \int_{\torus} \e^{-2 \gamma t} \norm{f(q, \dummy) }[L^2(\kappa)]^2 \, \nu(\d q)
        = \e^{-2\gamma t} \norm{f}^2,
    \end{align}
    where we employed \cref{lemma:semigroup} in the second line.
    We now bound the second term on the right-hand side of~\eqref{eq:intermediate_decay_correlation}.
    The commutator relation~$\commut{\partial_p}{\partial_p^*} = \beta$ implies that
    \[
        \forall \varphi \in L^2(\kappa), \qquad
        \norm{\partial_p^* \varphi}_{L^2(\kappa)}^2 = \norm{\partial_p \varphi}_{L^2(\kappa)}^2 + \beta \norm{\varphi}_{L^2(\kappa)}^2.
    \]
    Using this equation together with the definition of $\partial_q^*$,
    we have
    \begin{align*}
        \norm{\partial_q \partial_p^* \e^{t \gamma \mathcal L_{\rm FD}} f}
        &\leq \norm{\partial_q \partial_p \e^{t \gamma \mathcal L_{\rm FD}} f} + \sqrt{\beta} \norm{\partial_q \e^{t \gamma \mathcal L_{\rm FD}} f}, \\
        \norm{\partial_q^* \partial_p \e^{t \gamma \mathcal L_{\rm FD}} f}
        &\leq \norm{\partial_q \partial_p \e^{t \gamma \mathcal L_{\rm FD}} f}
        + \beta \norm{V'}[\infty] \norm{\partial_p \e^{t \gamma \mathcal L_{\rm FD}} f}.
    \end{align*}
    From these equations, we deduce
    \begin{align}
        \notag
        \norm{\mathcal L_{\rm Ham} \e^{t \gamma \mathcal L_{\rm FD}} f}
        &= \beta^{-1} \lVert (\partial_q \partial_p^* - \partial_q^* \partial_p) \e^{t \gamma \mathcal L_{\rm FD}} f \rVert \\
        \label{eq:reasoning_action_lham}
        &\leq 2 \beta^{-1} \norm{\partial_q \partial_p \e^{t \gamma \mathcal L_{\rm FD}} f}
        + \beta^{-1/2} \norm{\partial_q \e^{t \gamma \mathcal L_{\rm FD}} f}
        + \norm*{V'}[\infty] \norm{\partial_p \e^{t \gamma \mathcal L_{\rm FD}} f}.
    \end{align}
    Since $f$ is assumed to be a finite linear combination of the form~\eqref{eq:expansion},
    we can freely change the order of the operators $\partial_q$ and $\e^{t \gamma \mathcal L_{\rm FD}}$;
    in particular, the function $\e^{t \gamma \mathcal L_{\rm FD}} f$ is differentiable in $q$.
    Letting $t_* = \min\{ \gamma^{-1}, t\}$, we have
    \begin{align*}
        \norm{\partial_q \partial_p \e^{t \gamma \mathcal L_{\rm FD}} f}
        &= \norm{\partial_p \e^{t \gamma \mathcal L_{\rm FD}} \partial_q f}
        = \norm{\partial_p \e^{(t - t_*) \gamma \mathcal L_{\rm FD}} (\e^{t_* \gamma \mathcal L_{\rm FD}} \partial_q f)} \\
        &\leq \e^{- \gamma (t - t_*)} \norm{\partial_p \e^{t_* \gamma \mathcal L_{\rm FD}} \partial_q f}
        \leq C \frac{\e^{- \gamma (t - t_*)}}{\sqrt{\gamma t_*}} \norm{\partial_q f},
    \end{align*}
    where we applied~\eqref{lemma:overdamped_langevin_decay_derivatives} in the first inequality,
    then \eqref{lemma:elliptic_reg} in the second inequality.
    Bounding the other terms in~\eqref{eq:reasoning_action_lham} using the same method
    and observing that~$\e^{\gamma t_*} \leq \e$,
    we obtain
    \begin{align}
        \label{eq:bound_action_lham}
        \norm{\mathcal L_{\rm Ham} \e^{t \gamma \mathcal L_{\rm FD}} f}
        &\leq C  \frac{\e^{-\gamma t}}{\sqrt{\min\{1, \gamma t\}}}  \bigl( \norm{f} + \norm{\partial_q f} \bigr).
    \end{align}
    Going back to~\eqref{eq:intermediate_decay_correlation} and using~\eqref{eq:decay_semigroup_general},
    we have, for $\gamma \geq 1$,
    \begin{align}
        \label{eq:bound_intermediate}
        \norm*{ \e^{t \mathcal L} f}
        &\leq  \e^{-\gamma t} \norm{f}
        + C  \bigl( \norm{f} + \norm{\partial_q f}\bigr) \int_{0}^{t} \e^{-\frac{\ell}{\gamma}(t-s)}  \, \left(\frac{\e^{-\gamma s}}{\sqrt{\min\{1, \gamma s\}}}\right) \, \d s.
    \end{align}
    It is clear that~\eqref{eq:better_decay_estimate} holds for all $t \in [0, \gamma^{-1}]$,
    provided that the prefactor on the right-hand side of that equation is sufficiently large.
    To obtain~\eqref{eq:better_decay_estimate} for times larger than $1/\gamma$,
    we bound the integral on the right-hand side of~\eqref{eq:bound_intermediate} by decomposing the interval~$[0, t]$ as $[0, \frac{1}{\gamma}] \cup [\frac{1}{\gamma}, t]$:
    \begin{align}
        \notag
        &\int_{0}^{t} \e^{-\frac{\ell}{\gamma}(t-s)}  \, \left( \frac{\e^{-\gamma s}}{\sqrt{\min\{1, \gamma s\}}} \right) \, \d s
        \leq
        \e^{-\frac{\ell}{\gamma} \left(t-\frac{1}{\gamma}\right)}
         \int_{0}^{\frac{1}{\gamma}}  \frac{1}{\sqrt{\gamma s\, }} \, \d s
         + \int_{\frac{1}{\gamma}}^{t} \e^{-\frac{\ell}{\gamma}(t-s) - \gamma s} \d s \\
         \notag
        &\qquad \leq
        \frac{2}{\gamma}\e^{-\frac{\ell}{\gamma} \left(t-\frac{1}{\gamma}\right)}
         + \int_{0}^{t} \e^{-\frac{\lambda}{\gamma}(t-s) - \gamma s} \d s \\
         \label{eq:bound_integral}
        &\qquad \leq
        \frac{2}{\gamma}\e^{-\frac{\ell}{\gamma} \left(t-\frac{1}{\gamma}\right)}
         + \int_{0}^{\infty} \e^{-\frac{\lambda}{\gamma}(t-s) - \gamma s} \d s
        \leq C  \gamma^{-1} \e^{-\frac{\lambda t}{\gamma}},
    \end{align}
    where $\lambda = \min \{\frac{1}{2}, \ell \}$.
    Defining $\lambda$ in this manner ensures that the integral on the last line is bounded for all $\gamma \in [1, \infty)$.
    The conclusion then follows from~\eqref{eq:bound_intermediate} and~\eqref{eq:bound_integral}.
\end{proof}

We are now ready to prove~\cref{proposition:semigroup_meanzero_observable}.
\begin{proof}
    [Proof of \cref{proposition:semigroup_meanzero_observable}]
    We again show the result for functions $f$ and $h$ of the form~\eqref{eq:expansion},
    noting that the general result follows by density of functions of this type.
    Recall that, by assumption, both $f$ and $h$ are in $\Pi_{p}^\perp L^2(\mu)$,
    and their derivatives with respect to $q$ are in $L^2(\mu)$.
    From~\eqref{eq:intermediate_decay_correlation}, we obtain
    \begin{align}
        \notag
        \ip{\e^{t \mathcal L} f}{h}
        &= \ip{\e^{t \gamma \mathcal L_{\rm FD}} f}{h}
        + \int_{0}^t \ip{\e^{(t-s) \mathcal L} \Bigl( \mathcal L_{\rm Ham} \bigl(\e^{s \gamma \mathcal L_{\rm FD}} f\bigr) \Bigr)}{h} \, \d s \\
        &= \ip{\e^{t \gamma \mathcal L_{\rm FD}} f}{h}
        + \int_{0}^t \ip{ \mathcal L_{\rm Ham} \bigl(\e^{s \gamma \mathcal L_{\rm FD}} f\bigr) }{\e^{(t-s) \mathcal L^*}  h} \, \d s.
    \end{align}
    The first term is bounded as in~\eqref{eq:bound_first_term}.
    In order to bound the second term,
    we denote the $L^2(\mu)$ adjoint of the generator $\mathcal L$ by~
    \(
        \mathcal L^* = - \mathcal L_{\rm Ham} + \gamma \mathcal L_{\rm FD}
    \),
    and employ the fact that \cref{lemma:initial_lemma} is valid also with $\mathcal L^*$ substituted for $\mathcal L$,
    and so
    \[
        \forall \gamma \geq 1, \quad
        \forall t \geq 0, \qquad
        \norm{\e^{(t-s)\mathcal L^*} h}
        \leq C \bigl( \norm{h} + \norm{\partial_q h} \bigr)
        \left( \e^{- \gamma (t-s)} + \gamma^{-1} \e^{-\frac{\lambda (t-s)}{\gamma}} \right).
    \]
    Combined with~\eqref{eq:bound_action_lham},
    this inequality gives
    \begin{align*}
        & \abs{ \ip{\e^{t \mathcal L} f}{h}}
        \leq \norm{f} \norm{h} \e^{-\gamma t} \\
        & \qquad + C \bigl( \norm{f} + \norm{\partial_q f} \bigr) \bigl( \norm{h} + \norm{\partial_q h} \bigr)
            \int_{0}^{t}  \left( \frac{\e^{-\gamma s}}{\sqrt{\min\{1, \gamma s\}}} \right)
         \left(  \e^{- \gamma (t-s)} + \gamma^{-1} \e^{-\frac{\lambda (t-s)}{\gamma}} \right) \, \d s.
    \end{align*}
    The first term in the integral is bounded as
    \[
        \int_{0}^{t}  \left( \frac{\e^{-\gamma s}}{\sqrt{\min\{1, \gamma s\}}} \right)
          \e^{- \gamma (t-s)}  \, \d s
        = \e^{- \gamma t} \left( \int_{0}^{\gamma^{-1}} \frac{1}{\sqrt{\gamma s}} \, \d s + t - \gamma^{-1} \right)
        = \e^{-\gamma t} \left( t + \frac{1}{\gamma} \right).
    \]
    For any $\alpha \in (0, 1)$, the right-hand side of this equation may be bounded from above by $C_{\alpha} \e^{- \alpha \gamma t}$ .
    The second term in the integral can be bounded as in~\eqref{eq:bound_integral} in the proof of \cref{lemma:initial_lemma},
    which concludes the proof of~\cref{proposition:semigroup_meanzero_observable}.
\end{proof}

\section{Proof of~\texorpdfstring{the bound \eqref{eq:refined_bound}}{the technical bound on the bias}}%
\label{sec:proof_technical_result}
In order to prove the bound,
we write
\begin{align*}
    \ip{\e^{t \mathcal L}(\vect e^\t \vect p)}{\vect e^\t \vect p)} - \ip{\e^{t \mathcal L} \mathcal L \psi_{\vect e}}{\mathcal L \psi_{\vect e}}
    = &\ip{\e^{t \mathcal L} (\vect e^\t \vect p)}{\vect e^\t \vect p + (\id - \Pi_{\vect p}) \mathcal L \psi_{\vect e}} \\
    & - \ip{\e^{t \mathcal L} \bigl(\vect e^\t \vect p + (\id - \Pi_{\vect p}) \mathcal L \psi_{\vect e}\bigr)}{(\id - \Pi_{\vect p}) \mathcal L \psi_{\vect e}} \\
    & - \ip{\e^{t \mathcal L} (\id - \Pi_{\vect p}) \mathcal L \psi_{\vect e}}{\Pi_{\vect p} \mathcal L \psi_{\vect e}}
    - \ip{\e^{t \mathcal L} \Pi_{\vect p} \mathcal L \psi_{\vect e}}{\mathcal L \psi_{\vect e}}.
\end{align*}
The last two terms are bounded using the general bound on the Langevin semigroup~\eqref{eq:decay_semigroup_general}:
\begin{align*}
    &\max \left\{
    \abs{\ip{\e^{t \mathcal L} (\id - \Pi_{\vect p}) \mathcal L \psi_{\vect e}}{\Pi_{\vect p} \mathcal L \psi_{\vect e}}},
    \abs{\ip{\e^{t \mathcal L} \Pi_{\vect p} \mathcal L \psi_{\vect e}}{\mathcal L \psi_{\vect e}}}
    \right\} \\
    &\qquad \qquad \qquad \qquad \leq L \exp \left( - \ell \min\{\gamma, \gamma^{-1}\} t\right) \norm{\Pi_{\vect p} \mathcal L \psi_{\vect e}} \norm{\mathcal L \psi_{\vect e}}.
\end{align*}
The first two terms are bounded using \cref{proposition:semigroup_meanzero_observable}:
\begin{align*}
    \abs{\ip{\e^{t \mathcal L} (\vect e^\t \vect p)}{\vect e^\t \vect p + (\id - \Pi_{\vect p}) \mathcal L \psi_{\vect e}}}
    &\leq C \beta^{-1/2} \norm*{\vect e^\t \vect p + \mathcal L \psi_{\vect e}}[1,\vect q] \, \zeta(t), \\
    \abs{\ip{\e^{t \mathcal L} \bigl(\vect e^\t \vect p + (\id - \Pi_{\vect p}) \mathcal L \psi_{\vect e}\bigr)}{(\id - \Pi_{\vect p}) \mathcal L \psi_{\vect e}}}
    &\leq C \norm*{\vect e^\t \vect p + \mathcal L \psi_{\vect e}}[1,\vect q] \, \norm*{\mathcal L \psi_{\vect e}}[1,\vect q] \, \zeta(t),
\end{align*}
where $\zeta(t) = \gamma^{-2} \e^{- a \gamma^{-1} t } + \e^{- a \gamma t}$.
Here we employed the fact that, for any $f \in \lp{2}{\mu}$ such that also $\grad_{\vect q} f \in \lp{2}{\mu}$,
it holds
\[
    \norm{\grad_{\vect q} (\id - \Pi_{\vect p}) f}
    = \norm{(\id - \Pi_{\vect p}) \grad_{\vect q} f}
    \leq \norm{\grad_{\vect q} f}.
\]
and so $\norm{(\id - \Pi_{\vect p}) f}[1,\vect q] \leq \norm{f}[1,\vect q]$.
The bound~\eqref{eq:refined_bound} is then obtained by repeating the reasoning in the proof of~\cref{lemma:bias_improved}.

\section{Proof of \texorpdfstring{equation~\eqref{eq:limit_variance}}{the limit of the variance}}%
\label{sec:proof_of_equation_variance}
By definition of the variance,
we have
\[
    \var \Bigl[ \left\lvert X^T_1 \right\rvert^2 - \left\lvert X^T_2 \right\rvert^2 \Bigr]
    = \expect \Bigl[ \left\lvert X^T_1 \right\rvert^4 + \left\lvert X^T_2 \right\rvert^4 - 2 \left\lvert X^T_1 X^T_2 \right\rvert^2 \Bigr]
    - \Bigl( \expect \Bigl[ \left\lvert X^T_1 \right\rvert^2 - \left\lvert X^T_2 \right\rvert^2 \Bigr] \Bigr)^2,
\]
and so it is sufficient to prove that,
in the limit as $T \to \infty$ and for $i \in \{1, 2\}$,
\begin{align}
    \label{eq:target_equation}
    \expect \left[ \left\lvert X^T_i \right\rvert^4 \right]
    \to \expect \left[ \left\lvert X^{\infty}_i \right\rvert^4 \right],
    \quad
    \expect \left[ \left\lvert X^T_i \right\rvert^2 \right]
    \to \expect \left[ \left\lvert X^{\infty}_i \right\rvert^2 \right],
    \quad
    \expect \left[ \left\lvert X^T_1 X^T_2 \right\rvert^2 \right]
    \to \expect \left[ \left\lvert X^T_1 X^T_2 \right\rvert^2 \right].
\end{align}
By the continuous mapping theorem,
it holds from the convergence in law of $X^{T}$ to $X^{\infty}$ that
\[
     \lvert X^T_i \rvert^4
    \xrightarrow[T \to \infty]{\rm Law}
    \left\lvert X^{\infty}_i \right\rvert^4 ,
    \qquad
     \lvert X^T_i \rvert^2
    \xrightarrow[T \to \infty]{\rm Law}
    \left\lvert X^{\infty}_i \right\rvert^2 ,
    \qquad
     \left\lvert X^T_1 X^T_2 \right\rvert^2
    \xrightarrow[T \to \infty]{\rm Law}
    \left\lvert X^T_1 X^T_2 \right\rvert^2 .
\]
In order to prove~\eqref{eq:target_equation},
it is therefore sufficient~\cite[Theorem 3.5]{MR1700749}
to show that the random variables $\lvert X^T_i \rvert^4$, $\lvert X^T_i \rvert^2$ and $\lvert X^T_1 X^T_2 \rvert^2$
are uniformly integrable over $T \in [1, \infty)$.
We check this condition carefully for $\lvert X^T_2 \rvert^4$,
noting that the same reasoning can be applied to the other terms.
A sufficient condition for uniform integrability is that there is $\varepsilon > 0$ such that
\[
    \sup_{T \in [1, \infty)} \expect \left[ \left\lvert X^T_2 \right\rvert^{4+\varepsilon} \right] < \infty.
\]
By definition,
it holds that
\[
    X^T_2 =
    \frac{\psi_{\vect e}(\vect q_0, \vect p_0) - \psi_{\vect e}(\vect q_T, \vect p_T)}{\sqrt{2T}}
    + \sqrt{\frac{\gamma}{\beta T}} \int_{0}^{T} \grad_{\vect p} \psi_{\vect e} (\vect q_t, \vect p_t) \cdot \d \vect w_t,
\]
and so, using manipulations similar to those in the proof of~\cref{proposition:variance},
including a moment inequality for It\^o integrals~\cite[Theorem 7.1]{MR2380366} and the standing assumption of a stationary initial condition,
we have, for $T \geq 0$,
\[
    \expect \left[ \lvert X^T_2 \rvert^{4+ \varepsilon} \right]
    \leq
    C \left( \norm{\psi_{\vect e}}[L^{4+\varepsilon}(\mu)]^{4 + \varepsilon} +
        \norm{\grad_{\vect p} \psi_{\vect e}}[L^{4+\varepsilon}(\mu)]^{4+\varepsilon}
        \right) < \infty.
\]



\printbibliography
\end{document}